\icmltitlerunning{Adaptively Robust Resettable Streaming}
\begin{document}

\twocolumn[
  \icmltitle{Adaptively Robust Resettable Streaming}

  \icmlsetsymbol{equal}{*}
  \begin{icmlauthorlist}
    \icmlauthor{Edith Cohen}{a,d}
    \icmlauthor{Elena Gribelyuk}{b}
    \icmlauthor{Jelani Nelson}{c,a}
    \icmlauthor{Uri Stemmer}{d,a}
  \end{icmlauthorlist}

  \icmlaffiliation{a}{Google Research}
  \icmlaffiliation{b}{Princeton University}
  \icmlaffiliation{c}{UC Berkeley}
  \icmlaffiliation{d}{Tel Aviv University}

 \icmlcorrespondingauthor{Edith Cohen}{edith@cohenwang.com}
 \icmlcorrespondingauthor{Elena Gribelyuk}{eg5539@princeton.edu}
\icmlcorrespondingauthor{Jelani Nelson}{minilek@alum.mit.edu}
\icmlcorrespondingauthor{Uri Stemmer}{u@uri.co.il}
  \icmlkeywords{Machine Learning, ICML}
  \vskip 0.3in
]

\printAffiliationsAndNotice{}

\ignore{
\title{Adaptively Robust Resettable Streaming}

\author{
  Edith Cohen \\
  Google Research and Tel Aviv University \\
  \texttt{edith@cohenwang.com} \\
  \and
    Elena Gribelyuk \\
  Princeton University \\
  \texttt{eg5539@princeton.edu} \\
  \and
    Jelani Nelson \\
  UC Berkeley and Google Research \\
  \texttt{minilek@alum.mit.edu} \\
  \and
  Uri Stemmer \\
  Tel Aviv University and Google Research\\
  \texttt{u@uri.co.il} \\
}

\date{\today}
}

\begin{abstract}
We study algorithms in the \emph{resettable streaming model}, where the value of each key can either be increased or reset to zero. The model is suitable for applications such as active resource monitoring with support for deletions and machine unlearning. We show that all existing sketches for this model are vulnerable to adaptive adversarial attacks that apply even when the sketch size is polynomial in the length of the stream.

To overcome these vulnerabilities, we present the first adaptively robust sketches for resettable streams that maintain \emph{polylogarithmic} space complexity in the stream length. Our framework supports (sub) linear statistics including $L_p$ moments for $p\in[0,1]$ (in particular, \emph{Cardinality} and \emph{Sum}) and  \emph{Bernstein statistics}. We bypass strong impossibility results known for linear and composable sketches by designing dedicated streaming sketches robustified via Differential Privacy. Unlike standard robustification techniques, which provide limited benefits in this setting and still require polynomial space in the stream length, we leverage the \emph{Binary Tree Mechanism} for continual observation to protect the sketch's internal randomness. This enables accurate \emph{prefix-max} error guarantees with polylogarithmic space.
\end{abstract}

\section{Introduction}

Large-scale data systems process datasets consisting of key-value
pairs $\{(x, v_x)\}$, where the number of distinct keys may be extremely
large.  Since it is often infeasible to store the whole dataset in memory, systems instead maintain a compact \emph{sketch} (or \emph{synopsis}) that supports updates to individual values and efficient real-time estimation of statistics. 
In particular, the data stream consists of a sequence of operations which evolve the values of underlying keys. \emph{Streaming sketches} are designed to process the stream of updates in a single pass, and for a chosen function $f(\cdot)$, return an accurate estimate for a statistic of the form 
\[
F := \sum_x f(v_x).
\]
Data stream workloads originated with the development of operating systems \citet{Knuth1998,Vitter1985}.
In \emph{unaggregated} data streams, each update \emph{modifies} the value of a single key. The study of space-efficient sketching dates back to 
\citet{Morris1978} for approximate counting, \citet{MisraGries1982} for the heavy hitters problem, \citet{FlajoletMartin85}
for distinct counting, \citet{GibbonsMatias1998} for weighted sampling, and the \citet{AlonMatiasSzegedy1999} 
framework for estimating frequency moments.

Variants of the streaming model differ in (1) the domain of possible values of keys and (2) the way in which values evolve under updates $(x,\Delta)$. The \emph{insertion-only} (\emph{incremental}) model \citep{MisraGries1982,FlajoletMartin85,GibbonsMatias1998} supports non-negative updates of the form $v_{x} \gets v_{x} + \Delta$ for $\Delta\geq 0$.
On the other hand, the \emph{general turnstile} model \citep{AlonMatiasSzegedy1999} allows signed underlying values $v_x$ and signed updates $v_x \gets v_x + \Delta$ for any $\Delta \in \mathbb{Z}$. Alternatively, the \emph{strict turnstile} model allows signed updates $\Delta \in \mathbb{Z}$ under the additional \emph{source-side} enforcement that $v_x \geq 0$ is satisfied at all times. In contrast, the ReLU model~\citep{GemullaLehnerHaas2006,GemullaLehnerHaas2007,CohenCormodeDuffield2012} enforces the non-negativity of the values $v_x$ under signed updates by interpreting each update $(x, \Delta)$ as $v_x \gets \max\{0,\, v_x + \Delta\}$. 

In this work, we focus on the \emph{resettable streaming model}, where values $(v_x)$ remain nonnegative and may evolve under two types of operations:
\begin{itemize}
    \item \textsc{Inc}$(x,\Delta)$: increment $v_{x} \gets v_{x} + \Delta$ (where $\Delta \ge 0$).
    \item \textsc{Reset}$(P)$: reset $v_{x} \gets 0$ for all keys satisfying a predicate $P$.\footnote{This subsumes single-key resets, denoted by $\textsc{Reset}(x)$, where the predicate $P$ selects only the key $x$.}
\end{itemize}
The resettable streaming model generalizes the insertion-only model and can be viewed as a specific instance of the ReLU model (where $\textsc{Reset}(x)$ is implemented via a sufficiently large negative update $(x, -\infty)$).
A fundamental property of these models is that updates are \emph{non-commutative}, unlike in the turnstile and incremental models. This makes standard composable sketches, which are designed to support distributed processing, unnecessarily strong for our purposes.  Consequently, the design goal is streaming sketches.

Semantically, reset operations augment insertion-only streaming with a natural deletion mechanism that \emph{enforces} that only previously-inserted data can be removed, without requiring access to values. This sketch-side enforcement is crucial in settings where the original data is unavailable or impractical to access, as in streaming systems which process network traffic or large-scale training data, and in adversarial environments, where unconstrained deletion requests can be misrepresented and exploited~\citet{huang2025unlearn}.
The model captures the dynamics of real-time systems where values represent strictly non-negative quantities, such as active entities or resource usage. 
Moreover, resets naturally support compliance with modern privacy regulations by modeling explicit removal requests, including the \emph{Right to Erasure} (Article 17 of the GDPR)~\citep{gdpr2016} and the \emph{Right to Deletion} (Cal. Civ. Code §1798.105, CCPA)~\citep{ccpa2018}.

\paragraph{Frequency statistics.}
We focus here on (sub)linear statistics. 
Two fundamental statistics are the \emph{cardinality} ($\ell_0$ norm) of the dataset, where $f(v) := \mathbf{1}\{v\not=0\}$, and the $\ell_1$ norm of the underlying dataset, where
$f(v) = |v|$. For nonnegative values, the $\ell_1$ norm is simply the sum $\sum_x v_x$. We also consider the broad class of \emph{soft concave sublinear} statistics~\citep{Cohen2016HyperExtended,CohenGeri2019,pettie2025universal}, where $f$ is a \emph{Bernstein function}~\citep{Bernstein1929,Schilling2012}. Bernstein functions 
include the low frequency moments ($f(v)=v^p, p \in (0, 1)$), logarithmic growth ($\ln(1+v)$), and soft capping ($T(1 - e^{-v/T})$).
Extensive prior work on sketching specific Bernstein statistics included low \emph{Frequency Moments} ($F_p$ for $p \le 1$) \citep{Indyk2006, Kane2010}, \emph{Shannon Entropy} \citep{Chakrabarti2006}, and capping functions \citep{Cohen2018CapStatistics}. 
Bernstein statistics are used in applications to ``dampen''  the contribution of high-frequency keys in a controlled manner. Examples include frequency capping in digital advertising to maximize reach~\citep{Aggarwal2011}, the weighting of co-occurrence counts in word embeddings like GloVe~\citep{Pennington2014}, and the estimation of Limited Expected Value (LEV) for reinsurance pricing~\citep{Klugman2019}. Dampening high contributions is also essential for privacy preservation and is used in the gradient clipping mechanism of DP-SGD~\citep{Abadi2016} and the private estimation of fractional frequency moments~\citep{Chen2022}.


Cardinality is the simplest statistic in the resettable model. In particular, a resettable sketch for cardinality estimation can be obtained from a resettable sketch for any frequency-based statistic defined by a function $f$ with $f(0)=0$ and $f(1)>0$. This stands in contrast to the incremental model, where summation is the easiest statistic to sketch via a single counter, and to the turnstile model, where no such general reduction is known.
In the resettable cardinality problem, the dataset is a dynamic set of keys, the operations are insertions and deletions of keys, and the algorithm’s task is to estimate the cardinality.  The reduction 
is given by:
{\small
\begin{equation}
\label{dell1tocard:eq}
\begin{cases}
\textsc{Insert}(x):\; \textsc{Reset}(x)\ \text{ followed by }\ \textsc{Inc}(x,1),\\[3pt]
\textsc{Delete}(x):\; \textsc{Reset}(x),
\end{cases}
\end{equation}
}
Since $v_x\in\{0,1\}$ after the reduction, the statistic $\sum_x f(v_x)/f(1)$ is precisely the cardinality of the dataset. This implies that any barriers established for resettable cardinality broadly apply to any statistic.

\paragraph{Prefix-max Accuracy Benchmark.}

While a common accuracy guarantee for incremental or turnstile streaming algorithms is the $(1 + \eps)$-relative error guarantee, it is known that any resettable streaming algorithm for cardinality estimation must use $\Omega(n)$ bits of space if required to produce a $(1+\eps)$-approximation \cite{PavanChakrabortyVinodchandranMeel2024}. Indeed, this follows by a standard reduction from the set-disjointness communication problem~\citep{Razborov1992,KushilevitzNisan1996}, by showing that any resettable cardinality sketch that outputs a $(1+\eps)$-approximation would distinguish whether two sets intersect, requiring linear space even for randomized sketches. 

As a result, resettable streaming algorithms must settle for the weaker \emph{prefix-max} error guarantee. Concretely, let $F_t$ denote the true statistic after update $t$. We seek bounds of the form
\begin{equation}
|\widehat F_t - F_t| \le \varepsilon \max_{t'\le t} F_{t'},\label{prefix-max:eq}
\end{equation}
where $\varepsilon$ decreases with sketch size. 
\footnote{Two parties have sets $A,B \subset U$. The first party applies a resettable sketch to its set $A$  and sends the sketch to the second party. The second party then resets all elements in $U\setminus B$, obtaining a sketch of $A\cap B$. A relative error approximation to the resulting cardinality would indicate whether $A\cap B$ is empty and thus solve set-disjointness on sets $A, B$.} 
Moreover, by the reduction in  \cref{dell1tocard:eq}, this means that we cannot hope to design resettable streaming algorithms satisfying the relative-error guarantee for any function $f$. Fortunately, the prefix-max guarantee is sufficient for many applications: when the current value of the statistics is always at least a constant fraction of the prefix-max, and in particular, when the statistic fluctuates within a range $[f_\ell,f_h]$ with $f_h/f_\ell=O(1)$, the prefix-max error effectively translates to a relative-error approximation. Additionally (see \cref{sec:related-concise}), resettable with prefix-max guarantees appear fundamentally more technically challenging than relative error in incremental streams. 

\subsection{The Adaptive Streaming Model}
\label{sec:adaptive}
A major shortcoming of classical streaming models is that correctness analyses strongly rely on the assumption that the input stream is fixed in advance and independent of the internal randomness of the algorithm. Many practical settings require a more general model, where previous algorithm outputs may influence the future queries given to the algorithm. For instance, future database queries often depend on previous answers, and recommendation systems create feedback loops which rely on user interactions. Motivated by this gap, adaptive settings have been extensively studied in multiple areas throughout the last few decades, including statistical queries \citep{Freedman:1983,Ioannidis:2005,FreedmanParadox:2009,HardtUllman:FOCS2014,DworkFHPRR:STOC2015}, sketching and streaming algorithms
\citep{MironovNS:STOC2008,HardtW:STOC2013,BenEliezerJWY21,HassidimKMMS20,WoodruffZ21,AttiasCSS21,BEO21,DBLP:conf/icml/CohenLNSSS22,TrickingHashingTrick:arxiv2022,AhmadianCohen:ICML2024}, dynamic graph algorithms \citep{ShiloachEven:JACM1981,AhnGM:SODA2012,gawrychowskiMW:ICALP2020,GutenbergPW:SODA2020,Wajc:STOC2020,BKMNSS22}, and adversarial robustness in machine learning \citep{szegedy2013intriguing,goodfellow2014explaining,athalye2018synthesizing,papernot2017practical}.

In the adaptive (or adversarial) streaming model, stream updates may depend on the entire sequence of previous updates and estimates released by the algorithm. Formally, in the resettable streaming setting, at each time $t \in [T]$, the streaming algorithm receives an adaptive update \textsc{Inc}$(x, \Delta)$ or \textsc{Reset}$(x)$. We say that an algorithm is \textit{adaptively robust} for a given function $f: \mathbb{Z} \rightarrow \mathbb{R}$ if the algorithm produces accurate estimates of $F_t = \sum_{x} f(v_x^{(t)})$ up to additive $\eps \cdot \max_{t' \leq t} F_{t'}$ error at each step $t \in [T]$, even when updates are chosen adaptively based on the transcript of previous updates and responses. 

In this work, we ask the following central question:
\begin{question} \label{question:main}
    Can we design adaptively robust sketches for resettable streaming that output a prefix-max approximation to fundamental statistics, such as cardinality, sum, and Bernstein statistics, using $\poly\left(\frac{1}{\eps}, \log T, \log \frac{1}{\delta} \right)$ bits of space on a stream of length $T$?
\end{question}


\subsection{Main Contribution}
We answer \cref{question:main} in the affirmative and design adversarially robust streaming algorithms for a large class of fundamental statistics, such as \emph{cardinality}, \emph{sum}, and the class of \emph{Bernstein} (soft concave sublinear) statistics. 

\begin{theorem}[Robust resettable cardinality, sum, and Bernstein; informal]
\label{thm:informal}
For any $\eps,\delta \in (0,1)$ and a resettable adaptive stream with $T_{\mathrm{Inc}}$ increments,\footnote{The bounds allow for an arbitrary number of resets.} there exist sketches for cardinality, sum, and Bernstein statistics of size $k=O(\mathrm{poly}(\eps^{-1}, \log(T_{\mathrm{Inc}}/\delta)))$ bits. 
With probability at least $1-\delta$, the sketch maintains an estimate $\hat{F}_t$ satisfying the following error bound at all time steps $t$:
\[
    |F_t - \hat{F}_t| \;\le\; \eps \max_{t' \le t} F_{t'}.
\]
\end{theorem}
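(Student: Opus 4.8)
The plan is to separate the argument into a non-robust \emph{base sketch} that achieves the prefix-max guarantee in $\polylog$ space against an \emph{oblivious} stream, and a generic robustification layer that protects the base sketch's internal randomness via the Binary Tree Mechanism for continual observation, so that oblivious accuracy transfers to adaptive streams at only a $\polylog(T_{\mathrm{Inc}})$ cost. First I would reduce the whole family of statistics to the single primitives of \emph{resettable cardinality} and \emph{resettable sum}. For a Bernstein function $f$ with $f(0)=0$, the L\'evy--Khintchine representation gives $f(v)=bv+\int_0^\infty(1-e^{-\lambda v})\,\mu(d\lambda)$, hence $\sum_x f(v_x)=b\sum_x v_x+\int_0^\infty\big(\sum_x(1-e^{-\lambda v_x})\big)\mu(d\lambda)$. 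Sampling $\lambda$ from the normalized measure $\mu/\|\mu\|$ (with the standard truncation near $0$ when $\|\mu\|=\infty$, e.g.\ for $f(v)=v^p$), and observing that $\sum_x(1-e^{-\lambda v_x})$ is the expected number of keys whose independent $\Exp(\lambda)$-clock has fired by ``time'' $v_x$ --- a quantity realizable as the cardinality of a thinned substream in which \textsc{Inc} only adds keys and \textsc{Reset} removes them --- reduces Bernstein estimation to $O(\poly(\eps^{-1},\log(T_{\mathrm{Inc}}/\delta)))$ parallel resettable-cardinality instances plus one resettable-sum instance. So it suffices to build robust $\polylog$-space sketches for resettable cardinality and sum.

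\paragraph{Base sketch.} For cardinality I would use a \emph{leveled subsampling} sketch: a hash assigns each key an independent geometric level (retained at level $j$ with probability $2^{-j}$), and for each of the $O(\log T)$ levels the sketch keeps a capped count, with a small overflow buffer, of the distinct retained keys at that level, updated in the obvious way under \textsc{Inc} and \textsc{Reset}; the estimate rescales the count of the lowest unsaturated level. Against an oblivious stream, a Chernoff bound over the hash yields $|\widehat F_t-F_t|\le\eps\max_{t'\le t}F_{t'}$ simultaneously for all $t$, the prefix-max appearing because a level only becomes reliable once the running cardinality has ever been large enough to populate it. For sum the analogue is a resettable priority/bottom-$k$ sampling estimator. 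Establishing this prefix-max bound in $\polylog$ space is already a real point: unlike in incremental streams, a reset can wipe out exactly the keys a level relied on, so one has to keep slightly oversized buffers and certify a level before trusting it, arguing that transient under-population right after a reset is absorbed by the (now large) prefix-max.

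\paragraph{Robustification via continual observation.} The key step is to make the base sketch's randomness \emph{decomposable}: organize the coins as $O(T_{\mathrm{Inc}})$ independent blocks, the block for increment $i$ being the lazily-sampled hash level and tie-breaking bits of the touched key. Then the running estimate can be written as a signed sum over a stream of $O(T_{\mathrm{Inc}})$ events --- a ``$+$'' event when increment $i$ arrives and a ``$-$'' event when that key is next reset --- in which changing one block alters the \emph{value} of only those two events, and only by a single-key amount bounded by the largest rescaled level-weight, $\poly(\eps^{-1},\log T_{\mathrm{Inc}})$ (a reset of a whole predicate is one event whose value is large but whose \emph{sensitivity} to one block is still single-key-sized). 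I would therefore run this event stream through the Binary Tree Mechanism, releasing all running estimates with $\polylog(T_{\mathrm{Inc}})$ Laplace injections and $O(\log T_{\mathrm{Inc}})$ stored partial sums; group privacy over the (at most two) affected events makes the whole output transcript $(\eps_{\mathrm{DP}},\delta_{\mathrm{DP}})$-differentially private with respect to every randomness block, for parameters we can push down polynomially at a $\polylog$ cost, possibly after amplifying with a few independent copies aggregated privately. Finally I would invoke a transfer theorem in the style of adaptive data analysis: since the transcript is DP in the internal randomness and the base sketch is accurate with high probability over that randomness on every \emph{fixed} stream, the generalization property of DP forces accuracy to persist on \emph{adaptively} chosen streams; a net over the prefix-max-normalized output range together with a union bound over the $T$ steps fixes the parameters and yields the claimed failure probability $\delta$.

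\paragraph{Main obstacle.} The hard part is the interface between the base sketch and the continual-observation mechanism. One must engineer the sketch so that (a) its estimate really is a low-sensitivity running sum indexed by the $T_{\mathrm{Inc}}$ increments, (b) each randomness block perturbs only $\polylog$-many events of the stream fed to the Binary Tree Mechanism, so group privacy stays cheap, and (c) the accumulated tree noise, which scales as $\polylog(T_{\mathrm{Inc}})$ times the per-increment sensitivity, is dominated by $\eps\max_{t'\le t}F_{t'}$ at every $t$. Requirement (c) is exactly where the weaker prefix-max benchmark is not merely unavoidable but is the quantity against which the noise calibration just barely closes, and it is the reason standard sketch-switching robustification --- whose cost scales with the unbounded ``flip number'' of an oscillating resettable stream --- must be replaced by continual observation over the bounded number $T_{\mathrm{Inc}}$ of increments.
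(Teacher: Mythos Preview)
Your high-level architecture matches the paper exactly: a sampling base sketch, the Binary Tree Mechanism applied to protect its internal randomness, and a DP-generalization transfer so that oblivious accuracy survives adaptivity. The Bernstein reduction to sum plus a polylog-many cardinality substreams via the L\'evy--Khintchine representation is also essentially what the paper does. So the plan is right; the gap is in the interface you yourself flag as the hard part, and it is more serious than your paragraph suggests.

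\textbf{Wrong base sketch.} ``A hash assigns each key an independent geometric level'' is precisely the persistent per-key randomness that the paper's lower-bound discussion rules out: with sticky randomness a single block governs a key across \emph{all} its insertion intervals, so your claim that one block touches only two events fails --- it touches $\Theta(\text{\#epochs of }x)$ events, and group privacy blows up. The paper's base sketch for cardinality is instead \emph{Bernoulli sampling with a fresh coin per insertion interval} (one bit per unit $(x,j)$); re-insertion \emph{resamples}. This is what makes the ``one block, two events'' accounting actually true.

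\textbf{Wrong stream into the tree.} You propose to feed the running \emph{estimate} (or its event decomposition) to the tree mechanism and assert the per-block contribution is $\poly(\eps^{-1},\log T)$. But in a leveled scheme the contribution of one key at level $j$ to the estimate is $2^j\approx \max_{t'}N_{t'}/k$, not polylog, and the relevant level drifts as the prefix-max grows, so the sensitivity parameter $L$ you would hand the tree is not fixed. The paper sidesteps this entirely: it feeds the tree the \emph{sample size} increments $u_t=|S_t|-|S_{t-1}|\in\{-1,0,+1\}$, so each unit $(x,j)$ has total $\ell_1$ contribution at most $2$ and the tree runs with constant $L$. The estimate is recovered by post-processing, $\hat N_t=\tilde S_t/p_t$; the adjustable rate $p_t$ is chosen from the noisy $\tilde S_t$ itself and is therefore DP by post-processing. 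For sum the same idea needs an extra decomposition (deterministic part plus a bounded ``protected'' part of width $B=\tau\log(T/\delta)$) so that per-unit $\ell_1$ contribution to the protected stream is again $\le 2$. This ``protect the count, scale afterwards'' trick is the missing technical idea that makes (b) and (c) in your obstacle paragraph close.
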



\subsection{Review of Prior Work} \label{sec:related-concise}


We start with a concise review of the literature and gained insights.
For additional details, see \cref{sec:related}.

The resettable streaming model (as well as the more general ReLU model) were introduced by
\citep{GemullaLehnerHaas2006,GemullaLehnerHaas2007} for $\ell_1$ estimation for unit updates and was later extended by \citet{CohenCormodeDuffield2012} for weighted updates. The resettable model was (re)-introduced in  \citet{PavanChakrabortyVinodchandranMeel2024}. \citet{LinNguyenSwartworthWoodruff2025} presented algorithms for frequency moments estimation $f(v)=|V|^p$ for $p\in (0,2]$.\footnote{The model was introduced with different names. We chose to use \emph{resettable} streaming to emphasize that the model captures
a wider class of applications than unlearning, find the term better  than  \emph{rewrites} \citep{CohenCormodeDuffield2012}, and less confusing than \emph{deletions} \citep{GemullaLehnerHaas2006,GemullaLehnerHaas2007,CohenCormodeDuffield2012} (that are also used to refer to turnstile negative updates).
} These works design streaming algorithms which satisfy the prefix-max error guarantee and use $\poly\left(\frac{1}{\eps}, \log T, \log \frac{1}{\delta} \right)$ bits of memory. The core component of each algorithm is a 
\emph{sampling sketch}, which maintains a weighted sample of the keys. Sampling sketches are naturally suitable for the resettable streaming model, as a key reset can be implemented by simply deleting that specific key from the sample if it was included, and if it was not included then no action is needed.
However, we observe that all prior sampling-based algorithms are vulnerable to adaptive inputs: Intuitively, if the output of the sampling-based algorithm changes only when a key is added to the sample, the adversary can then strategically reset that key, thus creating a bias in the estimation. Crucially, this attack exploits the fact that sample membership is revealed through changes in the current estimate. 

The core difficulty in designing adaptively robust streaming algorithms is that the algorithm's internal randomness needs to be periodically ``refreshed'' or concealed from the adaptive adversary. \emph{Wrapper methods} are black-box approaches that take as input a generation process for a sketch in the non-adaptive setting and output a robust sketch which has correctness guarantees in the adaptive setting. 
For instance, the wrapper methods of \citep{HassidimKMMS20,Blanc:STOC2023} 
instantiate and maintain $k$ independent non-adaptive sketches and leverage techniques from differential privacy to privately release \emph{aggregated estimates} which conceal each sketch's randomness, allowing for $T=O(k^2)$ adaptive interactions.   
The \emph{sketch switching} wrapper \citet{BenEliezerJWY21} maintains $k$ independent sketches and uses a new sketch each time that the estimate changes significantly (and thus exposes its randomness). However, in contrast to the incremental and \emph{bounded deletions} model, where the number of needed changes is logarithmic in $T$, in the resettable streaming model, the statistic can oscillate between $\min_t F_t$ and $\max_t F_t$ $\tilde{\Theta}(T)$ times. Therefore,these known wrapper approaches do not facilitate our goal of designing an algorithm which uses polylogarithmic space in $T$.

On the other hand, adaptive attacks on cardinality sketches in the turnstile model and query-model by \citet{AhmadianCohen2024,CohenNelsonSarlosSinghalStemmer2024} transfer to resettable streaming (with prefix-max error guarantees). Combining this observation with \cref{dell1tocard:eq}, any robust resettable streaming sketch that is union-composable or linear requires a sketch size (or sketch dimension) that is polynomial in $T$. To circumvent this lower bound, our robust polylogarithmic-size sketches are not composable. 

\subsection{Overview and Road map}

Our robust sketches for cardinality and sum are obtained by a careful redesign and robustification of the \emph{non-robust} resettable
sampling sketches of
\citep{GemullaLehnerHaas2006,GemullaLehnerHaas2007,CohenCormodeDuffield2012,PavanChakrabortyVinodchandranMeel2024}, which in turn are based on incremental streaming sampling sketches \citet{GibbonsMatias1998,EstanVarghese2002}.
We leverage techniques from differentially privacy including its generalization property  \citep{DworkFeldmanHardtPitassiReingoldRoth2015,BassilyNSSSU:sicomp2021,FeldmanS17} and enhanced applications of 
\emph{continual reporting} using the
\emph{tree mechanism} \citep{ChanShiSong2011,DworkNaorPitassiRothblumYekhanin2010,DworkRoth2014}.
The tree mechanism, in particular, has not previously been applied to
robustify sampling sketches. In contrast to standard DP-based robustification methods  
such as \citet{HassidimKMMS20} (based on 
\citep{DworkFHPRR:STOC2015,BassilyNSSSU:sicomp2021}), 
which achieve only a quadratic improvement in robustness, we obtain an \emph{exponential} improvement.

Our results tightly dodge the barriers observed in \cref{sec:related-concise}: We circumvent lower bounds on composable and linear sketching by building on dedicated streaming sampling sketches. 
Our sketches are for (sub) linear statistics, noting that there are no known sampling sketches for super-linear statistics without persistent randomness (which is known to create adaptive vulnerability). 

We instructively start with the cardinality estimation problem, as it provides a simple setting to introduce our robustification techniques. 
In \Cref{sec:cardinality} we review the standard cardinality with deletions sketch \citep{GemullaLehnerHaas2006,GemullaLehnerHaas2007,PavanChakrabortyVinodchandranMeel2024}.\footnote{The sketch is equivalent to transforming the resettable cardinality stream to a resettable sum stream via  \cref{dell1tocard:eq} and applying the ReLU sum sketch of \citet{GemullaLehnerHaas2006,GemullaLehnerHaas2007}. An insertions-only version was presented in 
\citep{ChakrabortyVinodchandranMeel2022}.} where 
the sampling rate $p$ is fixed throughout the execution. In non-adaptive streams, the sketch maintains an i.i.d.\ Bernoulli sample of the distinct keys. We show that even this basic variant is not robust to adaptive inputs. 
We then introduce a robust estimator to the fixed-rate sketch by composing the algorithm with the Binary Tree Mechanism on the stream of reported estimates (See analysis in \cref{sec:robustfixedcardinalityanalysis}). Finally, we provide an overview of our adjustable-rate robust sketch and estimator that adaptively decreases the sampling rate over time to ensure that the sample size (and therefore the sketch size) remains bounded and attains the prefix-max \cref{prefix-max:eq} error guarantee (See analysis in \cref{sec:adjustable-robust-cardinality}). 

In \cref{sec:resettablesumpreview} (analysis in \cref{resettable:sum:sec}) we present our robust resettable sum sketch. We review the standard fixed-rate resettable sum sketch by \citet{CohenCormodeDuffield2012} (extends the unit-updates sketch by \citet{GemullaLehnerHaas2006}) and show it is not robust with the standard estimator. We present a robust estimator for the fixed-rate sketch and a robust adjustable rate sketch and estimator. While our algorithms are simple to implement, the analysis required overcoming some technical hurdles including identifying appropriate randomness units to control sensitivity and facilitate generalization and a modified analysis of the tree mechanism.


In \cref{sec:resettableBernstein-review} we preview our design of robust resettable sketches for Bernstein (soft concave sublinear) statistics (full details and analysis are in  \cref{sec:concavesublinear}). We build on the composable sketching framework of \citet{Cohen2016HyperExtended} and ``reduce'' the resettable sketching of any Bernstein statistics to resettable sketching of sum and cardinality statistics.

\section{Robust Resettable $\ell_0$ estimation} \label{sec:cardinality}

The \emph{Cardinality with Deletions Streaming Problem} tracks an evolving set of active keys $A_t \subseteq U$ under a stream of updates
$E = (\mathsf{op}_1,\ldots,\mathsf{op}_T)$, where each update is either
$\mathsf{Insert}(x)$, which adds $x$ to $A_t$, or $\mathsf{Delete}(x)$, which removes
$x$ from $A_t$.
Our goal is to maintain a compact sketch and publish an estimate $\hat N_t$ of $N_t = |A_t|$. 
In non-adaptive settings, we seek guarantees that hold when the sequence of updates does not depend on the published estimates.
In adaptive settings, we seek guarantees that hold when each update $\mathsf{op}_t$ may depend on the history of reported estimates $\{\hat N_1, \dots, \hat N_{t-1}\}$.
Cardinality with deletions is a
resettable streaming problem with the $\ell_0$ norm $f(v) := \mathbf{1}\{v>0\}$, through the equivalence $\mathsf{Insert}(x)\equiv\mathsf{Inc}(x,\Delta)$ ($\Delta>0$) and $\mathsf{Delete}(x)\equiv\mathsf{Reset}(x)$.

In \cref{sec:non-robust} review the standard sketch \citep{GemullaLehnerHaas2006,PavanChakrabortyVinodchandranMeel2024} and in \cref{sec:vulnerability} we show that the standard  sketch with the standard estimator is vulnerable to adaptive updates. In \cref{sec:robustcardinality} we overview our robust sketch designs, with analysis deferred to the appendix.

\subsection{Standard Cardinality with Deletions Sketch} \label{sec:non-robust}

We review the standard Bernoulli sampling sketch \citep{GemullaLehnerHaas2006,ChakrabortyVinodchandranMeel2022,PavanChakrabortyVinodchandranMeel2024}, presented in \cref{alg:basic-card-fixedp}.

\begin{algorithm2e}[t]
\caption{Bernoulli Cardinality with Deletions (fixed $p$) \citep{GemullaLehnerHaas2006,GemullaLehnerHaas2007,PavanChakrabortyVinodchandranMeel2024} }
\label{alg:basic-card-fixedp}
\DontPrintSemicolon
\KwIn{Stream of $\mathsf{Insert}(x)$ or $\mathsf{Delete}(x)$; sampling prob $p$}
\textbf{State:} Set $S$ of sampled active keys (initially $\emptyset$).\;
\For{$t=1,2,\ldots$}{
  Receive operation on key $x$\;
  \uIf{$\mathsf{Insert}(x)$}{
     $S \gets S\setminus\{x\}$ \tcp*[f]{refresh sample status}\;
     Sample $b\sim\mathrm{Bernoulli}(p)$\;
     \lIf{$b=1$}{$S \gets S\cup\{x\}$}
  }
  \uElseIf{$\mathsf{Delete}(x)$}{
     $S \gets S\setminus\{x\}$ \tcp*[f]{remove if present}
  }
  $\hat N_t \gets |S|/p$; \Return{$\hat N_t$} \tcp*[f]{cardinality estimate of active keys}
}
\end{algorithm2e}

In the non-adaptive setting, $S_t$ is a Bernoulli sample where each active key in $A_t$ is included with probability $p$. Therefore, 
\begin{equation} \label{simpleest:eq}
      \hat N_t = \frac{|S_t|}{p}
\end{equation}
is an unbiased estimator of $N_t$. 

When a key $x$ is inserted at time $t$, the algorithm first removes $x$ from $S_t$
if it is present and resamples $x$ into the sample independently with probability
$p$.  This preserves the property that the active key $x$ is included in
$S_t$ independently with probability exactly $p$.
When a key $x$ is deleted, the
algorithm simply removes $x$ from the sample if it is present, and does nothing
otherwise.  

\begin{lemma} [Guarantees for non-adaptive streams]
For any desired accuracy $\varepsilon \in (0, 1)$ and confidence $\delta \in (0, 1)$,
if we apply \cref{alg:basic-card-fixedp} with
\[
p \;=\; \frac{3}{\varepsilon^{2} N_{\max}}\,\log\!\left(\frac{2T}{\delta}\right),
\]
on a non-adaptive stream with $\max_{t\leq T} N_t \leq N_{\max}$, then with probability at least $1-\delta$,
\begin{align*}
    \forall\, t\in[T],\qquad &
|\hat{N}_t - N_t| \;\le\; \varepsilon N_{\max} \\
& |S_t| \;\le\; 3(1+\eps)\eps^{-2} \log(2T/\delta) .
\end{align*}
\end{lemma}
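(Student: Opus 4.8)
The lemma makes two claims that hold simultaneously with probability $1-\delta$ on a non-adaptive stream: a prefix-max-style accuracy bound $|\hat N_t - N_t| \le \eps N_{\max}$ for all $t$, and a uniform bound on the sample size $|S_t|$. Both are straightforward concentration arguments once we recall the key structural fact established just above the statement: for a \emph{fixed} (non-adaptive) stream, at every time $t$ the set $S_t$ is distributed exactly as an independent Bernoulli($p$) subsample of the active set $A_t$, so $|S_t| \sim \Binomial(N_t, p)$ with $\E[|S_t|] = p N_t$ and hence $\E[\hat N_t] = N_t$.

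\textbf{Step 1: accuracy at a single time step.} Fix $t$. Write $|S_t| = \sum_{x \in A_t} \ind{x \in S_t}$ as a sum of $N_t$ independent Bernoulli($p$) indicators. We want $\bigl| |S_t|/p - N_t \bigr| \le \eps N_{\max}$, i.e. $\bigl| |S_t| - pN_t \bigr| \le \eps p N_{\max}$. Since $N_t \le N_{\max}$ and $p N_{\max} = 3\eps^{-2}\log(2T/\delta)$, a multiplicative Chernoff bound on $|S_t|$ around its mean $p N_t$ gives deviation probability at most $2\exp\bigl(-\tfrac{(\eps p N_{\max})^2}{3 p N_t}\bigr) \le 2\exp\bigl(-\tfrac{\eps^2 p N_{\max}^2}{3 N_t}\bigr) \le 2\exp\bigl(-\tfrac{\eps^2 p N_{\max}}{3}\bigr) = 2\exp(-\log(2T/\delta)) = \delta/T$, using $N_t \le N_{\max}$ in the last inequality and the choice of $p$. (One must be slightly careful with the additive-vs-multiplicative form of Chernoff: the cleanest route is to apply the two-sided multiplicative bound $\Pr[|X - \mu| \ge a] \le 2\exp(-a^2/(3\mu))$ valid for $a \le \mu$, with $X = |S_t|$, $\mu = pN_t$, $a = \eps p N_{\max}$; when $\eps p N_{\max} > pN_t$ the deviation event is even less likely and the bound still holds after a short argument, or one invokes the subgaussian tail $\Pr[X-\mu \ge a]\le \exp(-a^2/(2\mu+2a/3))$ to cover all regimes uniformly.)

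\textbf{Step 2: union bound over time, and the sample-size bound.} Apply Step 1 at each $t \in [T]$ and union bound: with probability $\ge 1 - T\cdot(\delta/T) = 1-\delta$, the accuracy bound $|\hat N_t - N_t| \le \eps N_{\max}$ holds for all $t$ simultaneously. On this same event, we also get the sample-size bound almost for free: $|S_t| = p\hat N_t \le p(N_t + \eps N_{\max}) \le p N_{\max}(1+\eps) = 3(1+\eps)\eps^{-2}\log(2T/\delta)$, which is exactly the claimed bound. So a single union bound controls both quantities at once and we do not even need a separate tail argument for $|S_t|$.

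\textbf{Main obstacle.} There is no deep obstacle — this is a routine Chernoff-plus-union-bound argument — so the only thing requiring care is bookkeeping: matching the Chernoff exponent to the stated constant $3$ in the definition of $p$ (which fixes which form of the Chernoff bound to cite), and making sure the argument genuinely only uses the \emph{non-adaptive} structural guarantee that $S_t$ is a clean Bernoulli sample of $A_t$ at each fixed $t$ (this is precisely what fails in the adaptive setting and motivates the rest of the paper). One subtlety worth a sentence in the writeup: the bound is stated in terms of $N_{\max}$, an a priori upper bound on $\max_t N_t$ that the algorithm is given in order to set $p$; the proof uses $N_t \le N_{\max}$ in the exponent bound and nowhere needs $\max_t N_t$ to actually equal $N_{\max}$.
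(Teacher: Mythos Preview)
Your proposal is correct and is exactly the standard Chernoff-plus-union-bound argument one expects here; the paper does not spell out a proof of this lemma (it is stated as a known baseline result following from the Bernoulli-sample invariant), so there is nothing to compare against beyond confirming that your route is the intended one. Your observation that the sample-size bound follows immediately on the same event as the accuracy bound, via $|S_t| = p\hat N_t \le p(1+\eps)N_{\max}$, is clean and avoids a redundant tail argument.
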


\begin{remark}[Prefix-max error guarantee via rate adjustments for the standard sketch]
\label{rem:standardcardinalityadjust}
A standard refinement of the Bernoulli sketch adaptively decreases the
sampling rate so as to keep the sample size bounded
\citep{GemullaLehnerHaas2007,CohenDKLT07PODS, ChakrabortyVinodchandranMeel2022}. 
Whenever $|S_t|$ exceeds a fixed threshold $k$, the algorithm halves the
sampling rate $p \leftarrow p/2$ and independently retains each key in
$S_t$ with probability $1/2$. This preserves the invariant that every
distinct active key is present in the sample with probability $p$.
This scheme yields a prefix-max error
bound: for a
choice of $k=O(\eps^{-2} \log(T/\delta))$, one obtains
\[
  \forall\,t\in[T],\qquad
  |\hat N_t - N_t| \;\le\; \varepsilon \cdot \max_{t'\le t} N_{t'}.
\]
\end{remark}

\subsection{Vulnerability to Adaptive Attacks} \label{sec:vulnerability}

While \cref{alg:basic-card-fixedp} is unbiased for non-adaptive streams, we observe that it is catastrophically vulnerable to adaptive adversaries. 

\begin{claim}[Failure of standard estimator for \cref{alg:basic-card-fixedp}]
\label{prop:naive-failure}
There exist adaptive adversaries such that:
\begin{enumerate}
    \item \textbf{(Insertion-Only)} In an insertion-only stream with $N$ distinct keys, the estimator $\hat N_T$ underestimates the true count by a factor of $p$ (i.e., $\mathbb{E}[\hat N_T] = p \cdot N$).
    \item \textbf{(Resettable)} In a resettable stream with $T$ updates where the true cardinality is $N_T = \Theta(T)$, the estimator reports $\hat N_T = 0$ with probability $1$.
\end{enumerate}
\end{claim}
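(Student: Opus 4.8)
The plan is to exhibit, for each part, an explicit adaptive adversary that exploits the fact that \cref{alg:basic-card-fixedp} reveals its sample membership through changes in the published estimate $\hat N_t = |S_t|/p$. The key observation driving both attacks is: whenever $\hat N_t$ strictly increases relative to $\hat N_{t-1}$ after an $\mathsf{Insert}(x)$ operation, the adversary learns with certainty that $x \in S_t$; conversely, if $\hat N_t$ does not increase, then $x \notin S_t$. Since the adversary chooses future operations based on the transcript of estimates, it can selectively delete exactly the keys it has identified as sampled.

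For Part 1 (insertion-only), I would have the adversary proceed in rounds: for each $i = 1, \dots, N$, it picks a fresh key $x_i$, issues $\mathsf{Insert}(x_i)$, and inspects whether $\hat N$ increased. If it did (so $x_i$ entered $S$), the adversary is ``stuck'' with that key in the sample in the insertion-only model — it cannot delete. So instead, a cleaner adversary for the insertion-only case re-inserts: recall that $\mathsf{Insert}(x_i)$ first removes $x_i$ from $S$ and then re-samples. Thus the adversary repeatedly issues $\mathsf{Insert}(x_i)$ until the estimate does \emph{not} increase — i.e., until the fresh Bernoulli$(p)$ draw comes up $0$ — and only then moves on to $x_{i+1}$. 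After all $N$ keys are processed this way, every $x_i$ has been \emph{forced} out of the sample, so $S_T = \emptyset$ and $\hat N_T = 0$, while $N_T = N$. To get the stated $\mathbb{E}[\hat N_T] = pN$ phrasing instead, one uses the symmetric variant: for each $i$, issue a single $\mathsf{Insert}(x_i)$, observe whether it entered the sample, and if it did, the adversary later inserts a sequence of \emph{new} dummy keys $y$ paired so as to re-randomize; more simply, the stated expectation matches the trivial non-adaptive baseline, so I would present the first bullet via the observation that an adversary doing nothing adaptive already yields $\mathbb{E}[\hat N_T]=pN$ under the estimator $|S_T|/p$ only if we (mis)normalize — the cleanest is to show the adversary can drive the estimate to exactly $pN$ by controlling which of the $N$ keys survive in $S$. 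I would state the attack that deterministically forces $|S_T| = 0$ and note it as the strongest form, reconciling with the stated expectation. (The main subtlety here is getting the precise constant/normalization in bullet 1 to match; the $\hat N_T = 0$ statement is the robust one.)

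For Part 2 (resettable), the attack is sharper and uses deletions. The adversary builds up the cardinality to $\Theta(T)$ while recording exactly which keys are in $S$: it inserts fresh keys $x_1, x_2, \dots$, and each time an insertion causes $\hat N$ to jump, it immediately issues $\mathsf{Delete}(x_i)$, removing that key from both $A_t$ and $S_t$, then continues with a fresh key. Concretely, allocate the $T$ operations so that roughly $T/2$ of them are the ``probe'' inserts and the bookkeeping deletes, and the remaining $T/2$ are inserts of keys that were observed \emph{not} to enter the sample, which therefore accumulate in $A_t$ without ever touching $S_t$. Since each probe insert enters the sample independently with probability $p$, in expectation only a $p$-fraction of probes trigger a delete, so with high probability we still insert $\Theta(T)$ surviving keys; choosing the counts appropriately (and using a Chernoff bound on the number of sampled probes) gives $N_T = \Theta(T)$ deterministically in the number of final operations while $S_T = \emptyset$, hence $\hat N_T = 0$ with probability $1$.

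The only real obstacle is purely bookkeeping: ensuring that the adversary's strategy uses at most $T$ operations total while still guaranteeing $N_T = \Theta(T)$ — this requires a high-probability bound on how many probe insertions land in the sample (at most $O(pT + \log(1/\delta))$ by Chernoff), after which the ``probability $1$'' claim for $\hat N_T = 0$ follows because the adversary's deletion of every sampled key is deterministic given the transcript. I would also remark that this attack immediately extends, via the reduction in \cref{dell1tocard:eq}, to show the analogous failure for the standard estimator of any frequency statistic $f$ with $f(0) = 0$, $f(1) > 0$, motivating the robustified designs in the following sections.
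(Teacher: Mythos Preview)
Your Part~2 is essentially the paper's attack, just overcomplicated. The paper simply inserts a fresh key at each step and, whenever the estimate jumps, immediately issues a delete for that key; every key that ever enters $S$ is then removed, so $S_T=\emptyset$ and $\hat N_T=0$ deterministically, while the surviving active set (keys whose Bernoulli flip failed) has size $\approx (1-p)T=\Theta(T)$. No Chernoff bound, no $T/2$ partition, and no ``probe'' bookkeeping is needed.

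Your Part~1 has the right mechanism---re-insertion re-randomizes the sample bit---but you never arrive at the attack that actually yields $\mathbb{E}[\hat N_T]=pN$, and your attempts to reconcile are confused (the non-adaptive baseline gives $\mathbb{E}[\hat N_T]=N$, not $pN$; inserting ``dummy keys $y$'' does nothing to $x_i$'s membership). The paper's attack is: for each key $x$, insert it once; if the estimate jumps (i.e., $x$ was sampled), insert $x$ \emph{exactly once more} and then move on regardless of the outcome. A key ends up in $S_T$ only if it was sampled on the first insert (probability~$p$) \emph{and} again on the second (probability~$p$), so $\Pr[x\in S_T]=p^2$, hence $\mathbb{E}[|S_T|]=Np^2$ and $\mathbb{E}[\hat N_T]=Np^2/p=pN$. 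Your repeated-re-insertion variant is in fact stronger (it forces $\hat N_T=0$) and is a valid attack, but it proves a different statement than the one claimed, and you do not close that gap; the single-extra-insert version is the missing idea that gives the stated expectation on the nose.
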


\subsubsection{The Re-Insertion Attack (Insertion-Only)}
Even without deletions, an adversary can bias the sample probability from $p$ to $p^2$ simply by re-inserting keys that are successfully sampled: For each new key $x$, the adversary issues $\textsc{Insert}(x)$. If the estimate $\hat N_t$ increases (implying $x \in S$), the adversary immediately issues $\textsc{Insert}(x)$ again. It then never touches $x$ again.
\paragraph{Analysis.} Recall that \cref{alg:basic-card-fixedp} handles a re-insertion by removing the existing copy and resampling.
If $x$ is not sampled initially (prob $1-p$), it remains out.
If $x$ is sampled initially (prob $p$), it is re-inserted and effectively re-sampled with probability $p$.
The final inclusion probability is $\Pr[x \in S] = p \cdot p = p^2$. Since the estimator scales the sample size by $1/p$, the expected estimate is $(N \cdot p^2) / p = p N$. For small $p$ (e.g., $1\%$), the error is massive.

\subsubsection{Sample-and-Delete Attack (resettable)}
With access to deletions, the adversary can completely empty the sample while keeping the active set large, with the following strategy. For each step $t$:
\begin{enumerate}
    \item The adversary inserts a new key $x_t$.
    \item If $\hat N_t > \hat N_{t-1}$ (implying $x_t \in S_t$), the adversary immediately issues $\textsc{Delete}(x_t)$.
    \item If $\hat N_t$ does not change (implying $x_t \notin S_t$), the adversary leaves $x_t$ active.
\end{enumerate}
\paragraph{Analysis.}
Any key that enters the sample is immediately deleted. Any key that fails the sampling coin flip remains active.
Consequently, at time $T$, the sample is empty ($S_T = \emptyset \implies \hat N_T = 0$), while the true set $A_T$ contains all keys that were rejected by the filter. The true cardinality is $N_T \approx (1-p)T$, resulting in infinite relative error.
Such bias could also arise in non-malicious settings due to feedback. For example, a load balancing system might attempt to reassign load when the estimated load increases. 

\paragraph{Intuition.}
The vulnerability arises because observing the estimate $\hat N_t$ gives the adversary information \emph{better than the prior $p$} about whether a given active key is present in the sample. We robustify by ensuring a posterior that is close to the prior.

\begin{algorithm2e}[t]
\caption{\small Robust Adjustable-Rate Cardinality}
\label{alg:robust-card-adjustable}
\small
\DontPrintSemicolon
\small
\KwIn{
  Stream of $\mathsf{Insert}(x)$ or $\mathsf{Delete}(x)$ operations;
  initial sampling probability $p_0 \in (0,1]$; 
  sample-size budget $k$; error margin $\alpha$; confidence parameter $\delta$; 
  DP tree mechanism $\mathsf{Tree}$ (\cref{sec:tree-mechanism}) with parameter $(T_{\mathrm{tree}} = T + O(\log T), L=2, \varepsilon_{\mathrm{dp}})$
}

\textbf{State:}
Sample $S$ of active keys (initially $\emptyset$)\;
Sampling probability $p \gets p_0$\;
Current sample size $s \gets 0$; previous size $s_{\mathrm{prev}} \gets 0$\;
Internal state of $\mathsf{Tree}$ initialized to all zeros\;

\BlankLine

\For{$t=1,2,\ldots, T$}{
  receive operation on key $x$\;

  \uIf(\tcp*[h]{Update sample under current $p$}){$\mathsf{Insert}(x)$}{
    \lIf{$x \in S$}{
       $S \gets S \setminus \{x\}$; $s \gets s-1$ 
    }
    draw $b \sim \mathrm{Bernoulli}(p)$\;
    \lIf{$b=1$}{
       $S \gets S \cup \{x\}$; $s \gets s+1$
    }
  }
  \uElseIf{$\mathsf{Delete}(x)$}{
    \lIf{$x \in S$}{
       $S \gets S \setminus \{x\}$; $s \gets s-1$
    }
  }

  \tcp{Feed sample-size updates to tree mechanism}
  $u_t \gets s - s_{\mathrm{prev}}$;
  $s_{\mathrm{prev}} \gets s$\;
  $\tilde S_t \gets \mathsf{Tree}.\textsc{UpdateAndReport}(u_t)$\;

  \tcp{Possibly adjust sampling rate}
  \While(\tcp*[h]{halve rate and subsample}){$\tilde S_t > k - \alpha$}{
     $p \gets p/2$;
     $s_{\mathrm{old}} \gets s$;
     $S' \gets \emptyset$\;
     \ForEach{$x \in S$}{
        draw $b' \sim \mathrm{Bernoulli}(1/2)$\;
        \lIf{$b'=1$}{
           $S' \gets S' \cup \{x\}$
        }
     }
     $S \gets S'$;
     $s \gets |S|$\;

     \tcp{report sample-size change to tree mech.}
     $u^{\mathrm{adjust}} \gets s - s_{\mathrm{old}}$\;
     $s_{\mathrm{prev}} \gets s$\;
     $\tilde S_t \gets \mathsf{Tree}.\textsc{UpdateAndReport}(u^{\mathrm{adjust}})$
  }

  \Return{$\hat N_t \gets \tilde S_t / p$}\tcp{Cardinality estimate}
}
\end{algorithm2e}

\subsection{Robustification}\label{sec:robustcardinality}

\begin{theorem}[Robust Cardinality with Deletions]
\label{thm:robust-cardinality}
For any accuracy parameter $\varepsilon\in(0,1)$ and confidence 
$\delta\in(0,1)$, there exists a streaming sketch for cardinality with deletions which,
under an adaptive update sequence of length $T$, maintains estimates 
$(\hat N_t)_{t\le T}$ using space\footnote{The space bound applies with $T_{\mathrm{Inc}}$, which is the number of $\textsc{Inc}$ operations. This by observing that in the robustness analysis we can ignore resets performed on keys with $v_x=0$ and that the total number of applicable reset operations is bounded by $T_{\mathrm{Inc}}$ (hence effectively for our analysis $T\leq 2 T_{\mathrm{Inc}}$).}
\[
  O\!\left(\frac{1}{\varepsilon^2}\,\log^{3/2}T \cdot \log\frac{T}{\delta}\right).
\]
With probability at least $1-\delta$, the estimator satisfies the uniform \emph{prefix-max} error guarantee
\begin{equation} \label{prefixmaxcardinality:eq}
  \forall t\in[T],\qquad
  |\hat N_t - N_t|
  \;\le\;
  \varepsilon\,\max_{t'\le t} N_{t'}.
\end{equation}
In particular, since $\forall t\in[T], \max_{t'\le t} N_{t'} \leq N_{\max}$, where $N_{\max}=\max_{t\le T} N_t$, the estimator 
satisfies the uniform \emph{overall-max} error guarantee:
\begin{equation} \label{maxerrorcardinality:eq}
  \forall t\in[T],\qquad
  |\hat N_t - N_t| \;\le\; \varepsilon\,N_{\max}.
\end{equation}
\end{theorem}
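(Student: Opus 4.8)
The plan is to prove \cref{thm:robust-cardinality} by analyzing \cref{alg:robust-card-adjustable} in three layers: (i) a non-adaptive accuracy/size analysis of the underlying Bernoulli sampling sketch with rate halving, (ii) a reduction of adaptive robustness to the differential privacy of the interface exposed to the adversary, which here is exactly the stream of tree-mechanism outputs $\tilde S_t$, and (iii) a union bound combining the DP generalization guarantee with the non-adaptive tail bounds. First I would fix the ``randomness units'': the only internal randomness that the adversary can probe is the membership coins $b$ (at insert time) and the subsampling coins $b'$ (at rate-halving time). The key structural observation is that the reported value $\hat N_t = \tilde S_t/p$ is a post-processing of the sequence of \emph{sample-size increments} $u_t$ fed into $\mathsf{Tree}$, and each $u_t$ has sensitivity $O(1)$ with respect to flipping a single membership coin (an insert changes $s$ by at most $1$; a halving step changes $s$ by a shift that, per retained key, is governed by one coin). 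Thus I would invoke the accuracy guarantee of the Binary Tree Mechanism (\cref{sec:tree-mechanism}) with parameters $(T_{\mathrm{tree}} = T + O(\log T),\, L = 2,\, \varepsilon_{\mathrm{dp}})$ to conclude that, with probability $1-\delta/2$, $|\tilde S_t - |S_t|| \le \beta$ for all $t$, where $\beta = O(\varepsilon_{\mathrm{dp}}^{-1}\log^{3/2}T \cdot \log(T/\delta))$, and simultaneously that the $\varepsilon_{\mathrm{dp}}$-DP property holds for the full transcript of reports.

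Next I would handle adaptive robustness via the standard DP-to-robustness transfer (the generalization property of DP, \citep{DworkFeldmanHardtPitassiReingoldRoth2015,BassilyNSSSU:sicomp2021}): because the entire visible transcript $(\tilde S_1,\dots,\tilde S_T)$ is an $\varepsilon_{\mathrm{dp}}$-DP function of the sketch's coin tosses, conditioning on the transcript keeps the \emph{posterior} inclusion probability of any active key close to the prior $p$ — this is precisely the ``posterior close to prior'' intuition flagged after \cref{prop:naive-failure}. Concretely, I would argue that for an adversary making $T$ adaptive rounds, with $\varepsilon_{\mathrm{dp}}$ chosen as a small constant (or $\Theta(1/\log T)$ as the bookkeeping demands), the conditional distribution of $|S_t|$ given the adversary's view is, up to $e^{\pm\varepsilon_{\mathrm{dp}}}$ multiplicative and $\delta$-ish additive slack, the same as in a non-adaptive run; hence the non-adaptive concentration $\big||S_t| - p N_t\big| \le \varepsilon p\, \max_{t'\le t}N_{t'}$ (a Chernoff bound over the at most $N_t$ active Bernoulli$(p_t)$ indicators, together with the at-most-$\log T$ halving events each preserving the per-key inclusion invariant) survives into the adaptive setting. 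Dividing by $p$ turns the $\beta$ additive tree error into $\beta/p$; the rate-adjustment rule keeps $|S_t| \le k$ with $k = O(\varepsilon^{-2}\log(T/\delta))$, so $p \ge \Omega(k / \max_{t'\le t} N_{t'})$ whenever the max exceeds the threshold, which is exactly what converts $\beta/p$ into $\varepsilon \max_{t'\le t} N_{t'}$ and gives the prefix-max bound \cref{prefixmaxcardinality:eq}; the overall-max bound \cref{maxerrorcardinality:eq} is then immediate since $\max_{t'\le t}N_{t'}\le N_{\max}$.

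For the space bound I would just note that the state consists of $S$ (at most $k+O(1)$ keys, each $O(\log T)$ bits after the implicit relabeling), the scalar $p$, and the tree mechanism's internal state, which is $O(\log T_{\mathrm{tree}})$ partial sums each of magnitude $O(T)$, i.e.\ $O(\log^2 T)$ bits; optimizing $\varepsilon_{\mathrm{dp}}$ against the accuracy demand yields the stated $O(\varepsilon^{-2}\log^{3/2}T\cdot\log(T/\delta))$, and the footnote's remark lets us replace $T$ by $T_{\mathrm{Inc}}$ since resets on already-zero keys are inert for the analysis.

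I expect the main obstacle to be step (ii): making the DP-to-adaptive-robustness reduction go through \emph{uniformly over all $T$ time steps} with only a constant (or $1/\polylog$) privacy budget. The subtlety is that the rate-halving steps are data-dependent — whether and when $p$ gets halved depends on $\tilde S_t$, hence on prior coins — so the ``database'' of coins whose privacy we must protect is itself of adaptively-determined structure, and a naive composition would pay $\Theta(T)$ in the privacy budget (the same barrier that sinks the sketch-switching wrapper). The resolution I would pursue is the one hinted at in the overview: treat the single monotone stream of \emph{cumulative} sample-size changes as one continual-observation query answered \emph{once} by the tree mechanism (so composition is over the $O(\log T)$ tree nodes, not over $T$ rounds), and argue the halving decisions are themselves post-processing of that single private output — thereby paying only $\polylog(T)$ rather than $\poly(T)$. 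Getting the sensitivity accounting right across a halving step (where many coins move $s$ simultaneously, but each retained key's fate is one independent coin, so the ``per-unit'' sensitivity is still $O(1)$) is the delicate calculation, and is presumably where the ``modified analysis of the tree mechanism'' mentioned in the overview is needed.
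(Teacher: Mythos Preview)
Your proposal is essentially correct and follows the same three-layer structure the paper uses: tree-mechanism accuracy for $|\tilde S_t - |S_t||$, DP generalization for $||S_t| - p_tN_t|$, and a rate-to-prefix-max lemma to convert additive tree noise into the prefix-max bound. Two refinements from the paper are worth noting.

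First, the paper does \emph{not} treat the insertion coins $b$ and the halving coins $b'$ as separate randomness units. Instead it reformulates the sampling via a single seed $R_u\sim U[0,1]$ per unit $u=(x,j)$ (key $x$, $j$th insertion interval), with the rule ``$x\in S_t$ iff $R_u\le p_t$.'' This collapses all of a unit's contributions to the increment stream into one $+1$ (at entry) and at most one $-1$ (at either a halving that drops it or the deletion), so the unit-level $\ell_1$ sensitivity is exactly $L=2$ with no case analysis across halvings. The multi-rate robust sampling bound (\cref{lem:robust-sampling-multirate}) is then obtained by applying DP generalization to the queries $h_{t,p}(S)=\sum_u\mathbf 1\{u\text{ active at }t\}\mathbf 1\{R_u\le p\}$ and taking a union bound over all $(t,p)\in[T]\times\mathcal P$; since $|\mathcal P|=O(\log T)$ this costs essentially nothing. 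Your version with separate $b,b'$ coins can be made to work, but the seeds are cleaner and sidestep the ``adaptively-determined database'' worry you raise: the database of seeds is fixed up front, and adaptivity lives entirely in which indicator $\mathbf 1\{u\text{ active at }t\}$ the adversary switches on, which is precisely what the DP generalization theorem is built to absorb.

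Second, your claim ``$p\ge\Omega(k/\max_{t'\le t}N_{t'})$'' is the right target but needs its own argument (the paper's \cref{lem:pt-phase} and \cref{cor:pt-alltimes}): you must show both that a halving is triggered only when $p_tN_t=\Omega(k)$ (lower bound on $p$) and that the prefix-max cannot grow by more than a constant factor within a phase without forcing a halving (so $p_t=\Theta(k/M_t)$, not just $\ge$). The paper uses the tree-accuracy event and the robust-sampling event jointly at each phase boundary to pin down both directions. Also note your intermediate $k=O(\varepsilon^{-2}\log(T/\delta))$ is missing the $\log^{3/2}T$ factor from the tree noise; the paper sets $k=\Theta(\varepsilon^{-2}\log^{3/2}T\cdot\log(T/\delta))$ and $\alpha=\Theta(\varepsilon_{\mathrm{dp}}^{-1}\log^{3/2}T\cdot\log(T/\delta))$ so that $\alpha/p_t$ and the generalization term both land at $O(\varepsilon M_t)$.
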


\paragraph{Fixed-rate robust design.}
For instructive reasons, we first present a construction and analysis of a basic fixed-rate robust sketch 
that assumes prior bound on $N_{\max}$ and gives the overall-max error guarantees \cref{maxerrorcardinality:eq}.
This construction composes the
standard fixed-rate sketch and estimator from \cref{alg:basic-card-fixedp} 
with the Binary Tree Mechanism for continual release
\citep{ChanShiSong2011,DworkNaorPitassiRothblumYekhanin2010,DworkRoth2014} (as described and analyzed in \cref{sec:tree-mechanism} with sensitivity parameter $L=2$ and privacy parameter $\eps_{\textrm{dp}}$). 

The mechanism input is a stream of updates to the \textit{sample size} and its output is a sequence of noisy approximate prefix sums, which are estimates $(\tilde S_t)$ of the sample size. Our algorithm returns estimates given by $\hat N_t := \frac{\tilde S_t}{p}$. Intuitively, this DP layer facilitates robustness by protecting the information on the presence of each key in the sample so that the posterior knowledge of the adversary remains close to the prior $p$. The analysis of the construction is in \Cref{sec:robustfixedcardinalityanalysis}. 

\paragraph{Adjustable-rate robust sketch design.} Our adjustable-rate algorithm is described in \cref{alg:robust-card-adjustable}, with parameter settings and full analysis given in \cref{sec:adjustable-robust-cardinality}. We establish that it
achieves prefix-max guarantees \cref{prefixmaxcardinality:eq} and does not require a prior bound on $N_{\max}$. In contrast to the non-robust adjustable design \cref{rem:standardcardinalityadjust}, which uses exact sample size to trigger rate adjustments $p\to p/2$, the robust sketch uses interactively the noisy output of the tree mechanism: When the estimate returned approaches the sample size bound $k$ we update $p \gets p/2$, subsample $S_t$ accordingly, and issue respective sample-size update to the tree mechanism.

\section{Robust Resettable $\ell_1$ Estimation} \label{sec:resettablesumpreview}
Next, we consider the problem of estimating the sum $F_t = \sum_{x} v_x^{(t)}$ under adaptive increment and reset operations. 
\begin{theorem}[Robust Resettable Sum Sketch]\label{thm:robust:resettable:sum}
Given accuracy parameter $\eps > 0$ and failure probability $\delta > 0$, there exists an adversarially robust sketch for the sum $F = \sum_{x} v_x$ under a stream of $T$ increments and reset operations that, with probability at least $1-\delta$, outputs an estimate $\hat F_t$ such that 
\begin{align}\label{eq:prefix:max:sum}
    |\hat F_t - F_t | \leq \eps \cdot \max_{t' \leq t} F_t, \ \  \textrm{ for all } t \in [T]
\end{align}
 The algorithm uses $O\left(\frac{1}{\eps^2} \log^{11/2} (T) \cdot \log^2 \left(\frac{1}{\delta} \right) \right)$ bits of memory. 

\end{theorem}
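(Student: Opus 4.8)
The plan is to mirror the two-stage development used for cardinality: first build a fixed-rate robust sum sketch assuming a prior bound on the prefix-max value $F_{\max} = \max_{t\le T} F_t$, then remove this assumption via an adjustable-rate scheme driven interactively by the noisy output of the Binary Tree Mechanism. For the fixed-rate core, I would start from the standard resettable sum sketch of \citet{CohenCormodeDuffield2012}, which maintains a sample of (key, weight) pairs at rate $p$ and estimates $\hat F_t = W_t / p$, where $W_t$ is the total sampled weight. The first task is to identify the correct \emph{unit of randomness} whose privacy we must protect: unlike cardinality, where each key contributes a single Bernoulli coin, here a key's contribution to the sketch is a \emph{weighted} sampling outcome, and the adversary learns about it through changes in $\hat F_t$. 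I would decompose each $\textsc{Inc}(x,\Delta)$ into unit increments (or, more carefully, into geometrically-scaled blocks to keep the number of randomness units $\poly\log T$) so that each unit corresponds to an independent coin; then the ``dataset'' for the DP analysis is the vector of these coins, the sample-weight process has bounded sensitivity $L = O(1)$ per step (after the decomposition), and the tree mechanism on the stream of sample-weight \emph{changes} releases noisy prefix sums $\tilde W_t$ with additive error $O(\eps_{\mathrm{dp}}^{-1}\log^{3/2}T\cdot\sqrt{\log(1/\delta)})$ at every step, with high probability.

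The second task is the robustness reduction: I would invoke the generalization property of differential privacy \citep{DworkFeldmanHardtPitassiReingoldRoth2015,BassilyNSSSU:sicomp2021} exactly as in the cardinality case, arguing that because the transcript $(\tilde W_1,\dots,\tilde W_t)$ is an $(\eps_{\mathrm{dp}},\delta_{\mathrm{dp}})$-DP function of the coin vector, the adversary's posterior on whether any particular still-unprocessed key/unit will land in the sample stays within a multiplicative $e^{\eps_{\mathrm{dp}}}$ factor of the prior $p$. This keeps the sampled weight $W_t$ an (approximately) unbiased estimator of $p\cdot F_t$ with the right concentration, so scaling by $1/p$ and adding the tree-mechanism error gives $|\hat F_t - F_t| \le \eps F_{\max}$ at all steps simultaneously, after choosing $p = \Theta(\eps^{-2} F_{\max}^{-1}\polylog(T/\delta))$ and $\eps_{\mathrm{dp}} = \Theta(\eps)$. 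One subtlety to handle carefully: resets on keys currently in the sample cause downward jumps in $W_t$ that are correlated with the coins, so the sensitivity argument must be stated in terms of the coin vector with resets treated as adversary-controlled but coin-independent operations — this is the place where the ``appropriate randomness units'' bookkeeping really matters, and I expect it to be the main technical obstacle.

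For the adjustable-rate version (and the prefix-max guarantee \eqref{eq:prefix:max:sum} without knowing $F_{\max}$), I would halve $p$ whenever the \emph{noisy} sampled weight $\tilde W_t$ approaches the budget $k$, subsample each retained key independently with probability $1/2$, and feed the resulting weight change back into the tree mechanism — exactly the structure of \cref{alg:robust-card-adjustable}. The analysis proceeds by a union bound over the $O(\log T)$ possible rate values: while $p$ is fixed, the fixed-rate analysis applies with $F_{\max}$ replaced by the current prefix-max; the rate-adjustment threshold, together with the tree-mechanism error bound, guarantees both that $p$ never drops so low that accuracy is lost relative to $\max_{t'\le t}F_{t'}$ and that the sample size stays $O(k)$ w.h.p., so the sketch uses $O(k\cdot\polylog T)$ bits. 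Tracking the extra $\log$ factors — one power of $\log T$ from the tree mechanism's per-node noise, another $\log^{1/2}T$ from its depth, a $\log T$ from the number of rate levels, and further factors from the weighted/geometric decomposition and from boosting the DP failure probability — is what yields the stated $O(\eps^{-2}\log^{11/2}T\cdot\log^2(1/\delta))$ bound; I would not grind through these constants in the proof body but would state the per-ingredient contributions and let them multiply.
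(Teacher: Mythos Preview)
Your overall architecture---DP via the tree mechanism plus generalization, then multiple scales for prefix-max---matches the paper's, but the core technical step is missing, and without it the argument does not go through.

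\textbf{The sensitivity gap.} You assert that after decomposing increments into unit or geometric blocks, ``the sample-weight process has bounded sensitivity $L=O(1)$ per step.'' This is not true for the Cohen--Cormode--Duffield sketch. A single randomness unit (whether you take it to be a Bernoulli coin or the paper's exponential entry threshold $R_{x,j}$) determines whether and when key $x$ enters the sample; once in, the key contributes $v_x^{(t)}-R_{x,j}+\tau$ to $\hat F_t$ for every subsequent step until reset. Flipping that one unit can therefore change the update stream $(u_t)$ by $\Theta(v_x+\tau)$ in total $\ell_1$, which is \emph{unbounded} in $v_x$. Decomposing increments does not help: the issue is not the granularity of $\Delta$ but the fact that a heavy key's entire weight hinges on a single entry event.

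The paper's fix is a decomposition you do not propose. Conditioning on the high-probability event that all thresholds satisfy $R_{x,j}\le B:=\tau\log(T/\delta)$, any key with $v_x^{(t)}>B$ is \emph{deterministically} in the sample. One then splits $\hat F_t=\hat D_t+\hat P_t$, where $\hat D_t=\sum_x\max(v_x^{(t)}-R_{x,j}+\tau-B,0)$ is the portion the adversary can compute anyway (heavy keys are known to be sampled), and only the bounded ``protected'' part $\hat P_t=\sum_x\min(B,\,v_x^{(t)}-R_{x,j}+\tau)$ is fed to the tree mechanism, scaled by $1/B$. Now each unit $(x,j)$ contributes at most $2$ in $\ell_1$ to the scaled stream (at most $1$ going up, at most $1$ on reset), so the tree mechanism is $\eps_{\mathrm{dp}}$-DP with respect to the thresholds. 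This split is the heart of the proof, and it also forces a second departure from your sketch: each unit now contributes to \emph{many} time steps (every step until $v_x$ crosses $B$), so the standard event-level tree analysis does not apply and must be redone for units with bounded total $\ell_1$ mass (Lemma~\ref{lem:tree-unit-dp}).

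\textbf{Smaller divergences.} The paper's i.i.d.\ units are the exponential thresholds $R_{x,j}\sim\operatorname{Exp}(1/\tau)$ (one per key per reset epoch), not Bernoulli coins from decomposed increments; this is what makes the DP-generalization queries $h^{(t)}$ clean. For prefix-max, the paper does \emph{not} adapt the cardinality adjustable-rate scheme; it runs $O(\log T)$ parallel fixed-$\tau_k$ instances at dyadic scales $\tau_k\propto 2^k$ and sketch-switches to the largest active one. Your adjustable-rate proposal might be made to work, but adjusting $\tau$ (rather than halving $p$) interacts with the $D/P$ split and the bounded-threshold conditioning in ways that would need separate analysis.
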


We first present the non-robust algorithm of \citet{CohenCormodeDuffield2012} (simplified unit updates by \citet{GemullaLehnerHaas2006}).

\begin{algorithm2e}[h]
\caption{\small Resettable Sum Sketch (fixed $\tau$) \citep{CohenCormodeDuffield2012}}
\label{alg:sh-resettable}
\DontPrintSemicolon
\small
\KwIn{Stream of updates on keys $i$ of two types:
      $\mathsf{Inc}(i,\Delta)$ with $\Delta \ge 0$ and $\mathsf{Reset}(i)$;
      sampling threshold $\tau$}
\textbf{State:} Sample $S$ of keys; each key $i \in S$ has counter $c_i$.\;
\KwOut{At each time $t$, an estimate of the resettable streaming sum}
\BlankLine

\ForEach{update on key $i$}{
  \uIf{$\mathsf{Inc}(i,\Delta)$ with $\Delta > 0$}{
    \If{$i \in S$}{
      $c_i \gets c_i + \Delta$\;
    }
    \Else(\tcp*[h]{case $i \notin S$}){
      sample $r \sim \Exp[\tau^{-1}]$ \tcp*{mean $\tau$, rate $1/\tau$}
      \lIf(\tcp*[h]{start tracking $i$}){$r < \Delta$}{
        $S \gets S \cup \{i\}$; $c_i \gets \Delta - r$
      }
    }
  }
  \uElseIf{$\mathsf{Reset}(i)$}{
    \lIf{$i \in S$}{
      $S \gets S \setminus \{i\}$ 
    }
    \tcp{if $i \notin S$, nothing to do}
  }
  \Return{$\displaystyle \sum_{i \in S} (\tau + c_i)$} \tcp*{estimate of the sum}
}
\end{algorithm2e}

\paragraph{Vulnerability.}
\cref{alg:sh-resettable} can implement the cardinality with deletions in \cref{alg:basic-card-fixedp} (set $\tau = -1/\log(1-p)$, implement $\mathsf{Insert}(x)$ by $\mathsf{Reset}(x)$; $\mathsf{Inc}(x,a)$, implement $\mathsf{Delete}(x)$ by $\mathsf{Reset}(x)$). Therefore the attack described in \cref{sec:vulnerability} shows that the algorithm is not robust to adaptive updates. 

\paragraph{Alternative View of \cref{alg:sh-resettable}.}

We will work with an equivalent view of \cref{alg:sh-resettable} that casts it as a deterministic function of the input stream and a set of fixed, independent random variables (entry thresholds), with independent contributions to the estimate and sample.

\begin{definition}[Entry-Threshold Formulation]
\label{def:entry-threshold}
For each key $x$ and each reset epoch $j$, let $R_{x,j} \sim \operatorname{Exp}(1/\tau)$ be an independent \emph{entry threshold}. 
Let $v_x^{(t)}$ denote the active value of key $x$ at time $t$ (accumulated since the last reset). The state of the sketch at time $t$ is determined strictly by the pairs $\{(v_x^{(t)}, R_{x,j})\}$:
\begin{enumerate}
    \item \textbf{Sample Indicators:} The presence of key $x$ in the sample is given by the indicator variable $I_x^{(t)}$. Key $x$ is sampled if and only if its value exceeds its active threshold:
    \[
        I_x^{(t)} \;:=\; \mathbb{I}\left\{ v_x^{(t)} > R_{x,j} \right\}.
    \]
    The total sample size is the sum of these independent indicators:
    \[
        |S_t| \;=\; \sum_x I_x^{(t)}.
    \]

    \item \textbf{Sum Estimator:} The contribution of key $x$ to the sum estimate is defined as:
    \[
        Z_x^{(t)} \;:=\; I_x^{(t)} \cdot \left( v_x^{(t)} + \tau - R_{x,j} \right).
    \]
    The total estimate is the sum of these independent contributions:
    \[
        \hat{F}_t \;=\; \sum_x Z_x^{(t)}.
    \]
\end{enumerate}
\end{definition}

Note that other important properties of \cref{alg:sh-resettable} are analyzed in \cref{sec:non-robustsum}.
Since the entry threshold $R_{x,j}$ determines both the \textit{sample membership} and the \textit{estimated contribution} of $x$ to $\hat F_t$, intuitively we must ensure that final algorithm outputs effectively hide information about all \textit{currently active} entry thresholds $\{R_{x,j}\}$. However, this cannot be achieved by composing the algorithm with the Binary Tree Mechanism directly (as we did for  cardinality estimation), as the sensitivity of updates to $\hat F_t$ may be unbounded. To this end, we observe that with probability $1-\delta$, the following event holds for $B \coloneqq \tau \log(T/\delta)$:
\[\mathcal{E} = \{c_x^{(t)} \geq v_x^{(t)} - B \  \textrm{ for all keys } x, \ t\in [T] \}\]
In particular, for all $t \in [T]$, if a key $x$ has a large value $v_x^{(t)} > B$, then $x$ is deterministically in the sample, with contribution $c_x^{(t)} \geq v_x^{(t)} - B$. Then, intuitively, after conditioning on $\mathcal{E}$, we only need to protect the randomness of keys $x$ whose membership in the sample is still uncertain. 

Concretely, we can separate the contribution of each key $x$ to $F_t$ into a \textit{protected} component, which has bounded sensitivity, and a \textit{deterministic} component, which is known to the adversary. Namely, we decompose $F_t$ as follows:
\[F_t = \sum_{x} \max(v_x^{(t)} - B, 0) + \sum_{x} \min(B, v_x^{(t)} ) = D_t + P_t \]
and similarly the estimator $\hat F_t$ is decomposed as
\begin{align*}
    \hat F_t &= \sum_{x} \max(v_x^{(t)} - R_{x,j} + \tau - B, 0) \\ &+ \sum_{x} \min(v_x^{(t)} - R_{x,j} + \tau, B ) = \hat D_t + \hat P_t.
\end{align*}
\paragraph{Fixed-rate robust algorithm.}
Unlike the cardinality algorithm, a key technical challenge is that each private unit $(x, j, R_{x,j})$ will contribute to many distinct updates to $\hat P_t$, from the moment of the last $j$-th reset to the time when $v_x^{(t)}$ exceeds $B$. Our robust algorithm then feeds updates $u_t = \frac{1}{B} \left(\hat P_t - \hat P_{t-1}\right)$ to the Binary Tree Mechanism, which returns a sequence of noisy prefix-sums $\tilde U_t$. Note that the Binary Tree Mechanism is re-analyzed for the setting where each unit has a bounded \textit{total contribution} but may contribute in many time steps (for us, the total contribution of $(x, j, R_{x,j})$ to $\hat P_t$ is at most $2$). Then, the output of our algorithm is given by $\tilde F_t = B \cdot \tilde U_t + \hat D_t$.
The complete analysis of the fixed-rate sketch is in \cref{sec:max:error:sum:analysis} and \cref{sec:max:error:robust:proof}. Finally, to achieve error which depends on the prefix-max, we simply run $O(\log T)$ instances of the robust algorithm at different input scales; see \cref{sec:prefix:max:robust:sum} for the complete algorithm description and analysis.

\section{Robust Resettable Bernstein Sketch} \label{sec:resettableBernstein-review}

We construct robust resettable sketches for any $f$ that is a \emph{Bernstein function}, a class of sublinear functions rooted in the work of \citet{Bernstein1929} and \citet{Schilling2012}.

\begin{restatable}[Bernstein Functions]{definition}{Bernsteindef}
\label{def:bernstein-functions}
We denote by $\mathcal{B}$ (also referred to as the \emph{soft concave sublinear} in \citep{Cohen2016HyperExtended,CohenGeri2019}) the set of all functions $f(w)$ that admit a L{\'e}vy-Khintchine representation with zero drift. Formally, $f \in \mathcal{B}$ if there exists a non-negative weight function $a(t) \ge 0$, referred to as the \emph{inverse complement Laplace transform}, such that:
\[
    f(w) \;=\; \mathcal{L}^c[a](w) \;:=\; \int_{0}^{\infty} a(t) \left(1 - e^{-wt}\right) \, dt.
\]
\end{restatable}

We establish the following:
\begin{restatable}[Robust Resettable Sketches for Bernstein Statistics]{theorem}{thmBernstein} \label{thm:Bernstein}

Let $f \in \mathcal{B}$ be a Bernstein function.
For any accuracy parameter $\eps \in (0,1)$, failure probability $\delta \in (0,1)$, a parameter $T$ that bounds the number of $\mathsf{Inc}$ operations, and a fixed polynomial so that update values are in a range $\Delta \in [\Delta_{\min},\Delta_{\max}]$ where $\Delta_{\max}/\Delta_{\min} =  O(\mathrm{poly}(T))$. 
There exists a resettable streaming sketch using space
\[
    k \;=\; O\left( \mathrm{poly}(\eps^{-1}, \log(T/\delta) \right) \quad 
\]
that, with probability at least $1-\delta$ against an adaptive adversary, maintains an estimate $\hat{F}_t$ satisfying for all time steps $t$:
\[
    |F_t - \hat{F}_t| \;\le\; \eps \max_{t' \le t} F_{t'}.
\]
\end{restatable}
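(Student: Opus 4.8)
The plan is to reduce the problem of sketching an arbitrary Bernstein function $f \in \mathcal{B}$ to the already-solved problems of robust resettable cardinality (Theorem \ref{thm:robust-cardinality}) and robust resettable sum (Theorem \ref{thm:robust:resettable:sum}), following the soft-concave-sublinear / HyperLogLog-style decomposition of \citet{Cohen2016HyperExtended}. First I would use the L\'evy--Khintchine representation $f(w) = \int_0^\infty a(t)(1 - e^{-wt})\,dt$ to write the target statistic as $F = \sum_x f(v_x) = \int_0^\infty a(t)\bigl(\sum_x (1 - e^{-v_x t})\bigr)\,dt$. Note that $\sum_x(1 - e^{-v_x t})$ is, for each fixed $t$, itself a resettable statistic: it is dominated by the cardinality $\#\{x : v_x > 0\}$ (when $v_x t$ is large, $1 - e^{-v_x t} \approx 1$) and behaves like a scaled sum $t\sum_x v_x$ when all $v_x t$ are small. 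The next step is to discretize the integral over $t$ into $O(\log(T \Delta_{\max}/(\eps \Delta_{\min})))$ geometric scales $t_1 < t_2 < \dots$; on the interval $[\Delta_{\min}, \Delta_{\max}]$ of admissible values, only polylogarithmically many scales contribute a non-negligible fraction of $F$, because outside a window of width $O(\log(1/\eps))$ around the ``active'' scale for a given value the integrand is either exponentially small or essentially saturated at $1$. Within each scale one replaces $1 - e^{-v_x t_i}$ by an appropriate combination of a thresholded indicator $\mathbb{I}\{v_x > \theta_i\}$ (estimated by a robust cardinality sketch run on the substream where we treat $x$ as ``present'' once $v_x$ crosses $\theta_i$) and a capped sum $\sum_x \min(v_x, T_i)$ (estimated by a robust sum sketch with the cap implemented via the reset mechanism, i.e. a value exceeding $T_i$ is frozen at $T_i$). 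Because $f$ is concave and sublinear, the per-scale approximation error is a geometric series that telescopes to $O(\eps) \cdot F$.

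The key structural point I would emphasize is that this reduction is update-for-update: each $\mathsf{Inc}(x,\Delta)$ or $\mathsf{Reset}(x)$ in the original stream induces a bounded number (polylog in $T$) of $\mathsf{Inc}/\mathsf{Reset}$ operations across the constituent cardinality and sum sketches, and a $\mathsf{Reset}(x)$ maps to a $\mathsf{Reset}(x)$ in every sub-sketch (this is exactly why we need resettable sub-sketches rather than merely insertion-only ones, and why the capping must be reset-compatible). Consequently the composed sketch processes a resettable adaptive stream of length $T' = O(T \cdot \mathrm{polylog}(T))$, and adaptive robustness of the whole follows from adaptive robustness of each component by a union bound over the $\mathrm{polylog}(T)$ sub-sketches and over the $O(\log T)$ prefix-max scale copies, after rescaling the per-sketch failure probability to $\delta / \mathrm{polylog}(T)$ and the per-sketch accuracy to $\eps / \mathrm{polylog}(T)$. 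The prefix-max guarantee for $F$ follows from the prefix-max guarantees of the components together with monotonicity of $f$: since $f$ is non-decreasing, $\max_{t' \le t} F_{t'}$ dominates each scale's prefix-max contribution, so summing the per-scale bounds of the form $\eps' \max_{t'\le t}(\text{scale statistic})_{t'}$ yields $\eps \max_{t' \le t} F_{t'}$. The final space bound is $\mathrm{poly}(\eps^{-1}, \log(T/\delta))$ because we have $\mathrm{polylog}(T)$ sub-sketches, each of polylogarithmic size by Theorems \ref{thm:robust-cardinality} and \ref{thm:robust:resettable:sum}, with parameters inflated only by $\mathrm{polylog}(T)$ factors.

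The hardest part, I expect, will be the accuracy accounting of the scale decomposition: making precise that (i) truncating the integral $\int_0^\infty a(t)\,(\cdots)\,dt$ to a polylogarithmic range of scales loses only an $\eps$-fraction of $F$ \emph{uniformly over all reachable datasets} with values in $[\Delta_{\min},\Delta_{\max}]$ (this needs a bound on how much mass the weight $a(t)$ can place at extreme scales, which must be extracted from $f$ being finite and Bernstein on the relevant value range — one argues via $f(w) \le w f'(0^+)$ for small $w$ and $f$ being bounded/slowly-growing for large $w$, or more robustly by the standard two-sided sandwiching $c_1 \min(1, wt) \le 1 - e^{-wt} \le c_2 \min(1, wt)$ used in \citet{Cohen2016HyperExtended}); and (ii) the discretization of each retained scale into an indicator-plus-capped-sum pair incurs only $O(\eps/\log T)$ relative error so that the sum over scales is $O(\eps)$. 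A secondary technical nuisance is ensuring the capped-sum sub-statistic is genuinely resettable-sketchable with prefix-max error: one must check that the cap $\min(v_x, T_i)$, together with resets, is an instance covered by Theorem \ref{thm:robust:resettable:sum} (it is, since soft-capping $T_i(1 - e^{-v/T_i})$ — or a hard cap approximated by it — is itself Bernstein, but invoking that here would be circular, so instead one implements the hard cap directly by having the sub-sketch ignore the overflow past $T_i$, which is a legitimate resettable increment stream). Modulo these two accuracy lemmas and the reset-compatibility check, the argument is a clean black-box composition.
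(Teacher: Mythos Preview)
Your high-level plan --- decompose $F = \int_0^\infty a(t)\,\mathcal{L}^c[W](t)\,dt$ via the L\'evy--Khintchine integral, discretize in $t$, and reduce to the robust sum and cardinality primitives --- is exactly right, and matches the paper's architecture. But the per-scale implementation you propose has a real gap: both the threshold indicator $\mathbb{I}\{v_x > \theta_i\}$ and the hard cap $\min(v_x, T_i)$ require knowing the running value $v_x$ at each update in order to generate the substream (to emit the insert at the moment the threshold is crossed, or to suppress the overflow past $T_i$). In an adversarial instance where all keys sit just below the threshold, maintaining this forces $\Omega(n)$ state. You flag the capping issue as a ``secondary technical nuisance'', but it is the central obstruction --- there is no black-box way to turn the input stream into those deterministic substreams in sublinear space.

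The paper replaces your deterministic thresholding with a \emph{randomized, stateless} element map (\cref{alg:elementmapmulti}): for each incoming $\mathsf{Inc}(x,\Delta)$, each replica $k \in [r]$, and each scale $i$, draw a fresh $y \sim \operatorname{Exp}(\Delta)$ and emit $\mathsf{Insert}(H(x,k))$ to stream $E_i$ iff $y < \tau_i$. By memorylessness of the exponential, the probability that $H(x,k)$ ever lands in $E_i$ between resets is exactly $1-e^{-w_x\tau_i}$, so $\frac{1}{r}\,\mathbb{E}[\textsf{Distinct}(E_i)] = \mathcal{L}^c[W](\tau_i)$ --- precisely the integrand you want, with no need to track $v_x$. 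The low-$t$ portion of the integral is handled by a \emph{single} uncapped sum term $\alpha_0\cdot\textsf{Sum}(W)$ with $\alpha_0=\int_0^\tau a(t)t\,dt$ (via the linearization $1-e^{-wt}\approx wt$), not a family of capped sums. Two further points your outline would need to absorb: (i) this mapping injects its own per-update randomness that an adaptive adversary could try to exploit, and the paper handles this by folding the mapping's coin flips into the privacy units of the downstream cardinality sketches so that the DP-generalization argument covers them end-to-end (\cref{sec:wrap}); (ii) the replication parameter $r$ is actually polynomial in $T$ (\cref{cor:input-independent-reduction-concentration}), not polylog as you assumed, but since the cardinality sketch space is polylog in the output stream length and $\log(Tr)=O(\log T)$, the final bound survives.
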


The sketch construction and analysis are provided in \cref{sec:concavesublinear}. At a high level, we show that Bernstein statistics on a resettable stream can be approximated to within a relative error by a linear combination of sum and cardinality statistics of the following form:
\[
    \tilde{F}(W) \;=\; \alpha_0 \cdot {\textsf{Sum}(W)} \;+\; \frac{1}{r} \sum_{i\in [m]} \alpha_i {\textsf{Distinct}}(E_i).
\]
where the coefficients $\alpha_i$ depend on $T,\eps,\delta,f(\cdot)$, $m$ is logarithmically bounded, $\textsf{Sum}$ is applied to the input stream and the distinct statistics are computed for new resettable cardinality streams $(E_i)_{i\in [m]}$ that are generated element by element by a randomized map (as in \cref{alg:elementmapmulti}) from the input stream.

We build on the composable sketching framework of \citet{Cohen2016HyperExtended} that essentially reduces sketching Bernstein statistics to sketching sum and max-distinct statistics (generalized cardinality). To facilitate robustness, we introduced a new parametrization for the composition so that the coefficients do not depend on the sum statistic value and expressed the composition using plain cardinality statistics instead of max-distinct statistic (for which there are no direct known resettable sketches). Crucially, each cardinality output stream is generated with independent randomness, so we were able to ``push'' the randomness introduced in the reduction map into the robustness analysis of the cardinality sketches. Finally, the original work only facilitated incremental datasets and we needed to carefully integrate support for reset operations so that robustness and prefix-max guarantees transfer. 

\section{Conclusion}

We have presented the first adaptively robust streaming algorithms for the resettable model, covering (sub)linear statistics, as well as the first resettable sketches for non-moment Bernstein statistics (e.g., soft capping) that achieve polylogarithmic space complexity.

We conclude with several interesting directions for future work. First, the existence of robust resettable sketches for super-linear statistics (e.g., $\ell_2$) remains open. The core obstacle is that current sampling sketches all rely on persistent per-key randomness which inherently makes sketches vulnerable to adaptive attacks. Progress on this direction would require a streaming-specific fundamental departure from known sampling designs.
Another open question is to design robust sketches in the general ReLU model (e.g. for $\ell_1$ estimation). In this setting, the central challenge is bias amplification: while resets provide a way to ``clear'' adversarial bias, partial decrements in ReLU may cause slight posterior biases to persist and accumulate as keys oscillate in and out of the sample. Finally, we optimized for clarity, and the exact space bounds can be tightened: for example, analysis of the tree mechanism via approximate DP or zCDP would directly improve space complexity, and the Bernstein reduction likely admits further optimization.







\newpage
\ignore{
\section*{Impact Statement}

This paper presents work whose goal is to advance the field of Machine
Learning. There are many potential societal consequences of our work, none
which we feel must be specifically highlighted here.
}

\bibliographystyle{plainnat}
\bibliography{mybib,adaptive}

\newpage
\appendix
\onecolumn

\section{Detailed Prior Work and Landscape Review}  \label{sec:related}

\subsection{Prior Work in the Resettable and ReLU Streaming Models}
\label{sec:related-deletions}

We review previous non-robust algorithms for cardinality and $\ell_1$ estimation in the ReLU and resettable streaming models. While all known streaming algorithms in the turnstile model are \emph{linear sketches}\footnote{Let $x \in \mathbb{Z}^n$ denote the frequency vector of updates to the dataset. A linear sketch maintains $Ax \in \mathbb{R}^r$ for $r = o(n)$, where $A$ is a  carefully designed matrix which can be efficiently stored. At the end of the stream, the algorithm returns $f(Ax)$ for a chosen post-processing function $f$.}, unfortunately linear sketches are not designed to handle non-linear ReLU and reset operations. 
Consequently, all existing resettable sketches are \emph{sampling sketches}, which maintain a weighted sample of active keys as well as their corresponding estimated (or exact) values, in addition to any sampling parameters needed for unbiased estimation. Sampling-based sketches are a natural choice when designing streaming algorithms for the resettable or ReLU models: if a \textsc{Reset}$(x)$ operation arrives in the stream and $x$ is not sampled by the algorithm, then no action is needed; otherwise, if $x$ is in the sample, it is removed. More generally, batch reset operations can be processed by simply deleting all keys in the sample that satisfy a given predicate.

The ReLU streaming model (as streaming with deletions model) was
introduced by
\citet{GemullaLehnerHaas2006,GemullaLehnerHaas2007}, who focused on $f(v)=v$ with unit updates and adapted the sampling sketches of  
\citet{GibbonsMatias1998,EstanVarghese2002}, which maintain weighted samples of active keys and obtained unbiased estimators.
This algorithm was later extended by \citep{CohenCormodeDuffield2012} to support weighted updates and seamless adjustment of the effective sampling rate, by specifying an unaggregated implementation of the \emph{ppswor} (probability proportional to size without replacement) sampler
\citep{Rosen1997,Cohen1997,CohenKaplan2007,EfraimidisSpirakis2006} and adding deletion support.
More recently, resettable streaming has been studied with the motivation
of unlearning and the right to erasure/deletion regulations \cite{gdpr2016,ccpa2018}, 
as support to 
user-deletion requests in machine learning and data systems
\citep{PavanChakrabortyVinodchandranMeel2024}. 
This line of work was recently extended by ~\citep{LinNguyenSwartworthWoodruff2025}, which presented resettable sketches for $\ell_p$ statistics $f(v)=v^p$ for $p\in (0,2]$.

\subsection{Landscape Exploration} \label{sec:landscape}

The core difficulty in designing adaptively robust streaming algorithms is that the algorithm's internal randomness needs to be periodically ``refreshed'' or concealed from the adaptive adversary. Otherwise, if the algorithm samples all of its random bits at the beginning of the execution and is deterministic thereafter, the adaptive adversary may learn information about the internal randomness of the sketch and then choose adaptive queries which are highly correlated with the internal randomness, leading the sketch to fail to provide accurate estimates with high probability.\footnote{Although adaptive inputs may be benign (as in feedback-driven system behavior) or adversarial (intentional malicious manipulation), in either case, the update sequence may become correlated with the internal randomness of streaming algorithm and thus cause the randomized algorithm to fail with high probability.} In fact, this is exactly the approach taken in previous lower bounds for various classes of sketching algorithms \cite{AhmadianCohen2024, CohenNelsonSarlosSinghalStemmer2024, GribelyukLinWoodruffYuZhou2024, GribelyukLWYZ25}, and 
a similar ``information leak'' phenomenon lies behind many known lower bounds in adaptive data analysis.



Adaptively robust algorithms can be designed by carefully hiding the internal randomness of the algorithm from the adversary. Over the last decade, this was done by designing (black-box) wrapper methods which take as input a generation process for a sketch in the non-adaptive setting, and output a robust sketch which has correctness guarantees in the adaptive setting \citep{DworkFHPRR:STOC2015,BassilyNSSSU:sicomp2021,HassidimKMMS20,Blanc:STOC2023}. For example, the robustness wrapper of \cite{HassidimKMMS20} instantiates $k$ independent non-adaptive sketches which are maintained throughout the stream, and for each query, combines the estimates from each sketch to produce a single \textit{privatized} estimate which effectively protects the random bits of each of the $k$ independent sketches. 
A different approach by \cite{Blanc:STOC2023} achieves the same robustification by only using subsampling.
Although sketches in non-adaptive models can typically handle exponentially many queries in the sketch size, it has been shown that these robustification techniques can be used to design algorithms that are accurate for only $O(k^2)$ adaptive interactions. Thus, this flavor of wrapper methods cannot be used to provide a positive answer to \cref{question:main}.


In the special case of insertion-only streaming, the space complexity of streaming algorithms under adaptive updates is relatively well-understood, as it is known that many fundamental streaming problems admit robust algorithms using sublinear space \cite{WoodruffZ21, BravermanHMSSZ21}. Unfortunately, the techniques used to design adaptively robust algorithms in the insertion-only model do not appear to transfer to the resettable model. For instance,
MinHash cardinality sketches \citep{FlajoletMartin85,Cohen1997, hyperloglog:2007,hyperloglogpractice:EDBT2013} with the standard estimator (and full or cryptographically protected randomness) are fully robust in insertion-only streams but cannot support reset operations.
Additionally, a common way to ``robustify'' streaming algorithms is to use the \emph{sketch switching} technique, which was first introduced by \citet{BenEliezerJWY21}. Suppose the algorithm's task is to estimate the value of a monotone, non-decreasing function $f$ throughout the stream. If the stream is insertion-only, the value of $f$ will increase by a factor of $(1+\eps)$ at most $B=O(\eps^{-1}\log \max_t F_t/\min_t F_t)$ times. Then, to design an adversarially robust streaming algorithm for $f$, we can simply maintain $B$ copies of the non-robust sketch, and use a single copy at any time: once the (true) function value changes by $(1+\eps)$ compared to the last estimate returned by the current active instance, we can switch to the next instance, thereby protecting the internal randomness of each instance. Importantly, observe that this general approach can be used to robustify a streaming algorithm for $f$ only if we can upper bound the number of significant changes to the function value. As such, sketch switching can be used to robustify algorithms in the ``bounded deletions" model \citep{BenEliezerJWY21}, which supports deletions but guarantees error proportional to the statistic on the absolute values stream, yielding a much weaker guarantee than prefix-max error that allows for a logarithmic bound on the number of switches. 
However, in the resettable streaming model, also with the prefix-max error guarantees, the statistic can oscillate between $\min_t F_t$ and $\max_t F_t$ $B=\tilde{\Omega}(T)$ times. Therefore, sketch switching cannot be used to provide a positive answer to \cref{question:main}.

On the other hand, negative results have been established for specific families of sketches by designing adaptive sequences of queries which allow the adversary to gradually learn the random bits used by the algorithm \footnote{Note that some attacks in the literature are designed against a specific estimator, while others are universal (apply for a sketch that uses any estimator).}. The efficacy of the attack is often measured by the \textit{number of adaptive queries} required to break a sketch of a given \textit{size} $k$. Over the last few decades, there has been a flurry of works which presented attacks using $\poly(k)$ queries on sketches of size $k$ for $L_p$ estimation; in fact, for linear sketches for $L_2$ estimation and composable sketches for cardinality estimation, the adaptive attacks even achieved optimal $O(k^2)$ query complexity \citep{HardtW:STOC2013,HardtUllman:FOCS2014,SteinkeUllman:COLT2015,TrickingHashingTrick:arxiv2022,AhmadianCohen:ICML2024,GribelyukLinWoodruffYuZhou2024,CohenNelsonSarlosSinghalStemmer2024, GribelyukLWYZ25}, with Fingerprinting Codes~\citep{BonehShaw_fingerprinting:1998} at the core of all of these attacks.
Moreover, several well-known sketches have been shown to admit \textit{linear} size adaptive attacks. In particular, \citet{DBLP:conf/nips/CherapanamjeriN20} showed that the Johnson Lindenstrauss transform with a standard estimator \citep{JohnsonLindenstrauss1984} can be broken using only $O(k)$ queries. Similarly, \citep{BenEliezerJWY21} gave a linear-size attack on the AMS sketch ~\citep{ams99}, \citet{DBLP:conf/icml/CohenLNSSS22} designed a linear-size attack against Count-Sketch~\citep{CharikarCFC:2002}, and \cite{DBLP:journals/icl/ReviriegoT20,cryptoeprint:2021/1139,AhmadianCohen:ICML2024} gave an efficient attack for MinHash cardinality sketches \citep{FlajoletMartin85,Cohen1997, hyperloglog:2007,hyperloglogpractice:EDBT2013}. 

We observe that an attack on any family of sketching maps that can be implemented in the query model and uses queries with non-negative values and with a constant (or logarithmic) ratio range of statistics values can be implemented in resettable streaming. 
The reduction is simple: move from one dataset $D_1= \{(x,v_x)\}$ to $D_2=\{(x,v'_x)\}$ by issuing the updates $\mathsf{Reset}(x) ; \mathsf{Inc}(x,v'_z)$) to each key in the support of $D_1\cup D_2$. Then query for the (prefix-max) approximate value of the statistics. 
This transferability in particular holds for the attacks on MinHash sketches by \citet{AhmadianCohen2024}, and the attacks on union-composable cardinality sketches and on Real-valued linear $\ell_0$ sketches sketches by  \citet{CohenNelsonSarlosSinghalStemmer2024}.  
Since resettable cardinality falls out as a special case for all frequency statistics (per \cref{dell1tocard:eq}), 
this broadly suggests that union-composable or linear sketching maps fundamentally can not provide a positive answer to \cref{question:main}. 

Intuitively, all sampling-based methods which utilize \emph{persistent (sticky) per-key randomness} are vulnerable to adaptive attacks. This is important guidance since all known resettable streaming sketches are sampling-based \citep{GemullaLehnerHaas2006,GemullaLehnerHaas2007,CohenCormodeDuffield2012,PavanChakrabortyVinodchandranMeel2024,LinNguyenSwartworthWoodruff2025}.
Sampling sketches (sometimes implicitly) are based on a sampling-to-order statistics transform where the value of each key is scaled by a random value and the sample is formed from order statistics 
\citep{Knuth1998,Vitter1985,Rosen1997,Tille2006,Cohen1997,CohenKaplan2007,CohenKaplan2008}. In composable (e.g. linear, MinHash) sketches, \citep{AndoniKrauthgamerOnak2011,JayaramWoodruff2018,CohenPaghWoodruff2020}, the randomness is fixed for each key (sticky), which makes them inherently vulnerable with adaptive inputs.\footnote{This is formalized in \citep{AhmadianCohen2024,CohenNelsonSarlosSinghalStemmer2024} as having a small \textit{determining pool}.}. Specifically, the resettable streaming sketches of \citep{LinNguyenSwartworthWoodruff2025} and 
known sampling sketches for super-linear statistics (including $\ell_2$ sampling) \citep{AndoniKrauthgamerOnak2011,JayaramWoodruff2018,CohenPaghWoodruff2020} utilize sticky per-key randomness and thus are vulnerable to the aforementioned adaptive attacks.

\section{Analysis of the Fixed-Rate Robust cardinality with deletions}\label{sec:robustfixedcardinalityanalysis}

\subsection{Privacy layer (fixed-rate sketch)} \label{continual:sec}

Consider the fixed-rate cardinality sketch \cref{alg:basic-card-fixedp} and recall that it maintains the (random) sample $S_t$ of size $S_t = |S_t|$ at time $t$ (we overload notation and use 
$S_t = |S_t|$ to also denote the sample size at time $t$ with the interpretation clear from context).
Let
\[
  u_t := S_t - S_{t-1} \in \{-1,0,+1\}
\]
denote the (signed) change in sample size at time $t$ and observe that the prefix sum to step $t$ is the sample size at step $t$: $S_t = \sum_{t'=1}^t u_{t'}$. Recall that the standard estimator of the cardinality $N_t$ is the sample size scaled by $p$. 

We feed the stream $(u_t)_{t=1}^T$ of sample-size updates into a Binary Tree Mechanism \citep{ChanShiSong2011,DworkNaorPitassiRothblumYekhanin2010,DworkRoth2014} (as described in \cref{sec:tree-mechanism}) with sensitivity parameter $L=2$ and privacy parameter $\eps_{\textrm{dp}}$. The mechanism produces a sequence $(\tilde S_t)$ of noisy approximate prefix sums and our released estimates are \[\hat N_t := \frac{\tilde S_t}{p} .\]

\subsection{Analysis: Error}

We write the estimation error as
\[
  |\hat N_t - N_t|
  = \frac{1}{p}\,|\tilde S_t - pN_t|.
\]
Then, we can decompose the error as follows:
\begin{equation}
\label{eq:decomposition}
  |\tilde S_t - pN_t|
  \;\le\;
  \underbrace{|\tilde S_t - S_t|}_{\text{(I) tree mechanism noise}}
  +
  \underbrace{|S_t - pN_t|}_{\text{(II) sampling fluctuation+adaptive bias}}
\end{equation}

 Term~(I) is controlled by a
uniform accuracy bound for the tree mechanism.  Term~(II) is 
controlled by a robust sampling lemma that exploits the
stability guarantees of differential privacy.

\subsubsection{Term (I): Mechanism Noise}
The first term depends solely on the properties of the Tree Mechanism and is independent of the sampling or the adversary's strategy. Let $\varepsilon_{\mathrm{dp}}$ be the privacy parameter. From \cref{lem:tree-accuracy} we have that for some absolute constant $C_1$, 
with probability $1-\delta$, for all $t \in [T]$:
\begin{equation} \label{TermI:eq}
    |\tilde S_t - S_t| \;\le\; C_1  \frac{2}{\varepsilon_{\mathrm{dp}}} \log^{3/2} T \cdot \log \frac{T}{\delta} .
\end{equation}

\subsubsection{Terms (II): Robust Sampling}

\begin{lemma}[Robust sampling via DP generalization]
\label{lem:robust-sampling}
Let $(S_t)_{t=1}^T$ denote the sample-size process of \cref{alg:basic-card-fixedp} with sampling rate $p\in(0,1)$, and let $N_t$ be
the (random) number of active keys at time $t$.  Let
$N_{\max} := \max_{t\le T} N_t$, and let
$\varepsilon_{\mathrm{dp}} \in (0,1]$ be the privacy
parameter of the tree mechanism applied to the sample-size updates.

Then there exists a universal constant $C>0$ such that for any
$\delta\in(0,1)$, with probability at least $1-\delta$ over the
sampling bits, the tree-mechanism noise, and any internal randomness
of the adversary, we have simultaneously for all $t\in[T]$:
\[
  \bigl|S_t - p N_t\bigr|
  \;\le\;
  C\left(
    \varepsilon_{\mathrm{dp}}\,p N_{t}
    \;+\;
    \frac{1}{\varepsilon_{\mathrm{dp}}}\,\log\frac{T}{\delta}
  \right).
\]
\end{lemma}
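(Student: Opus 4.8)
The plan is to prove the bound by combining the \emph{generalization property of differential privacy} with a standard non-adaptive concentration bound. Throughout I fix the adversary's internal coins (so the adversary is a deterministic strategy); the stated bound then follows for a randomized adversary by averaging. First I would recast \cref{alg:basic-card-fixedp} in a ``coin-database'' view: its internal randomness is a vector $Z=(Z_i)$ of i.i.d.\ $\Bern(p)$ bits, one drawn at each $\mathsf{Insert}$ operation, and at every time $t$ the true sample size is a linear query on $Z$,
\[
  S_t \;=\; \sum_{x \in A_t} Z_{\iota(x)},
\]
where the active set $A_t$ and the pointer $\iota(\cdot)$ taking each active key to the coin of its most recent insertion are functions of the \emph{released transcript} $\Pi_{<t}=(\tilde S_1,\dots,\tilde S_{t-1})$ only (given the fixed adversary), because $\mathsf{op}_t$ is chosen from $\Pi_{<t}$. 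Distinct active keys point to distinct coins, so for any fixed transcript $S_t$ is a sum of $N_t$ distinct coordinates of $Z$, whose expectation over a fresh draw is exactly $pN_t$.

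The heart of the argument is that the map $\mathcal M : Z \mapsto \Pi = (\tilde S_1,\dots,\tilde S_T)$ is $\varepsilon_{\mathrm{dp}}$-DP with respect to flipping a single coordinate of $Z$. I would prove this by an explicit coupling. Flipping the coin $Z_i$ used at the insertion opening some epoch of a key $x$ changes the stream of sample-size increments $(u_t)$ fed to the tree mechanism at only two times — the insertion time $s$ and the time $s'$ at which that epoch ends (or at only one time, if the epoch runs to the end of the stream) — each by $\pm 1$, and these two changes cancel. I couple the tree-mechanism noise so that every released value $\tilde S_t$ is the same in both executions; since the adversary observes only $(\tilde S_t)$, it then issues identical operations in both, so no ``cascade'' into later operations occurs, and the sole discrepancy is that the internal prefix sums $S_t$ differ by $\pm 1$ on the contiguous block $t\in[s,s')$. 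That discrepancy amounts to changing the tree mechanism's input at exactly two leaves, which the coupling absorbs by shifting the noise on the $O(\log T)$ tree nodes that cover $[s,s')$ by at most $1$ each; with the mechanism's Laplace scale set to $\Theta(L\log T/\varepsilon_{\mathrm{dp}})$ for $L=2$ (as in \cref{sec:tree-mechanism}, matching the accuracy of \cref{lem:tree-accuracy}), this shifts the noise densities by a total factor of at most $e^{\varepsilon_{\mathrm{dp}}}$. \textbf{This is the step I expect to be the main obstacle}: it is exactly where one must identify the correct ``randomness unit'' (a per-insertion coin, whose total contribution to the update stream is bounded by $L=2$ in $\ell_1$ but \emph{spread over two time steps}), re-analyze the tree mechanism for units with bounded-but-temporally-spread contribution rather than the usual one-time-step units, and argue that the noise coupling genuinely keeps the adaptive loop in lockstep.

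To finish, I would transfer a fresh-sample tail bound through this DP guarantee. Fix any transcript $o$ and let $Z'$ be a fresh independent copy of $Z$; then $S'_t(o) := \sum_{x \in A_t(o)} Z'_{\iota(x,o)}$ is exactly $\Bin(N_t,p)$-distributed, so a multiplicative-plus-additive Chernoff bound together with a union bound over $t\in[T]$ gives, for a suitable absolute constant $C_0$,
\[
  \Pr_{Z'}\!\Big[\exists\, t\in[T]:\ \big|S'_t(o) - pN_t\big| \;>\; C_0\big(\varepsilon_{\mathrm{dp}}\,pN_t + \varepsilon_{\mathrm{dp}}^{-1}\log(T/\delta)\big)\Big] \;\le\; \delta/e,
\]
where I use $\sqrt{pN_t\log(T/\delta)} \le \tfrac12\big(\varepsilon_{\mathrm{dp}} pN_t + \varepsilon_{\mathrm{dp}}^{-1}\log(T/\delta)\big)$ and $\varepsilon_{\mathrm{dp}}\le 1$. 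The event above is a function of the transcript and the coin vector, so applying the generalization property of differential privacy \citep{DworkFeldmanHardtPitassiReingoldRoth2015,BassilyNSSSU:sicomp2021} to the $\varepsilon_{\mathrm{dp}}$-DP mechanism $\mathcal M$ replaces the fresh $Z'$ by the true $Z$ at the cost of the factor $e^{\varepsilon_{\mathrm{dp}}}$: with probability at least $1 - e^{\varepsilon_{\mathrm{dp}}}(\delta/e) \ge 1 - \delta$ over the sampling bits, the tree noise, and the adversary, $|S_t - pN_t| \le C_0\big(\varepsilon_{\mathrm{dp}}\,pN_t + \varepsilon_{\mathrm{dp}}^{-1}\log(T/\delta)\big)$ holds simultaneously for all $t\in[T]$. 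Tracking the constants yields the claimed $C$, completing the proof.
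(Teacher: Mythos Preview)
Your setup and DP step essentially match the paper: you correctly identify the per-insertion Bernoulli coins as the ``dataset,'' and your coupling argument for why the transcript is $\varepsilon_{\mathrm{dp}}$-DP with respect to flipping one coin is exactly the adaptive version of what the paper packages as \cref{lem:tree-unit-dp} (each unit contributes $\le 2$ in $\ell_1$ to the update stream, spread over at most two time steps, and the adaptive loop stays in lockstep once the released prefix sums are coupled to agree). That part is fine and is arguably more explicit than the paper's citation.

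The gap is in your final transfer step. The statement ``applying the generalization property of differential privacy \ldots\ replaces the fresh $Z'$ by the true $Z$ at the cost of the factor $e^{\varepsilon_{\mathrm{dp}}}$'' is not a valid theorem. There is no result saying that for an arbitrary predicate $\phi(o,Z)$, if $\Pr_{Z'}[\phi(o,Z')]\le\beta$ for every fixed output $o$, then $\Pr_Z[\phi(\mathcal M(Z),Z)]\le e^{\varepsilon_{\mathrm{dp}}}\beta$. The naive max-information argument gives a factor of $e^{\varepsilon_{\mathrm{dp}}\cdot|U|}$, which is useless here; and conditioning on the transcript destroys the independence of the coins, so you cannot simply re-run Chernoff under the posterior. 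The references you cite (\citet{DworkFeldmanHardtPitassiReingoldRoth2015,BassilyNSSSU:sicomp2021}) do not provide this transfer.

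What those references \emph{do} provide, and what the paper uses, is the DP generalization theorem for \emph{linear} queries (the paper's \cref{thm:DP-generalization}). Since $S_t = \sum_{x\in A_t} Z_{\iota(x)}$ is a linear query in the coin vector whose coefficients (the active-set indicators) are determined by the DP transcript, that theorem applies directly to each $h^{(t)}$ and yields bounds of the form
\[
e^{-2\varepsilon_{\mathrm{dp}}}\,pN_t - S_t \;\le\; \tfrac{4}{\varepsilon_{\mathrm{dp}}}\log(1+1/\beta),
\qquad
S_t - e^{2\varepsilon_{\mathrm{dp}}}\,pN_t \;\le\; \tfrac{4}{\varepsilon_{\mathrm{dp}}}\log(1+1/\beta),
\]
from which $|S_t - pN_t| \le C(\varepsilon_{\mathrm{dp}}\,pN_t + \varepsilon_{\mathrm{dp}}^{-1}\log(T/\delta))$ follows by a union bound over $t$. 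Your intermediate Chernoff step is then unnecessary: the linear-query generalization theorem already contains both the concentration and the transfer in one statement. Replace your last paragraph with this direct application and the proof goes through.
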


\ignore{
\[
  \bigl|S_t - p N_t\bigr|
  \;\le\;
  C\left(
    \sqrt{p N_{\max} \log\frac{T}{\delta}}
    \;+\;
    \log\frac{T}{\delta}
    \;+\;
    \varepsilon_{\mathrm{dp}} p N_{\max}
  \right).
\]
}

\begin{proof}[Proof of \cref{lem:robust-sampling}]
We apply the DP generalization theorem (\cref{thm:DP-generalization}). We specify our sensitive units, their distribution, and functions that depend only on post-processed protected output (in particular, the input stream itself) so that the sample size can be expressed in the form of a linear sum of per-unit functions applied to unit values.

\paragraph{Dataset and distribution.}
Fix a time horizon $T$ and sampling rate $p\in(0,1)$.
Consider the (finite) set of \emph{potential sampling units}
\[
  U := \{(x,j)\},
\]
where $x$ ranges over all keys that are ever inserted in the stream,
and $j\in\{1,2,\dots\}$ indexes the \emph{insertion intervals} of $x$:
Each interval start in an insertion operation and ends just before the next operation (insertion or deletion) involving $x$. 
In each such interval it holds that either the key is in the sample in all time steps or it is not.
By construction, the total number of such units
is at most $T$ and each such units $u=(x,j)\in U$ corresponds to at most one distinct insertion event in the stream.

For each unit $u\in U$, we have an associated sampling bit
$B_u \sim \mathrm{Bernoulli}(p)$, sampled independently once and then
fixed.  Thus the dataset of sampling bits is
\[
  S = (B_u)_{u\in U} \in X^d,
  \qquad
  X = \{0,1\},\ d = |U| \le T,
\]
and the underlying data distribution is $\mathcal D=\mathrm{Bernoulli}(p)$.

\paragraph{DP mechanism and post-processing.}
We apply \cref{lem:tree-unit-dp}. The contributions of each unit to the updates consist of at most one $+1$ contribution (if the key is sampled in when inserted) that is followed by a possible $-1$ contribution (if inserted, when re-inserted or deleted). The $L_1$ norm of the contributions is at most $2$. Therefore, the outputs of the 
tree mechanism with $L=2$ are unit-level $\varepsilon_{\mathrm{dp}}$-DP with respect to $(B_u)_{u\in U}$.

The adversary observes the noisy prefixes $(\tilde S_t)_{t\le T}$ and
chooses the update stream adaptively.  By post-processing invariance of
DP, the joint mapping
\[
  S \mapsto \bigl((\tilde S_t)_{t\le T},
                 \text{update stream},
                 (A_t)_{t\le T}\bigr)
\]
is also $\varepsilon_{\mathrm{dp}}$-DP with respect to $S$.

\paragraph{Queries $h^{(t)}$.}
For each time $t\in[T]$, and each unit $u=(x,j)\in U$, define the
function $h^{(t)}_u : \{0,1\}\to[0,1]$ by
\[
  h^{(t)}_u(b)
  := \mathbf 1\{u\text{ is active at time }t\}\cdot b.
\]
Here ``$u$ is active at time $t$'' means that the key $x$ is currently
active at time $t$ and this time lies in its $j$-th insertion interval.
Each active key $x$ belongs to exactly one such unit $u=(x,j)$ at time
$t$, so the number of active units is exactly $N_t$.

For a dataset $Y = (y_u)_{u\in U} \in \{0,1\}^U$, define
\[
  h^{(t)}(Y) := \sum_{u\in U} h^{(t)}_u(y_u).
\]
Note that for the actual sampling bits $S=(B_u)_{u\in U}$ we have
\[
  h^{(t)}(S)
  = \sum_{u\in U} \mathbf 1\{u\text{ active at }t\}\,B_u
  = S_t,
\]
the true sample size at time $t$.

Moreover, since $h^{(t)}_u$ depends on $S$ only through the event
``$u$ is active at time $t$'' (which is fully determined by the DP
transcript and the adversary), and the distribution $\mathcal D$ is
$\mathrm{Bernoulli}(p)$ on each coordinate, we have
\[
  \E_{Z\sim\mathcal D^d}[h^{(t)}(Z)]
  = \sum_{u\in U} \mathbf 1\{u\text{ active at }t\}\,\E[B_u']
  = p\,N_t,
\]
where $B_u'\sim\mathrm{Bernoulli}(p)$ are fresh, independent copies.

Thus, for each fixed time $t$, $h^{(t)}$ is a query of the form covered
by \cref{thm:DP-generalization}, and we have
\[
  h^{(t)}(S) = S_t,
  \qquad
  \E_{Z\sim\mathcal D^d}[h^{(t)}(Z)] = p N_t.
\]

\paragraph{Applying DP generalization.}
Fix a time $t\in[T]$ and apply \cref{thm:DP-generalization} with
$\xi = 2\varepsilon_{\mathrm{dp}}$ and the query $h^{(t)}$.
For any $\beta\in(0,1]$, with probability at least $1-\beta$ over the
randomness of $S$ and the DP mechanism, we have
\begin{align*}
  e^{-\!2\varepsilon_{\mathrm{dp}}}\,p N_t - S_t
  &\le \frac{2}{\varepsilon_{\mathrm{dp}}}\log\!\left(1+\frac{1}{\beta}\right),\\[0.5ex]
  S_t - e^{2\varepsilon_{\mathrm{dp}}}\,p N_t
  &\le \frac{2}{\varepsilon_{\mathrm{dp}}}\log\!\left(1+\frac{1}{\beta}\right).
\end{align*}
Rewriting these as bounds on $|S_t - pN_t|$, we obtain
\[
  |S_t - pN_t|
  \;\le\;
  (e^{2\varepsilon_{\mathrm{dp}}}-1)\,p N_t
  \;+\;
  \frac{2}{\varepsilon_{\mathrm{dp}}}\log\!\left(1+\frac{1}{\beta}\right).
\]

Since 
$e^{4\varepsilon_{\mathrm{dp}}}-1 \le C'\varepsilon_{\mathrm{dp}}$ for
$\varepsilon_{\mathrm{dp}}\le 1$ and a universal constant $C'$, this
implies
\[
  |S_t - pN_t|
  \;\le\;
  C'\varepsilon_{\mathrm{dp}}\,p N_{t}
  \;+\;
  \frac{2}{\varepsilon_{\mathrm{dp}}}\log\!\left(1+\frac{1}{\beta}\right).
\]

\paragraph{Uniform-in-time bound.}
To obtain a bound that holds simultaneously for all $t\in[T]$, we apply
a union bound.  Set $\beta := \delta/T$.  Then with probability at least
$1-\delta$,
\[
  |S_t - pN_t|
  \;\le\;
  C'\varepsilon_{\mathrm{dp}}\,p N_{t}
  \;+\;
  \frac{4}{\varepsilon_{\mathrm{dp}}}\log\!\left(1+\frac{T}{\delta}\right)
  \qquad\text{for all }t\in[T].
\]
Absorbing constants and using
$\log(1+T/\delta) \le C''\log(T/\delta)$ for a universal constant
$C''>0$ yields
\[
  |S_t - pN_t|
  \;\le\;
  C\left(
    \varepsilon_{\mathrm{dp}}\,p N_{t}
    +
    \frac{1}{\varepsilon_{\mathrm{dp}}}\,\log\frac{T}{\delta}
  \right)
  \qquad\text{for all }t\in[T],
\]
for some universal constant $C>0$, as claimed.
\end{proof}

\subsection{Proof for the fixed-rate sketch}

We wrap up the analysis of the fixed-rate sketch.
\begin{proof}[Proof of \cref{thm:robust-cardinality} with \eqref{maxerrorcardinality:eq}]

We apply the error decomposition~\eqref{eq:decomposition} together with the
accuracy of the tree mechanism (Term~I, \eqref{TermI:eq}) and the robust
sampling bound (\cref{lem:robust-sampling}), with constants scaled to yield the final specified guarantees.

With probability at least $1-\delta$, simultaneously for all $t\in[T]$,
\begin{align*}
|\hat N_t - N_t|
&= \frac{1}{p}\,\bigl|\tilde S_t - p N_t\bigr| \\[0.5ex]
&\le \frac{1}{p}
  \Bigl(
       C_2 \frac{\log^{3/2}T}{\varepsilon_{\mathrm{dp}}}
            \log\frac{T}{\delta}
       \;+\;
       C\Bigl(
          \varepsilon_{\mathrm{dp}}\,p N_{\max}
          +
          \frac{1}{\varepsilon_{\mathrm{dp}}}\log\frac{T}{\delta}
       \Bigr)
  \Bigr).
\end{align*}

We now choose parameters.  
Set the DP noise parameter to match the target error:
\[
\varepsilon_{\mathrm{dp}} := \Theta(\varepsilon).
\]
This ensures that the bias term 
$\varepsilon_{\mathrm{dp}} p N_{\max}$ contributes at most 
$\Theta(\varepsilon p N_{\max})$.

Next choose the sampling rate
\[
p \;:=\; 
\Theta\!\left(
   \frac{\log^{3/2}T \,\log(T/\delta)}{\varepsilon^{2} N_{\max}}
\right).
\]
With this choice,
\[
\frac{1}{p}
  \Bigl(
      C_2 \frac{\log^{3/2}T}{\varepsilon_{\mathrm{dp}}}\log\frac{T}{\delta}
      +
      C \frac{1}{\varepsilon_{\mathrm{dp}}}\log\frac{T}{\delta}
  \Bigr)
\;\le\;
\varepsilon N_{\max},
\]
and the DP-bias term
\[
\frac{1}{p}\,\varepsilon_{\mathrm{dp}}\,p N_{\max}
\;\le\;
\varepsilon N_{\max}.
\]
Thus every error component is at most $\varepsilon N_{\max}$, giving (with appropriate per-term partitioning and scaling of $\delta$ and $\eps$, absorbing into constants)
\[
\max_{t\le T} |\hat N_t - N_t|
\;\le\;
\varepsilon N_{\max}.
\]

\paragraph{Space complexity.}
The sketch stores the sample $S_t$ together with $O(\log T)$ tree
mechanism counters.

From \cref{lem:robust-sampling}, with probability $1-\delta$,
\[
\max_{t\le T} S_t
\;\le\;
p N_{\max}
\;+\;
C\left(
      \varepsilon_{\mathrm{dp}}\,pN_{\max}
      + 
      \frac{1}{\varepsilon_{\mathrm{dp}}}\log\frac{T}{\delta}
\right).
\]
Under our parameter choices, this becomes
\[
\max_{t\le T} S_t
= 
O\!\left(
     \frac{1}{\varepsilon^{2}}
     \log^{3/2}T \cdot \log\frac{T}{\delta}
   \right).
\]
The $O(\log T)$ auxiliary state of the tree mechanism is asymptotically
dominated by this term.

Thus the total space is
\begin{equation} \label{spaceboundcardinality:eq}
O\!\left(
     \frac{1}{\varepsilon^{2}}
     \log^{3/2}T \cdot \log\frac{T}{\delta}
   \right),
\end{equation}
as claimed.
\end{proof}

\begin{remark}
One can obtain the prefix-max guarantee~\eqref{prefixmaxcardinality:eq} by
running in parallel dyadic-rate instances of the fixed-rate robust sketch
(\cref{alg:basic-card-fixedp}) with $p_i=2^{-i}$.
At each time, output the estimate from the \emph{largest} rate (smallest $i$)
whose instance has not been discarded.
An instance is discarded once its \emph{noisy} sample size exceeds a prescribed budget threshold.
This generic ``multi-rate'' approach incurs an extra $O(|\mathcal P|)$
factor in space (typically $|\mathcal P|=O(\log T)$), since it maintains
multiple sketches and DP trees.
We next present a sketch with organic rate adjustment that avoids this overhead.
\end{remark}

\section{Analysis of the adjustable-rate robust cardinality sketch} \label{sec:adjustable-robust-cardinality}

The robust adjustable-rate algorithm is given in \cref{alg:robust-card-adjustable}.
We fix $L=2$ for unit-level sensitivity, and set
\begin{align}
  k &= \Theta\!\left(\frac{1}{\varepsilon^2}\,\log^{3/2}T \cdot \log\frac{T}{\delta}\right),\qquad
  \alpha = \Theta\!\left(\frac{1}{\varepsilon_{\mathrm{dp}}}\,\log^{3/2}T \cdot \log\frac{T}{\delta}\right),\qquad
  \varepsilon_{\mathrm{dp}}=\Theta(\varepsilon). \label{adjustable-params:eq}
\end{align}
We show that with these parameter setting (with appropriate constants) 
\cref{alg:robust-card-adjustable} satisfies 
\cref{thm:robust-cardinality} with the prefix-max error guarantees \eqref{prefixmaxcardinality:eq}.

\subsection{Bounding the maximum sample size}

By \cref{lem:halving-steps}, with probability at least $1-\delta$ the algorithm
performs at most $O(\log T)$ rate-halving steps over the entire execution.
Hence the total number of calls to the tree mechanism is at most
$T_{\mathrm{tree}} = T + O(\log T)$.

From the tree-accuracy bound \cref{lem:tree-accuracy}, 
with probability at least $1-\delta$, it holds that
for all calls $t\in [T_{\mathrm{tree}}]$ 
\[
  |\tilde S_t - S_t|
  \;\le\;
  C_1 \frac{L}{\varepsilon_{\mathrm{dp}}}\,\log^{3/2}T_{\mathrm{tree}} \cdot \log\frac{T_{\mathrm{tree}}}{\delta} .
\]
All hidden constants are chosen so that $k\ge 4\alpha$ and the conditions of
\cref{lem:tree-accuracy} and \cref{lem:robust-sampling-multirate} hold.

\begin{claim}[Sample size remains bounded by $k$] \label{samplebound:claim}
    For appropriate setting of constants in \cref{adjustable-params:eq}, with probability at least $1-\delta$, for all $t\in[T]$, $S_t \leq k$.
\end{claim}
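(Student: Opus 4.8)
The plan is to argue that the algorithm never lets the sample size grow beyond $k$ because whenever the \emph{noisy} sample-size estimate $\tilde S_t$ gets close to $k$, the while-loop fires and halves the rate until $\tilde S_t \le k - \alpha$; since the noise $|\tilde S_t - S_t|$ is uniformly bounded by something comparable to $\alpha$ (by the tree-accuracy bound \cref{lem:tree-accuracy} with $L=2$ and the parameter choices in \eqref{adjustable-params:eq}), the true sample size $S_t$ is then pinned below $k$. I would first condition on the good event $\mathcal{E}_{\mathrm{tree}}$ that $|\tilde S_t - S_t| \le C_1 \frac{2}{\varepsilon_{\mathrm{dp}}} \log^{3/2} T_{\mathrm{tree}} \cdot \log \frac{T_{\mathrm{tree}}}{\delta}$ for \emph{all} tree calls $t \in [T_{\mathrm{tree}}]$, which holds with probability $\ge 1-\delta$ by \cref{lem:halving-steps} (bounding $T_{\mathrm{tree}} = T + O(\log T)$) and \cref{lem:tree-accuracy}. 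By the choice $\alpha = \Theta(\varepsilon_{\mathrm{dp}}^{-1} \log^{3/2} T \log(T/\delta))$ with a large enough hidden constant, on $\mathcal{E}_{\mathrm{tree}}$ we have $|\tilde S_t - S_t| \le \alpha/2$ (say) at every call.

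Next I would track how $S_t$ can change between two consecutive tree calls (equivalently, how much it can overshoot in a single step or single halving sub-step). In the $\mathsf{Insert}$/$\mathsf{Delete}$ branch, $S$ changes by at most $+1$ before $\textsc{UpdateAndReport}$ is invoked; in each iteration of the while-loop, the subsampling only \emph{decreases} $S$, and again a report is issued immediately afterward. So at the moment of every tree call we have the invariant that, just \emph{before} that call, $S$ differs from its value at the \emph{previous} call by at most $1$ (on an increment) or is no larger (on a halving). The argument is then: suppose toward contradiction that at some step $t$ we end with $S_t > k$. Consider the last tree call within step $t$; its reported value $\tilde S_t$ satisfies $\tilde S_t \ge S_t - \alpha/2 > k - \alpha/2$. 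If $\tilde S_t > k - \alpha$, the while-loop condition $\tilde S_t > k-\alpha$ is (still) true, so the loop would execute at least one more halving, contradicting that this was the last call of the step. Hence we must have $k - \alpha/2 < \tilde S_t \le k - \alpha$, which forces $\alpha < \alpha/2$ — impossible. Therefore $S_t \le k$.

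There is one subtlety I would make sure to handle carefully: the \emph{first} increment after a fresh halving (or the very first step), where $S$ could in principle jump from a value already near $k$. But between the previous tree call and the current one $S$ increased by at most $1$, and at the previous call we had (by the same contradiction argument applied inductively) either $\tilde S_{t-1} \le k - \alpha$, giving $S_{t-1} \le k - \alpha + \alpha/2 = k - \alpha/2$, so $S_t \le k - \alpha/2 + 1 \le k$ as long as $\alpha \ge 2$, which holds for our parameters. So the induction closes: at every tree call the post-loop value of $S$ is at most $k$, and since within a step $S$ only visits this value and intermediate values no larger, $S_t \le k$ for all $t \in [T]$.

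The main obstacle I expect is \textbf{not} the arithmetic but rather pinning down precisely the invariant maintained \emph{across} the extra tree calls generated by the halving loop — i.e., being careful that $u^{\mathrm{adjust}}$ updates are accounted for in $T_{\mathrm{tree}}$ (hence the $T + O(\log T)$ bound and the dependence on \cref{lem:halving-steps}), and that the tree-accuracy bound is invoked with the correct total number of calls so that the noise is uniformly $\le \alpha/2$ at \emph{every} report, including the intra-step ones. Once that bookkeeping is in place, the contradiction argument above is essentially immediate, and the constants in \eqref{adjustable-params:eq} are chosen exactly so that $k \ge 4\alpha$ and $|\tilde S_t - S_t| \le \alpha/2$ make the threshold logic consistent.
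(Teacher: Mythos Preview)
Your proposal is correct and uses essentially the same idea as the paper: condition on the tree-accuracy event so that $|\tilde S_t - S_t| \le \alpha$ at every call, and then observe that the while-loop exits only when $\tilde S_t \le k-\alpha$, which forces $S_t \le k$.

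That said, the paper's proof is a two-line direct argument, whereas you wrap the same observation in a contradiction and then add an inductive paragraph tracking $S$ across consecutive tree calls. That extra layer is unnecessary here: the claim concerns $S_t$ at the \emph{end} of each step $t$, and at that moment the while-loop has already terminated (so $\tilde S_t \le k-\alpha$ holds by construction). You never need to bound how much $S$ can overshoot mid-step or invoke the previous step's value, because the loop termination condition already pins down $\tilde S_t$ directly at the relevant moment. Your instinct to worry about ``intermediate'' values and loop-termination bookkeeping (via \cref{lem:halving-steps} controlling $T_{\mathrm{tree}}$) is sound, but the paper handles the halving-count bound separately in the preceding paragraph and then the claim itself reduces to the single implication $\tilde S_t \le k-\alpha \Rightarrow S_t \le \tilde S_t + \alpha \le k$.
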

\begin{proof}
Fix any time $t$.
If $\tilde S_t \le k-\alpha$, then by the tree accuracy bound
$S_t \le \tilde S_t+\alpha \le k$.
Otherwise, if $\tilde S_t > k-\alpha$, the algorithm repeatedly halves the
sampling rate and subsamples until the loop condition fails, i.e.,
until $\tilde S_t \le k-\alpha$, at which point the same argument applies.
\end{proof}

\subsection{Bounding the robust sampling error}

The Bernoulli sampling randomness in \cref{alg:robust-card-adjustable} can be alternatively specified by
independent seeds $R_u\sim U[0,1]$ associated with 
\emph{units} $u=(x,j)$ of key $x$ and an insertion interval $j$: The key is included in the sample $S_t$ for $t$ in which the unit is active ($t$ lies in the $j$th insertion interval of $x$) if and only if $p_t \geq R_u$.

\begin{lemma}[Multi-rate robust sampling]
\label{lem:robust-sampling-multirate}
There exists a universal constant $C>0$ such that for any
$\delta\in(0,1)$, with probability at least $1-\delta$ over the seeds
$(R_u)_u$, the randomness of the tree mechanism, and any internal randomness of
the adversary, we have simultaneously for all $t\in[T]$:
\[
  \bigl|S_{t} - p_t N_t\bigr|
  \;\le\;
  C\left(
    \varepsilon_{\mathrm{dp}}\,p_t N_t
    \;+\;
    \frac{1}{\varepsilon_{\mathrm{dp}}}\,\log\frac{T}{\delta}
  \right).
\]
\end{lemma}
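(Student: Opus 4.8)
The plan is to repeat the proof of \cref{lem:robust-sampling} almost verbatim, with the i.i.d.\ sampling bits replaced by the i.i.d.\ uniform seeds $(R_u)_u$ and the fixed rate $p$ replaced by the (transcript-determined) running rate $p_t$. First I would make the seed reformulation precise: a unit $u=(x,j)$ is \emph{active} at time $t$ iff $t$ lies in the $j$-th insertion interval of $x$, and in the coupling induced by $R_u\sim U[0,1]$ the unit lies in $S_t$ iff it is active and $p_t\ge R_u$. This coupling is faithful to \cref{alg:robust-card-adjustable}: at the insertion that opens interval $j$ under rate $p$, the event $\{R_u\le p\}$ has probability $p$, matching the $\mathrm{Bernoulli}(p)$ draw; and since the rate is non-increasing, a halving $p\to p/2$ retains a currently-sampled unit (one with $R_u\le p$) iff $R_u\le p/2$, an event of conditional probability $1/2$, matching the independent $\mathrm{Bernoulli}(1/2)$ subsampling. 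Re-insertions and deletions end the current interval (and its unit); a re-insertion additionally opens a fresh interval with an independent seed.

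Second, I would do the sensitivity bookkeeping needed to apply \cref{lem:tree-unit-dp} with $L=2$. Over its lifetime a single unit $u$ contributes to the stream fed to $\mathsf{Tree}$ (counting both the ordinary updates $u_t$ and the halving updates $u^{\mathrm{adjust}}$) at most one $+1$, at the insertion opening its interval if $R_u\le p$ there, and at most one $-1$: once a halving first pushes the running rate below $R_u$ the unit leaves the sample and, by monotonicity of the rate, never re-enters within that interval, so the unit incurs either a single drop-via-halving $-1$ or a single end-of-interval $-1$, never both. Hence each unit has $\ell_1$ contribution at most $2$, so the tree outputs $(\tilde S_t)$ — and, by post-processing, the entire transcript (noisy prefixes, induced update stream, adversary operations, and the realized rate sequence $(p_t)$) — are unit-level $\varepsilon_{\mathrm{dp}}$-DP with respect to the dataset $(R_u)_{u}$ of $d\le T$ i.i.d.\ $U[0,1]$ seeds.

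Third, I would invoke \cref{thm:DP-generalization} exactly as in \cref{lem:robust-sampling}. For each $t\in[T]$ define the transcript-determined query $h^{(t)}$ with per-unit components $h^{(t)}_u(r):=\mathbf 1\{u\text{ active at }t\}\cdot\mathbf 1\{r\le p_t\}\in[0,1]$, so that $h^{(t)}(S)=S_t$ while $\E_{Z\sim U[0,1]^d}[h^{(t)}(Z)]=p_t\sum_u\mathbf 1\{u\text{ active at }t\}=p_tN_t$ (the realized rate $p_t$ and activity pattern being fixed once we condition on the transcript). Applying \cref{thm:DP-generalization} with $\xi=2\varepsilon_{\mathrm{dp}}$ yields, with probability at least $1-\beta$,
\[
  |S_t-p_tN_t|\;\le\;(e^{2\varepsilon_{\mathrm{dp}}}-1)\,p_tN_t+\frac{2}{\varepsilon_{\mathrm{dp}}}\log\!\Bigl(1+\tfrac1\beta\Bigr).
\]
Using $e^{2\varepsilon_{\mathrm{dp}}}-1\le C'\varepsilon_{\mathrm{dp}}$ for $\varepsilon_{\mathrm{dp}}\le1$, a union bound over $t\in[T]$ with $\beta=\delta/T$, and absorbing constants give the claimed uniform-in-$t$ bound.

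The step I expect to be the main obstacle is the second one: arguing carefully that the adaptively scheduled, data-dependent halvings do not inflate the per-unit sensitivity, and that the halving-induced updates $u^{\mathrm{adjust}}$ are tracked in the same unit-level ledger as the ordinary updates — so that $L=2$ still suffices and \cref{lem:tree-unit-dp} applies unchanged. The remaining ingredients are a routine re-run of the fixed-rate argument with $U[0,1]$ seeds in place of $\mathrm{Bernoulli}(p)$ bits.
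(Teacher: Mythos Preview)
Your proposal is correct, and your sensitivity bookkeeping (the step you flagged as the main obstacle) is sound: under the seed coupling, each unit contributes at most one $+1$ at its opening insertion and at most one $-1$ (either at the first halving that pushes the rate below its seed, or at the end of its interval, but never both, by monotonicity of the rate), so $L=2$ suffices and \cref{lem:tree-unit-dp} applies to the full stream fed to $\mathsf{Tree}$, including the halving updates.

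The paper's proof takes a slightly different route on the third step. Rather than defining a single query $h^{(t)}$ that uses the realized rate $p_t$, it defines queries $h_{t,p}$ for \emph{every} pair $(t,p)\in[T]\times\mathcal P$ where $\mathcal P$ is the dyadic rate grid, applies the generalization theorem to each, and takes a union bound over all $T|\mathcal P|$ pairs; it then invokes \cref{lem:halving-steps} to bound $|\mathcal P|=O(\log T)$. Your approach folds the rate into the query by observing that $p_t$ is itself determined by the DP transcript (halvings are triggered by the noisy $\tilde S_t$), so $h^{(t)}$ is a bona fide DP output and the union bound runs only over $t\in[T]$. This is slightly more streamlined --- it avoids the $|\mathcal P|$ factor inside the log (cosmetic, since $\log(T|\mathcal P|)=O(\log T)$) and does not need the halving-count bound inside this lemma's proof. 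The paper's route is a bit more modular: it treats the rate as a fixed parameter and keeps the sampling concentration analysis decoupled from the rate-scheduling logic.
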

\begin{proof}
Let $\mathcal P := \{p_0, p_0/2, p_0/4,\ldots, p_{\min}\}$ be the (random) set of
sampling rates used by the algorithm, where $p_{\min}$ is the smallest rate
reached with probability at least $1-\delta$.

We index \emph{units} by $u=(x,j)$ (key $x$ and insertion interval $j$) and
associate with each unit an independent seed $R_u\sim U[0,1]$.
Let $S=(R_u)_{u\in U}\in[0,1]^U$ denote the dataset of seeds.

For each time $t\in[T]$ and rate $p\in\mathcal P$, define the query
\begin{equation}
\label{eq:htp-union}
  h_{t,p}(S)
  :=
  \sum_{u\in U} \mathbf{1}\{u\text{ is active at time }t\}\,\mathbf{1}\{R_u \le p\}.
\end{equation}
Then $h_{t,p}(S)=S_t$ when $p=p_t$, and for all $(t,p)$,
\[
\E[h_{t,p}(S)] = pN_t, \qquad 0\le h_{t,p}(S)\le N_t.
\]
Moreover, $h_{t,p}$ has Hamming sensitivity~$1$ with respect to $S$.

Let $\mathcal M$ denote the full interactive mechanism consisting of the tree
mechanism and the adversary generating the update stream.
By \cref{lem:tree-unit-dp} (with $L=2$) and post-processing, $\mathcal M$ is
$\varepsilon_{\mathrm{dp}}$-DP with respect to unit-level adjacency on $S$,
even under fully adaptive updates.

Applying the DP generalization theorem (\cref{thm:DP-generalization}) to each
fixed query $h_{t,p}$ and taking a union bound over all
$(t,p)\in[T]\times\mathcal P$, we obtain that with probability at least
$1-\delta$,
\begin{equation}
\label{eq:gen-union-tp}
  \bigl|h_{t,p}(S) - pN_t\bigr|
  \le
  C\left(
      \varepsilon_{\mathrm{dp}}\,pN_t
      + \frac{1}{\varepsilon_{\mathrm{dp}}}
        \log\frac{T|\mathcal P|}{\delta}
    \right)
  \qquad\forall(t,p)\in[T]\times\mathcal P .
\end{equation}

In particular, at each time $t$ the algorithm’s actual sample size satisfies
$S_t = h_{t,p_t}(S)$ for its current rate $p_t\in\mathcal P$, and the bound
applies to $(t,p_t)$.

Finally, by \cref{lem:halving-steps}, with probability at least $1-\delta$ the
algorithm performs at most $O(\log T)$ rate halvings, and hence
$|\mathcal P|=O(\log T)$. Substituting this into
\eqref{eq:gen-union-tp} yields the stated bound.
\end{proof}

\subsection{Relating sampling rate to cardinality}

\begin{lemma}[Rate at phase boundaries is $\Theta(k/\max_{t'\le t} N_{t'})$]
\label{lem:pt-phase}
Run \cref{alg:robust-card-adjustable} with sample budget $k$ and margin
$\alpha$, and let $\mathcal P=\{p_0,p_0/2,p_0/4,\ldots\}$ be its dyadic rate
grid.  Let $0=t_0<t_1<\cdots<t_m\le T$ be the (random) times at which the
algorithm performs a rate decrease (the ``adjustments''); that is, at each
$t_j$ the condition $\tilde S_{t_j}>k-\alpha$ triggers and the algorithm
updates $p\leftarrow p/2$ and subsamples the maintained set.
Let $p^{(j)}$ denote the rate \emph{after} the adjustment at time $t_j$
(with $p^{(0)}:=p_0$), and let
\[
  M_j \;:=\; \max_{t\le t_j} N_t
\]
be the prefix-max active cardinality up to the end of time $t_j$.

Assume the tree accuracy event holds:
\begin{equation}
\label{eq:tree-good}
  \forall t\in[T],\qquad |\tilde S_t-S_t|\le \alpha,
\end{equation}
and assume the robust sampling bound holds uniformly for all
$(t,p)\in[T]\times\mathcal P$:
\begin{equation}
\label{eq:robust-sampling-good}
  \forall (t,p)\in[T]\times\mathcal P,\qquad
  |S_{t,p}-pN_t|
  \le
  C\!\left(\varepsilon_{\mathrm{dp}}\,pN_t
        + \frac{1}{\varepsilon_{\mathrm{dp}}}\log\frac{T|\mathcal P|}{\delta}\right),
\end{equation}
for a universal constant $C>0$.

Then there exist absolute constants $0<c<C'<\infty$ such that, on the event
\eqref{eq:tree-good}--\eqref{eq:robust-sampling-good}, for every adjustment
index $j\in\{1,\ldots,m\}$ we have
\begin{equation}
\label{eq:pt-phase-sandwich}
  c\,\frac{k}{M_j}
  \;\le\;
  p^{(j-1)}
  \;\le\;
  C'\,\frac{k}{M_j}.
\end{equation}
In words: \emph{the rate used immediately \emph{before} an adjustment is
within constant factors of $k$ divided by the current prefix maximum.}
\end{lemma}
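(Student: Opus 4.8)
The plan is to prove both inequalities in \eqref{eq:pt-phase-sandwich} deterministically, conditioned on the two good events \eqref{eq:tree-good} and \eqref{eq:robust-sampling-good}, by first pinning the \emph{actual} sample size at the instant just before the $j$-th halving and then converting this into two-sided control on $p^{(j-1)}N_t$ for every $t\le t_j$ via \eqref{eq:robust-sampling-good}. Write $E:=\tfrac{1}{\varepsilon_{\mathrm{dp}}}\log\tfrac{T|\mathcal P|}{\delta}$ for the additive term in \eqref{eq:robust-sampling-good}; under \eqref{adjustable-params:eq} the constant in $k$ can be taken large enough and $\varepsilon_{\mathrm{dp}}$ a small enough multiple of $\varepsilon$ that $C\varepsilon_{\mathrm{dp}}\le\tfrac12$ and $CE\le\tfrac{k}{4}$, and we also use $k\ge4\alpha$. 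The one structural fact I extract from \cref{alg:robust-card-adjustable} is: immediately before any halving the actual sample size lies in $(k/2,\,k+1]$. The lower endpoint holds because the halving is triggered by a tree report exceeding $k-\alpha$ at that point, so by \eqref{eq:tree-good} the true size exceeds $k-2\alpha\ge k/2$; the upper endpoint holds because within a cascade of halvings in one step the size only decreases, between consecutive \textsc{Inc}/\textsc{Delete} operations it moves by at most one, and the end-of-step size never exceeds $k$ (this is \cref{samplebound:claim}, whose proof uses only \eqref{eq:tree-good}). Finally, in the seed formulation of \cref{lem:robust-sampling-multirate}, this pre-halving sample is exactly $\{u\ \text{active at }t_j:\ R_u\le p^{(j-1)}\}$, i.e.\ it equals $S_{t_j,p^{(j-1)}}$ in the notation of \eqref{eq:robust-sampling-good}.

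\textbf{Lower bound.} Let $S^\star\in(k/2,k+1]$ be the actual sample size just before the $j$-th halving, so $S^\star=S_{t_j,p^{(j-1)}}$. Applying the upper tail of \eqref{eq:robust-sampling-good} at $(t_j,p^{(j-1)})$ and $C\varepsilon_{\mathrm{dp}}\le\tfrac12$,
\[
  \tfrac{k}{2}<S^\star=S_{t_j,p^{(j-1)}}\le(1+C\varepsilon_{\mathrm{dp}})\,p^{(j-1)}N_{t_j}+CE\le 2\,p^{(j-1)}N_{t_j}+\tfrac{k}{4},
\]
so $p^{(j-1)}N_{t_j}>\tfrac{k}{8}$, and since $N_{t_j}\le M_j$ we get $p^{(j-1)}\ge\tfrac18\cdot\tfrac{k}{M_j}$ (in particular $M_j>0$, so there is no division issue), which is the left inequality with $c=1/8$.

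\textbf{Upper bound.} The key point is that the actual sampling rate is non-increasing in time and equals $p^{(j-1)}$ just before the $j$-th halving, hence the actual rate at time $t$ is $\ge p^{(j-1)}$ for every $t\le t_j$; by monotonicity of the seed threshold this forces $S_{t,p^{(j-1)}}$ to be at most the actual sample size at time $t$, which is $\le k$ for $t<t_j$ (\cref{samplebound:claim}) and $\le k+1$ just before the halving at $t_j$. Feeding this into the \emph{lower} tail of \eqref{eq:robust-sampling-good} at $(t,p^{(j-1)})$ and using $1-C\varepsilon_{\mathrm{dp}}\ge\tfrac12$ and $CE\le\tfrac{k}{4}$,
\[
  \tfrac12\,p^{(j-1)}N_t-\tfrac{k}{4}\;\le\;(1-C\varepsilon_{\mathrm{dp}})\,p^{(j-1)}N_t-CE\;\le\;S_{t,p^{(j-1)}}\;\le\;k+1,
\]
so $p^{(j-1)}N_t\le C'k$ for an absolute constant $C'$ and all $t\le t_j$ (taking $k$ above a small absolute constant). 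Maximizing over $t\le t_j$ gives $p^{(j-1)}M_j\le C'k$, i.e.\ $p^{(j-1)}\le C'\cdot\tfrac{k}{M_j}$, the right inequality.

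The main obstacle is not the arithmetic — which is short once the good events hold — but the bookkeeping at adjustment steps: making rigorous that (i) just before each halving the actual sample size lies in $(k/2,k+1]$ even when several halvings cascade within a single stream step (the size only decreases during a cascade, and ordinary \textsc{Inc}/\textsc{Delete} operations change it by at most one), and (ii) the actual rate is $\ge p^{(j-1)}$ for every $t\le t_j$, which is what legitimizes the monotonicity domination $S_{t,p^{(j-1)}}\le |S_t|\le k$ used in the upper bound. Both follow from the facts that rates in \cref{alg:robust-card-adjustable} change only by halving and from the argument behind \cref{samplebound:claim}.
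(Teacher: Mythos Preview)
Your proof is correct. The lower bound matches the paper's argument essentially line for line.

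For the upper bound you take a genuinely different and cleaner route. The paper analyzes the state \emph{after} the $j$-th halving: it uses the while-loop exit condition to get $S_{t_j,p^{(j)}}\le k$, deduces $p^{(j)}N_{t_j}\le O(k)$ via \eqref{eq:robust-sampling-good}, so $p^{(j-1)}=2p^{(j)}\le O(k/N_{t_j})$, and then has to argue $N_{t_j}=\Theta(M_j)$ with the one-line justification ``otherwise the dyadic rate preceding the adjustment would already violate the trigger condition.'' Your argument bypasses this detour entirely. You use rate monotonicity ($p_t\ge p^{(j-1)}$ for all $t\le t_j$) together with the seed-threshold formulation to get the domination $S_{t,p^{(j-1)}}\le S_{t,p_t}=|S_t|\le k{+}1$ uniformly over $t\le t_j$ (the $k{+}1$ from \cref{samplebound:claim} plus the single-update increment); then one application of the lower tail of \eqref{eq:robust-sampling-good} at the maximizer of $N_t$ gives $p^{(j-1)}M_j\le O(k)$ directly. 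This is more elementary, handles cascading halvings within a single stream step transparently, and is in fact the mechanism the paper's one-line justification alludes to but does not make explicit. Your careful bookkeeping of the $(k/2,k{+}1]$ sandwich on the pre-halving sample size and of the rate monotonicity is exactly what is needed to make this rigorous.
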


\begin{proof}
Fix an adjustment time $t_j$ and let $p:=p^{(j-1)}$ be the rate in force
immediately before the adjustment at time $t_j$.

\paragraph{Lower bound on $p$.}
Since $t_j$ triggers an adjustment, we have $\tilde S_{t_j}>k-\alpha$.
Using \eqref{eq:tree-good}, the true sample size under rate $p$ satisfies
\[
  S_{t_j,p} \;=\; S_{t_j} \;\ge\; \tilde S_{t_j}-\alpha \;>\; k-2\alpha.
\]
Apply \eqref{eq:robust-sampling-good} to $(t_j,p)$:
\[
  S_{t_j,p}
  \;\le\;
  p\,N_{t_j}
  + C\!\left(\varepsilon_{\mathrm{dp}}\,pN_{t_j}
        + \frac{1}{\varepsilon_{\mathrm{dp}}}\log\frac{T|\mathcal P|}{\delta}\right).
\]
Rearranging (and taking $\varepsilon_{\mathrm{dp}}$ small enough so that
$1-C\varepsilon_{\mathrm{dp}}\ge 1/2$) yields
\[
  p\,N_{t_j}
  \;\ge\;
  \tfrac12\,S_{t_j,p}
  - O\!\left(\frac{1}{\varepsilon_{\mathrm{dp}}}\log\frac{T}{\delta}\right).
\]
With the chosen parameters ($k\gg
\frac{1}{\varepsilon_{\mathrm{dp}}}\log(T/\delta)$ and $\alpha=o(k)$), the
right-hand side is $\Omega(k)$, and hence $p\,N_{t_j}\ge \Omega(k)$.
Since $N_{t_j}\le M_j$, this implies
\[
  p \;\ge\; c\,\frac{k}{M_j}
\]
for a suitable absolute constant $c>0$.

\paragraph{Upper bound on $p$.}
Consider the moment \emph{right after} the adjustment at time $t_j$.
The algorithm updates the sampling rate $p\leftarrow p/2$ (so the new rate is
$p^{(j)}=p/2$) and subsamples the maintained set accordingly.  It then issues
the bulk update to the tree mechanism and obtains a refreshed reported value
$\tilde S_{t_j}^{\mathrm{post}}$.
By definition of the while-loop trigger, the algorithm exits the adjustment
step only once $\tilde S_{t_j}^{\mathrm{post}}\le k-\alpha$.  On the event
\eqref{eq:tree-good}, this implies that the corresponding true sample size at
rate $p^{(j)}$ satisfies $S_{t_j,p^{(j)}}\le k$ (absorbing constant slack into
$k$).

Applying \eqref{eq:robust-sampling-good} to $(t_j,p^{(j)})$ gives
\[
  p^{(j)} N_{t_j}
  \;\le\;
  S_{t_j,p^{(j)}}
  + C\!\left(\varepsilon_{\mathrm{dp}}\,p^{(j)}N_{t_j}
        + \frac{1}{\varepsilon_{\mathrm{dp}}}\log\frac{T|\mathcal P|}{\delta}\right)
  \;\le\;
  k + o(k),
\]
and hence $p^{(j)}N_{t_j}\le O(k)$.  Since $p^{(j)}=p/2$, this implies
$p\le O(k/N_{t_j})$.

Finally, because rates are dyadic and an adjustment is triggered only when
$pN_{t_j}=\Omega(k)$, the active cardinality at the trigger time must satisfy
$N_{t_j}=\Theta(M_j)$ up to constant factors; otherwise the dyadic rate
preceding the adjustment would already violate the trigger condition.
Since always $N_{t_j}\le M_j$, we conclude
\[
  p \;\le\; C'\,\frac{k}{M_j}
\]
after adjusting constants.
  This completes the proof of
\eqref{eq:pt-phase-sandwich}.

\end{proof}

\begin{corollary}[Rate within constant factors of $k/\max_{t'\le t}N_{t'}$]
\label{cor:pt-alltimes}
Let the setup be as in \cref{lem:pt-phase}.  For $j\in\{0,1,\ldots,m\}$ define
the $j$-th \emph{phase} to be the time interval
\[
  \{t: t_j < t \le t_{j+1}\},
  \qquad\text{where we set } t_{m+1}:=T.
\]
(Thus the rate is constant throughout each phase.)
Let $p_t$ denote the rate used by the algorithm at time $t$ and let
$M_t:=\max_{t'\le t}N_{t'}$.

On the event \eqref{eq:tree-good}--\eqref{eq:robust-sampling-good}, there
exist absolute constants $0<c<C<\infty$ such that for all $t\in[T]$,
\begin{equation}
\label{eq:pt-alltimes-sandwich}
  c\,\frac{k}{M_t}
  \;\le\;
  p_t
  \;\le\;
  C\,\frac{k}{M_t}.
\end{equation}
Consequently,
\[
  p_t
  \;=\;
  \Theta\!\left(\frac{k}{\max_{t'\le t}N_{t'}}\right)
  \;=\;
  \Theta\!\left(
    \frac{\log^{3/2}T \,\log(T/\delta)}
         {\varepsilon^{2}\max_{t'\le t}N_{t'}}
  \right),
\]
after substituting the choice
$k=\Theta(\varepsilon^{-2}\log^{3/2}T\log(T/\delta))$.
\end{corollary}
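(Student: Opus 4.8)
The plan is to establish $M_t := \max_{t'\le t}N_{t'} = \Theta(k/p_t)$ for all $t\in[T]$, which is equivalent to \eqref{eq:pt-alltimes-sandwich}; substituting $k=\Theta(\varepsilon^{-2}\log^{3/2}T\log(T/\delta))$ then gives the displayed $\Theta(\cdot)$ form. Throughout we condition on the events \eqref{eq:tree-good}--\eqref{eq:robust-sampling-good}, which hold with probability $1-\delta$ by \cref{lem:tree-accuracy}, \cref{lem:robust-sampling-multirate}, and \cref{lem:halving-steps}, and under which $|\mathcal P| = O(\log T)$ and (as in the setup of \cref{lem:pt-phase}) each adjustment performs a single halving, so $p^{(\ell)} = p^{(\ell-1)}/2$. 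Conceptually, \cref{lem:pt-phase} already pins down the rate in terms of the prefix-maximum \emph{at an adjustment boundary}, so it remains to show that $M_t$ cannot drift by more than a constant factor \emph{within} a single phase.

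For the \textbf{upper bound} $M_t = O(k/p_t)$, fix $t$ in phase $\ell$ (so $p_t=p^{(\ell)}$) and any $t'\le t$ lying in phase $\ell'\le\ell$. Inside phase $\ell'$ no adjustment fires before its right endpoint, so at every interior step the while-loop guard failed, i.e.\ $\tilde S_{t'}\le k-\alpha$, whence $S_{t',p^{(\ell')}}\le k$ by \eqref{eq:tree-good}; feeding this into \eqref{eq:robust-sampling-good} (using $1-C\varepsilon_{\mathrm{dp}}\ge 1/2$ and $\varepsilon_{\mathrm{dp}}^{-1}\log(T|\mathcal P|/\delta)=o(k)$, which hold under \eqref{adjustable-params:eq}) yields $p^{(\ell')}N_{t'} = O(k)$; the phase's right endpoint inherits the same bound up to an additive $1$, since one insertion or deletion changes $N$ by at most one. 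Because the rates form a non-increasing dyadic sequence, $p^{(\ell')} = 2^{\ell-\ell'}p^{(\ell)}\ge p_t$, so $N_{t'} = O(k/p^{(\ell')})\le O(k/p_t)$; maximizing over $t'\le t$ gives $M_t = O(k/p_t)$.

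For the \textbf{lower bound} $M_t = \Omega(k/p_t)$ when $t$ lies in a phase $\ell\ge 1$: that phase opens with an adjustment at time $t_\ell$, so immediately before it $\tilde S_{t_\ell}>k-\alpha$, hence $S_{t_\ell,p^{(\ell-1)}}>k-2\alpha$ by \eqref{eq:tree-good}; applying \eqref{eq:robust-sampling-good} to $(t_\ell,p^{(\ell-1)})$ and using $\alpha=o(k)$, $\varepsilon_{\mathrm{dp}}^{-1}\log(T|\mathcal P|/\delta)=o(k)$ gives $p^{(\ell-1)}N_{t_\ell}=\Omega(k)$, and since $p_t=p^{(\ell)}=p^{(\ell-1)}/2$ we get $p_t N_{t_\ell}=\Omega(k)$. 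As $t\ge t_\ell$ this yields $M_t\ge N_{t_\ell}=\Omega(k/p_t)$. Combined with the upper bound, \eqref{eq:pt-alltimes-sandwich} holds for every $t$ past the first rate halving. The initial phase $\ell=0$ needs only a minor separate remark: the upper bound above applies verbatim, and the matching lower bound holds once $M_t=\Omega(k/p_0)$; the residual sub-regime (the true count has never exceeded $O(k/p_0)$) is the trivial one, where the claim follows from the normalization of the input rate $p_0$ or is absorbed into the lower-order additive slack of \cref{thm:robust-cardinality}.

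The main obstacle is the upper bound on $M_t$: one must rule out that the prefix-maximum was already large in an \emph{earlier} phase, when the algorithm sampled at a \emph{coarser} rate with larger sample-size headroom $k/p^{(\ell')}$. This is precisely where the two structural facts are used together — the monotone dyadic shrinkage $p^{(\ell')}\ge p_t$ for $\ell'\le\ell$, and the per-phase ceiling $p^{(\ell')}N_{t'}=O(k)$ enforced by the noisy adjustment trigger — so that a coarse-rate step can never certify more active keys than the current fine rate would permit. The rest is bookkeeping: the $\pm 1$ handling of phase endpoints, the union over $(t,p)\in[T]\times\mathcal P$ already performed inside \eqref{eq:robust-sampling-good}, and tracking constants to turn the resulting $\Omega(k)$ and $O(k)$ into a single two-sided pair $0<c<C<\infty$.
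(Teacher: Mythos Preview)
Your proof is correct. It takes a somewhat more explicit route than the paper's.

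The paper's argument is a two-liner that leans entirely on \cref{lem:pt-phase}: apply that lemma at the \emph{next} adjustment boundary $t_{j+1}$ to obtain $p^{(j)}=\Theta(k/M_{j+1})$, then assert without further justification that ``the prefix maximum cannot grow by more than a constant factor within a phase without triggering an adjustment; hence $M_{j+1}\le 2M_t$,'' and substitute $M_{j+1}\in[M_t,2M_t]$.

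Your upper bound is the content that makes that one-line assertion rigorous, and you structure it differently: rather than comparing $M_t$ to $M_{j+1}$, you bound $N_{t'}$ directly for \emph{every} $t'\le t$ by observing that whichever phase $\ell'$ contains $t'$, the (failed) adjustment trigger plus tree accuracy plus robust sampling forces $p^{(\ell')}N_{t'}=O(k)$, and then you exploit rate monotonicity $p^{(\ell')}\ge p_t$ to convert this into $N_{t'}=O(k/p_t)$. This is arguably cleaner than the paper's approach because it avoids the intermediate step of first pinning down the rate at the phase endpoint and then arguing backward; it also makes transparent why the conclusion holds uniformly over \emph{all} earlier phases, not just the current one. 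Your lower bound is essentially the ``lower bound on $p$'' half of the proof of \cref{lem:pt-phase}, re-derived inline at $t_\ell$.

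Both proofs treat the initial phase $\ell=0$ loosely (the two-sided sandwich genuinely cannot hold while $M_t\ll k/p_0$); you at least flag this explicitly and correctly note where the slack is absorbed in the final theorem.
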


\begin{proof}
Fix $t\in[T]$ and let $j$ be the (unique) phase index such that $t_j<t\le
t_{j+1}$.  By definition of the algorithm, the sampling rate is constant
during phase $j$, hence $p_t=p^{(j)}$.

Apply \cref{lem:pt-phase} to the \emph{next} adjustment boundary $t_{j+1}$
(if $j=m$, interpret $t_{m+1}=T$ and note that no further decrease occurs, so
the same upper/lower bounds hold with $M_{m+1}:=\max_{t'\le T}N_{t'}$).
\cref{lem:pt-phase} gives constants $c_0,C_0$ such that
\[
  c_0\,\frac{k}{M_{j+1}}
  \;\le\;
  p^{(j)}
  \;\le\;
  C_0\,\frac{k}{M_{j+1}},
\]
where $M_{j+1}=\max_{t'\le t_{j+1}}N_{t'}$.

Since $t\le t_{j+1}$, we have $M_t\le M_{j+1}$.
Moreover, the prefix maximum cannot grow by more than a constant factor
within a phase without triggering an adjustment; hence $M_{j+1}\le 2M_t$.
Substituting $M_{j+1}\in[M_t,2M_t]$ into the bound on $p^{(j)}$ yields
\eqref{eq:pt-alltimes-sandwich} with $c=c_0/2$ and $C=2C_0$.
\end{proof}

\subsection{Wrapping up the proof}
\begin{proof}[Proof of \cref{thm:robust-cardinality} with \eqref{prefixmaxcardinality:eq}]
We analyze \cref{alg:robust-card-adjustable} under the parameter choice
\eqref{adjustable-params:eq}.  Let $\mathcal E$ be the event that
(i) the tree-mechanism accuracy bound holds uniformly, i.e.
$|\tilde S_t-S_t|\le \alpha$ for all $t\in[T]$, and
(ii) the multi-rate robust sampling bound of
\cref{lem:robust-sampling-multirate} holds uniformly for all $t\in[T]$.
By allocating failure probability $\delta/2$ to each event and applying a
union bound, \cref{lem:tree-accuracy} and
\cref{lem:robust-sampling-multirate} imply $\Pr[\mathcal E]\ge 1-\delta$. We condition on
$\mathcal E$ for the rest of the analysis.

\paragraph{Space bound.}
By \cref{samplebound:claim}, on $\mathcal E$ we have $S_t\le k$ for all
$t\in[T]$.  The sketch stores the current sample set (size at most $k$) plus $O(\log T_{\mathrm{tree}})=O(\log T)$ counters for the tree mechanism, so the total space is
$O(k)$, i.e.
$O\!\left(\varepsilon^{-2}\log^{3/2}T\cdot\log(T/\delta)\right)$.

\paragraph{Error decomposition.}
Fix a time $t\in[T]$.  The released estimate is
$\hat N_t=\tilde S_t/p_t$, where $p_t$ is the current sampling rate at time
$t$.  Write the estimation error as
\[
  |\hat N_t-N_t|
  \;=\;
  \frac{1}{p_t}\,|\tilde S_t-p_tN_t|
  \;\le\;
  \frac{1}{p_t}\Bigl(
     |\tilde S_t-S_t|
     +
     |S_t-p_tN_t|
  \Bigr),
\]
where $S_t$ denotes the true (random) sample size at time~$t$.
On $\mathcal E$, the tree term is bounded by $|\tilde S_t-S_t|\le \alpha$,
and \cref{lem:robust-sampling-multirate} gives
\[
  |S_t-p_tN_t|
  \;\le\;
  C\left(
    \varepsilon_{\mathrm{dp}}\,p_tN_t
    +
    \frac{1}{\varepsilon_{\mathrm{dp}}}\log\frac{T}{\delta}
  \right).
\]
Substituting,
\begin{equation}
\label{eq:final-error-bound}
|\hat N_t-N_t|
  \;\le\;
  \underbrace{\frac{\alpha}{p_t}}_{\text{tree noise}}
  \;+\;
  \underbrace{C\varepsilon_{\mathrm{dp}}\,N_t}_{\text{DP bias}}
  \;+\;
  \underbrace{\frac{C}{\varepsilon_{\mathrm{dp}}\,p_t}\log\frac{T}{\delta}}_{\text{generalization term}}.
\end{equation}

\paragraph{Relating $p_t$ to the prefix maximum.}
Let $M_t:=\max_{t'\le t}N_{t'}$.  By \cref{cor:pt-alltimes}, on $\mathcal E$
the current rate satisfies
\begin{equation}
\label{eq:pt-vs-Mt}
  p_t \;=\; \Theta\!\left(\frac{k}{M_t}\right),
  \qquad\text{equivalently}\qquad
  \frac{1}{p_t}\;=\;\Theta\!\left(\frac{M_t}{k}\right).
\end{equation}

\paragraph{Finishing the bound.}
Plug \eqref{eq:pt-vs-Mt} into \eqref{eq:final-error-bound}.  First,
\[
  \frac{\alpha}{p_t}
  \;=\;
  \Theta\!\left(\alpha\cdot\frac{M_t}{k}\right).
\]
With $\alpha=\Theta\!\left(\varepsilon_{\mathrm{dp}}^{-1}\log^{3/2}T\log(T/\delta)\right)$
and $k=\Theta\!\left(\varepsilon^{-2}\log^{3/2}T\log(T/\delta)\right)$, and
$\varepsilon_{\mathrm{dp}}=\Theta(\varepsilon)$, this becomes
$\Theta(\varepsilon M_t)$.

Second,
\[
  \frac{1}{\varepsilon_{\mathrm{dp}}\,p_t}\log\frac{T}{\delta}
  \;=\;
  \Theta\!\left(
    \frac{M_t}{\varepsilon_{\mathrm{dp}}\,k}\log\frac{T}{\delta}
  \right)
  \;=\;
  O(\varepsilon M_t),
\]
since $k$ contains the factor $\log(T/\delta)$ and
$\varepsilon_{\mathrm{dp}}=\Theta(\varepsilon)$.

Finally, the middle term in \eqref{eq:final-error-bound} satisfies
$C\varepsilon_{\mathrm{dp}}N_t \le C\varepsilon_{\mathrm{dp}}M_t
=O(\varepsilon M_t)$.

Combining these bounds and absorbing constants into the choice of parameters
in \eqref{adjustable-params:eq}, we obtain that for all $t\in[T]$,
\[
  |\hat N_t-N_t|
  \;\le\;
  \varepsilon\,M_t
  \;=\;
  \varepsilon\max_{t'\le t}N_{t'}.
\]
This proves the prefix-max error guarantee \eqref{prefixmaxcardinality:eq} on
the event $\mathcal E$, which holds with probability at least $1-\delta$.
\end{proof}

\ignore{
\begin{proof}[Proof sketch of \cref{cor:prefix-max-robust}]
We run $\Theta(\log N_{\max})$ independent instances of the robust
sketch from \cref{thm:robust-cardinality}, each tuned to a
different cardinality scale.

For $j=0,1,\dots, J$ with $J=\Theta(\log N_{\max})$, define a scale
\[
  M_j := 2^{-j} N_{\max},
\]
and instantiate sketch $j$ with parameters
\[
  p_j \;=\;
  \Theta\!\left(
    \frac{\log^{3/2}T \log(T/\delta)}{\varepsilon^2 M_j}
  \right),
  \qquad
  \varepsilon_{\mathrm{dp},j} = \Theta\!\left(\frac{\varepsilon}{\log N_{\max}}\right),
\]
so that \cref{thm:robust-cardinality} guarantees, for sketch $j$,
\[
  |\hat N_t^{(j)} - N_t|
  \;\le\;
  \tfrac{\varepsilon}{2} M_j
  \qquad\text{for all }t\in[T]
\]
with failure probability at most $\delta / \log N_{\max}$.

At time $t$, let
\[
  M_t := \max_{t'\le t} N_{t'}
\]
denote the prefix maximum cardinality.  Choose the largest index $j(t)$
such that $M_{j(t)} \le M_t$ (i.e., $M_{j(t)}$ is the largest scale not
exceeding the current prefix maximum), and output
$\hat N_t := \hat N_t^{(j(t))}$.

For this choice of $j(t)$ we have $M_t \in [M_{j(t)}, 2M_{j(t)}]$, so
\[
  |\hat N_t - N_t|
  \;\le\;
  \tfrac{\varepsilon}{2} M_{j(t)}
  \;\le\;
  \varepsilon M_t.
\]
The per-sketch DP parameters are chosen so that, by basic composition,
the joint mechanism over all $j$ is still $O(\varepsilon)$-DP with
respect to the sampling bits, and hence the robust sampling and DP
generalization analysis applies as before.  The union bound over sketches
and time steps contributes only an additional $\log N_{\max}$ factor in
the polylogarithmic terms already present in \eqref{spaceboundcardinality:eq}.

Each sketch $j$ has expected sample size
$\Theta\!\bigl(\log^{3/2}T \log(T/\delta)/\varepsilon^{2}\bigr)$ during
the time intervals where $N_t \approx M_j$, and we use at most one such
sketch to form the estimate at any time.  Thus the total space is larger
than \eqref{spaceboundcardinality:eq} by at most a multiplicative
$\Theta(\log N_{\max})$ factor, as claimed.
\end{proof}

\eccomment{
There is an LLM proposed direct derivation in the appendix that might be easier to carry over for ReLU sums. Needs to be checked carefully.}

\eccomment{to do: cleaner way to adjust (decrease)  the sampling rate? to obtain \eqref{prefixmaxcardinality:eq}? Can we also change $T$ to $t$??  \todo{discuss}

There are easy ways to do it but they cost us another $\log T$ factor. Can this be avoided?

Brain dump:

One way to do this is $p\gets p/2$ and subsample with rate $1/2$ each time $\tilde S$ exceeds $k$.  But then the estimate changes and we might leak privacy each time. Need to check if this can be done without leakage.

Another way to do this is allow keys to have different sampling rates. Keep the sampling rate when first inserted and use inverse probability estimates. The reference sampling rate decreases each time the sample grows by $k$ but this is also a $\log T$ factor on storage.

A third way (simplest) if we pay the factor anyway is to run sketches with $p=1/2^i$ and discard a sketch that gets too large. For storage we don't pay more since the sample size is Geometric (except that we pay for $\log T$ the tree mechanisms storage) But we still pay for exposure of each bit. Perhaps if we use consistent sampling between the different sampling rates we do not pay the extra privacy?

A fourth way is (assuming that this can sacrifice only one key for each adjustment of $p$) to simply "pay" for sacrificed keys. This adds an additive bias to the sample size of $\log T$. And therefore a bias to the estimate  of $(\log T)/(p\eps)$ which is smaller than the estimation error.

}
}
\section{Robust Resettable Sum Analysis} \label{resettable:sum:sec}
\ignore{
\begin{theorem}[Robust Fixed-Rate Resettable Sum Sketch]\label{thm:robust:resettable:sum}
Given accuracy parameter $\eps > 0$ and failure probability $\delta > 0$, there exists an adversarially robust sketch for the sum $F = \sum_{x} v_x$ under a stream of $T$ insertions and reset operations that, with probability at least $1-\delta$, outputs an estimate $\hat F_t$ such that 
\begin{align}\label{eq:prefix:max:sum}
    |\hat F_t - F_t | \leq \eps \cdot \max_{t' \leq t} F_t, \ \  \textrm{ for all } t \in [T]
\end{align}
 The algorithm uses \[O\left(\frac{1}{\eps^2} \log^{11/2} (T) \cdot \log^2 \left(\frac{1}{\delta} \right) \right)\] bits of memory. 
 
\end{theorem}
}
As a stepping stone to the full algorithm which satisfies the prefix-max error guarantee, we first show that a slightly-simplified version of our algorithm achieves additive error $\eps \cdot \max_{t \in [T]} F_t$. 
\begin{lemma}\label{lem:max:error:sum}
    Given accuracy parameter $\eps > 0$ and failure probability $\delta > 0$, there exists an adversarially robust sketch for the sum $F = \sum_{x} v_x$ under a stream of $T$ insertions and reset operations that, with probability at least $1-\delta$, outputs an estimate $\hat F_t$ such that 
    \begin{align}\label{eq:max:error:sum}
    |\hat F_t - F_t| \leq \eps \cdot \max_{t \in [T]} F_t = \eps \cdot F_{\max} \ \ \textrm{ for all } t \in [T]
\end{align}

The algorithms uses 

\[O\left(\frac{1}{\eps^2} \log^{9/2}(T) \cdot \log^2 \left(\frac{1}{\delta} \right)\right)\]

bits of memory.
\end{lemma}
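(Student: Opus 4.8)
The plan is to follow the template of the fixed‑rate cardinality argument (\cref{lem:robust-sampling} and \cref{thm:robust-cardinality}), but with a decomposition that strips off a ``heavy'' part of the stream that is deterministically recoverable, leaving a bounded‑sensitivity ``light'' part as the only object the Binary Tree Mechanism must protect. First I would pass to the entry‑threshold reparametrization (\cref{def:entry-threshold}): fix the threshold $\tau$, introduce independent seeds $R_{x,j}\sim\operatorname{Exp}(1/\tau)$ indexed by units $u=(x,j)$ (key and reset epoch), noting that at most $T_{\mathrm{Inc}}$ units ever receive an increment and thus ``appear''. Put $B:=\tau\log(T/\delta)$ and let $\mathcal E$ be the event that every appearing unit has $R_{x,j}\le B$; since a unit's seed is first consulted only when that unit first appears, a standard stopping‑time union bound over the $\le T$ appearing units gives $\Pr[\mathcal E^{c}]\le T e^{-B/\tau}=\delta$. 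On $\mathcal E$, any key with $v_x^{(t)}>B$ is deterministically sampled with counter $c_x^{(t)}=v_x^{(t)}-R_{x,j}\in[v_x^{(t)}-B,\,v_x^{(t)}]$.

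On $\mathcal E$ I would split $F_t=D_t+P_t$ with $D_t=\sum_x\max(v_x^{(t)}-B,0)$ and $P_t=\sum_x\min(v_x^{(t)},B)$. Because every heavy key is sampled, the sketch recovers $D_t$ \emph{exactly} from the tracked true values of sampled keys, and $D_t$ is a deterministic function of the update stream, so it needs no protection and is released as‑is. The residual is estimated by $\hat P_t:=\sum_{x\in S_t}\min(c_x^{(t)}+\tau,B)$ (heavy keys contribute exactly $B=\min(v_x^{(t)},B)$; light keys contribute $\approx c_x^{(t)}+\tau$). The key structural observation is that, within any reset epoch, a unit's contribution to $\hat P_t$ is non‑decreasing (values only grow between resets) and bounded by $B$, then drops to $0$ at the next reset, so its \emph{total variation over the whole stream} is at most $2B$. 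The algorithm feeds $u_t=\tfrac1B(\hat P_t-\hat P_{t-1})$ to the tree mechanism, obtains noisy prefix sums $\tilde U_t$, and outputs $\tilde F_t=D_t+B\tilde U_t$.

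Next I would establish privacy and robust accuracy as in the cardinality analysis. For privacy: a unit here contributes over a contiguous block of updates rather than a single $\pm1$, but because its total rescaled contribution is $\le 2$, swapping one unit perturbs each tree‑node aggregate by $\le 2$ and the $\ell_1$‑sensitivity across the whole tree is still $O(L\log T)$ with $L=O(1)$; hence \cref{lem:tree-unit-dp} and \cref{lem:tree-accuracy} go through, giving $\varepsilon_{\mathrm{dp}}$‑unit‑DP for the full transcript (by post‑processing) and $|B\tilde U_t-\hat P_t|=O\!\big(\tfrac{B}{\varepsilon_{\mathrm{dp}}}\log^{3/2}T\log\tfrac T\delta\big)$ uniformly in $t$. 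For accuracy: mirroring \cref{lem:robust-sampling}, write $\tfrac1B\hat P_t=\sum_u h^{(t)}_u(R_u)$ with each $h^{(t)}_u$ a $[0,1]$‑valued function whose fresh‑seed mean equals $\tfrac1B\min(v_x^{(t)},B)$ up to a negligible truncation bias, apply the DP‑generalization theorem (\cref{thm:DP-generalization}) to each $h^{(t)}$, and union‑bound over $t$ to get $|\hat P_t-P_t|\le C\big(\varepsilon_{\mathrm{dp}}P_t+\tfrac{B}{\varepsilon_{\mathrm{dp}}}\log\tfrac T\delta\big)$. The same argument applied to the ``sample size'' query bounds $|S_t|=O(F_{\max}/\tau)+O(\tfrac1{\varepsilon_{\mathrm{dp}}}\log\tfrac T\delta)$ with probability $1-\delta$, even against an adaptive adversary.

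Combining, on $\mathcal E$ and these good events, $|\tilde F_t-F_t|\le|B\tilde U_t-\hat P_t|+|\hat P_t-P_t|\le O(\varepsilon_{\mathrm{dp}}F_{\max})+O(\tfrac{B}{\varepsilon_{\mathrm{dp}}}\log^{3/2}T\log\tfrac T\delta)$ since $P_t\le F_t\le F_{\max}$; setting $\varepsilon_{\mathrm{dp}}=\Theta(\varepsilon)$ and $\tau=\Theta(\varepsilon^2F_{\max}/\mathrm{polylog}(T/\delta))$ with the polylog chosen so both terms are $\le\varepsilon F_{\max}$ gives the error bound, and charging $O(\log T)$ bits for the identity, value, and counter of each of the $O(F_{\max}/\tau)$ sampled keys plus $O(\log T)$ tree counters gives the stated space. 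I expect the main obstacle to be the coupled design in the middle two steps: a single threshold $R_{x,j}$ controls both the sample‑membership indicator and the magnitude of the contribution, and it feeds many consecutive updates, so one must simultaneously arrange that (i) the heavy part is literally deterministic on $\mathcal E$, (ii) each unit's total contribution to the tree stream is $O(1)$ so the rescaled tree mechanism keeps its $\log^{3/2}T$ accuracy, and (iii) the rescaled $\hat P_t$ is of the exact additive per‑unit form the DP‑generalization theorem demands, up to only a negligible truncation bias — which is precisely what forces the modified tree‑mechanism analysis the paper alludes to.
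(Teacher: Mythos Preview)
Your architecture matches the paper's closely: same entry-threshold view, same heavy/light decomposition $F_t = D_t + P_t$ at level $B = \tau\log(T/\delta)$, same rescaled tree mechanism on $\hat P_t$ with the key observation that each unit's total $\ell_1$ contribution is at most $2$, and the same DP-generalization argument for $|\hat P_t - P_t|$. The parameter choices and space accounting are also in line.

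The genuine gap is your claim that ``the sketch recovers $D_t$ exactly from the tracked true values of sampled keys.'' The sketch does \emph{not} track true values: for a sampled key $x$ in epoch $j$ it stores only the counter $c_x = v_x - R_{x,j}$, because all increments that arrived before $v_x$ first crossed the (random, unobserved) threshold $R_{x,j}$ were discarded --- the key was not yet being tracked. There is no way to reconstruct $v_x$ from the sketch state in small space, so $D_t = \sum_x \max(v_x - B,0)$ is simply not computable by the algorithm. Your next sentence --- that $D_t$ is deterministic and ``needs no protection'' --- is true of $D_t$ itself, but the algorithm cannot release what it cannot compute. The paper fixes this by releasing the sketch-computable proxy $\hat D_t := \sum_{x\in S_t}\max(0,c_x+\tau-B)$ in its place, so the output is $\tilde F_t = B\tilde U_t + \hat D_t$ rather than $B\tilde U_t + D_t$, and then adds a third error term $|\hat D_t - D_t|$ handled by a separate Bernstein argument (\cref{cor:deterministic:part:error}). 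A related minor slip: heavy keys do not in general contribute exactly $B$ to your $\hat P_t$, since $\min(c_x + \tau, B) < B$ whenever $R_{x,j} > \tau$, which can certainly happen on $\mathcal E$; this is harmless for the final bound but means the heavy portion of $\hat P_t$ is not deterministic as you wrote.
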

Then, in \cref{sec:prefix:max:robust:sum} we use sketch switching to obtain our final algorithm, which satisfies the prefix-max error guarantee $\eps \cdot \max_{t' \leq t} F_t$ as in \cref{thm:robust:resettable:sum}.

\subsection{Standard Sketch for Resettable Sum} \label{sec:non-robustsum}

We begin by analyzing the properties of the standard non-robust sketch, framing it in a way that facilitates the robust extension in subsequent sections.

The standard (non-robust) resettable sum algorithm is  described in \cref{alg:sh-resettable}. This is a specialization of the ReLU sketch of \citet{CohenCormodeDuffield2012} (with a unit update
ReLU version first proposed in
\citep{GemullaLehnerHaas2006,GemullaLehnerHaas2007}).
The sampling threshold $\tau$ plays the role of an inverse sampling rate.  The sketch applies ``sample-and-hold'' 
\citet{GibbonsMatias1998,EstanVarghese2002}, where updates to keys that are in the sample are implemented by updating the stored count.

\ignore{
\begin{algorithm2e}[h]
\caption{Resettable Sum Sketch (fixed $\tau$) \citep{CohenCormodeDuffield2012}}
\label{alg:sh-resettable}
\DontPrintSemicolon

\KwIn{Stream of updates on keys $i$ of two types:
      $\mathsf{Inc}(i,\Delta)$ with $\Delta \ge 0$ and $\mathsf{Reset}(i)$;
      sampling threshold $\tau$}
\textbf{State:} Sample $S$ of keys; each key $i \in S$ has counter $c_i$.\;
\KwOut{At each time $t$, an estimate of the resettable streaming sum}
\BlankLine

\ForEach{update on key $i$}{
  \uIf{$\mathsf{Inc}(i,\Delta)$ with $\Delta > 0$}{
    \tcp{(Implicitly) update the true value $v_i \gets v_i + \Delta$}
    \If{$i \in S$}{
      $c_i \gets c_i + \Delta$\;
    }
    \Else(\tcp*[h]{case $i \notin S$}){
      sample $r \sim \Exp[\tau^{-1}]$ \tcp*{mean $\tau$, rate $1/\tau$}
      \lIf(\tcp*[h]{start tracking $i$}){$r < \Delta$}{
        $S \gets S \cup \{i\}$; $c_i \gets \Delta - r$
      }
    }
  }
  \uElseIf{$\mathsf{Reset}(i)$}{
    \tcp{(Implicitly) reset the true value $v_i \gets 0$}
    \lIf{$i \in S$}{
      $S \gets S \setminus \{i\}$ \tcp*[h]{drop $i$ from the sample}
    }
    \tcp{if $i \notin S$, nothing to do}
  }
  \Return{$\displaystyle \sum_{i \in S} (\tau + c_i)$} \tcp*{estimate of the sum of current key values}
}
\end{algorithm2e}
}

\subsubsection{Properties for non-adaptive streams}

We will work with an equivalent view of \cref{alg:sh-resettable} that casts it as a deterministic function of the input stream and a set of fixed, independent random variables (entry thresholds), with independent contributions to the estimate and sample.

\begin{definition}[Entry-Threshold Formulation]
\label{def:entry-threshold}
For each key $x$ and each reset epoch $j$, let $R_{x,j} \sim \operatorname{Exp}(1/\tau)$ be an independent \emph{entry threshold}. 
Let $v_x^{(t)}$ denote the active value of key $x$ at time $t$ (accumulated since the last reset). The state of the sketch at time $t$ is determined strictly by the pairs $\{(v_x^{(t)}, R_{x,j})\}$:

\begin{enumerate}
    \item \textbf{Sample Indicators:} The presence of key $x$ in the sample is given by the indicator variable $I_x^{(t)}$. Key $x$ is sampled if and only if its value exceeds its active threshold:
    \[
        I_x^{(t)} \;:=\; \mathbb{I}\left\{ v_x^{(t)} > R_{x,j} \right\}.
    \]
    The total sample size is the sum of these independent indicators:
    \[
        |S_t| \;=\; \sum_x I_x^{(t)}.
    \]

    \item \textbf{Sum Estimator:} The contribution of key $x$ to the sum estimate is defined as:
    \[
        Z_x^{(t)} \;:=\; I_x^{(t)} \cdot \left( v_x^{(t)} + \tau - R_{x,j} \right).
    \]
    The total estimate is the sum of these independent contributions:
    \[
        \hat{F}_t \;=\; \sum_x Z_x^{(t)}.
    \]
\end{enumerate}
\end{definition}

In this formulation of the algorithm, both the sample size and the estimate are expressed as sums of mutually independent random variables (conditioned on the input stream). In particular, the contribution of unit $(x, j, R_{x,j})$ depends only on its own value and the threshold $R_{x,j}$, not on any other unit.

\paragraph{Concentration Properties.}

We now derive the accuracy and space bounds. To prepare for the robust setting (where sensitivity must be bounded), we analyze the algorithm conditioned on the high-probability event that the thresholds are bounded.

\begin{lemma}[Bounded Threshold Event]
\label{lem:bounded-thresholds}
Let $T$ be the stream length and $\delta \in (0,1)$. Define $L := \tau \log(T/\delta)$. Let $\mathcal{E}$ be the event that for all keys $x$ and epochs $j$ active at any time $t \le T$, the entry threshold satisfies $R_{x,j} \le L$.
Then $\Pr[\mathcal{E}] \ge 1 - \delta$.
\end{lemma}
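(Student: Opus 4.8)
The plan is a one-line union bound, with a small amount of bookkeeping to keep the number of events bounded. First I would pin down which thresholds can possibly matter. In the entry-threshold formulation of \cref{def:entry-threshold}, the indicator of a unit $(x,j)$ is $I_x^{(t)}=\mathbb{I}\{v_x^{(t)}>R_{x,j}\}$, which is deterministically $0$ as long as $v_x^{(t)}=0$, i.e.\ before the first increment of epoch $j$. Thus a unit $(x,j)$ only ever influences the sketch once it is ``opened'' by an increment operation, and distinct opened units are opened by distinct increment operations. Consequently, on any stream of length $T$ at most $T_{\mathrm{Inc}}\le T$ units are active at some time $t\le T$; moreover, since $I_x^{(t)}$ carries no information about $R_{x,j}$ until the unit is opened, at the opening moment $R_{x,j}$ is still a fresh $\operatorname{Exp}(1/\tau)$ variable, independent of the transcript so far (even against an adaptive adversary).

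Second, I would bound the per-unit tail. For an $\operatorname{Exp}(1/\tau)$ variable we have $\Pr[R_{x,j}>L]=e^{-L/\tau}$; substituting $L=\tau\log(T/\delta)$ gives $\Pr[R_{x,j}>L]=e^{-\log(T/\delta)}=\delta/T$. By the previous paragraph this bound holds, conditionally on the history, for the threshold revealed at each of the at-most-$T$ opening events.

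Finally, I would union-bound over the opening events. Let $B_i$ be the event that an $i$-th unit gets opened and its threshold exceeds $L$; then $\Pr[B_i]\le \mathbb{E}\bigl[\Pr[B_i\mid\mathcal{F}_i]\bigr]\le \delta/T$, where $\mathcal{F}_i$ is the transcript just before the $i$-th opening, so
\[
  \Pr[\overline{\mathcal{E}}]\;\le\;\sum_{i=1}^{T}\Pr[B_i]\;\le\;T\cdot\frac{\delta}{T}\;=\;\delta,
\]
which is exactly $\Pr[\mathcal{E}]\ge 1-\delta$. There is no genuine obstacle here; the only point requiring care is that the union bound must range over the at-most-$T$ units that are actually opened (each tied to its own increment), rather than over the a priori unbounded family $\{R_{x,j}\}_{x,j}$, and that an adaptive adversary cannot bias any individual threshold since that threshold is untouched by the sketch until its unit is opened.
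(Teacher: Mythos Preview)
Your proof is correct and follows the same approach as the paper: compute the exponential tail $\Pr[R_{x,j}>L]=e^{-L/\tau}=\delta/T$ and union-bound over the at most $T$ active units. Your version is actually more careful than the paper's one-line proof, since you explicitly justify why only $T$ thresholds matter and why each is a fresh $\operatorname{Exp}(1/\tau)$ sample even under adaptivity.
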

\begin{proof}
    For a single exponential variable $R \sim \operatorname{Exp}(1/\tau)$, $\Pr[R > L] = e^{-L/\tau} = \delta/T$. Taking a union bound over at most $T$ active epochs implies the claim.
\end{proof}

\begin{remark}[Stability of Heavy Keys]
Conditioned on the event $\mathcal{E}$, the algorithm exhibits a form of \emph{stability} for large values. Specifically, for any key with value $v_x > L$, the sampling indicator becomes deterministic ($I_x = 1$) because $R_{x,j} \le L < v_x$. 
Consequently, for any $v_x$, the dependence of the estimate on the specific value of $R_{x,j}$ is bounded by $L$ 
since $v_x \geq L \implies Z_x := v_x + \tau - R_{x,j}\in [v_x+\tau-L,v_x+\tau]$ and for $v_x \le L$, $Z_x \in [0, L]$. 
This bounded sensitivity property is key in the analysis of a robust version.
\end{remark}

\begin{lemma}[Conditional Concentration]
\label{lem:conditional-concentration}
Conditioned on the event $\mathcal{E}$ (i.e., assuming $R_{x,j} \le L$ for all active $x$), the following bounds hold for any fixed time $t$ with probability at least $1-\gamma$:

\begin{enumerate}
    \item \textbf{Accuracy:} The estimation error is bounded by:
    \[
        |\hat{F}_t - F_t| \;\le\; \sqrt{2 \tau F_t \log \tfrac{2}{\gamma}} \;+\; \frac{2}{3}L \log \tfrac{2}{\gamma}.
    \]
    \item \textbf{Space:} The sample size $|S_t|$ is bounded by:
    \[
        |S_t| \;\le\; \frac{F_t}{\tau} \;+\; \sqrt{\frac{2 F_t}{\tau} \log \tfrac{1}{\gamma}} \;+\; \frac{2}{3} \log \tfrac{1}{\gamma}.
    \]
\end{enumerate}
\end{lemma}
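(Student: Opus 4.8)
The plan is to express both quantities—the estimate $\hat F_t$ and the sample size $|S_t|$—as sums of independent, bounded random variables (conditioned on the stream and on the event $\mathcal E$), and then apply Bernstein's inequality to each. This is natural given the Entry-Threshold Formulation in \cref{def:entry-threshold}: at a fixed time $t$ we have $\hat F_t = \sum_x Z_x^{(t)}$ with $Z_x^{(t)} = I_x^{(t)}\cdot(v_x^{(t)}+\tau - R_{x,j})$, and these summands are mutually independent across the active keys $x$ since each depends only on its own threshold $R_{x,j}$. Conditioning on $\mathcal E$ only restricts the $R_{x,j}$ to lie in $[0,L]$, which preserves independence (the conditioning is coordinatewise) and makes each summand bounded: $Z_x^{(t)} \in [0, v_x^{(t)}+\tau]$ in general, and $Z_x^{(t)}\in[0,L]$ whenever $v_x^{(t)}\le L$, while for $v_x^{(t)}>L$ we have $Z_x^{(t)} = v_x^{(t)}+\tau-R_{x,j}$ deterministically (up to the additive $\pm L$ wiggle from $R_{x,j}$).

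For the accuracy bound, first I would compute the conditional mean: $\E[Z_x^{(t)}\mid\mathcal E]$. A short calculation with the exponential density truncated to $[0,L]$ shows $\E[I_x^{(t)}\cdot(v_x^{(t)}+\tau-R_{x,j})\mid\mathcal E] = v_x^{(t)}$ up to negligible (order $\delta/T$ relative) corrections from the truncation—indeed in the untruncated model $\E[Z_x^{(t)}] = v_x^{(t)}$ exactly (this is the standard unbiasedness of the sample-and-hold estimator, recorded in \cref{sec:non-robustsum}), and truncation changes each term by at most $O(L\cdot\delta/T)$, summing to a lower-order term absorbed into constants. Hence $\E[\hat F_t\mid\mathcal E]\approx F_t$. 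Then I would bound the conditional variance: $\Var(Z_x^{(t)}\mid\mathcal E) \le \E[(Z_x^{(t)})^2]\le (\text{range})\cdot\E[Z_x^{(t)}] \le (v_x^{(t)}+\tau)\cdot v_x^{(t)}$ when $v_x^{(t)}$ is small, but more carefully: for small keys the range is $L$ so the variance contribution is $\le L\,v_x^{(t)}$, and for heavy keys ($v_x^{(t)}>L$) the only randomness is $\tau-R_{x,j}$ with $\Var\le\tau^2$, contributing $\le\tau^2$ per key and $\le(F_t/L)\cdot\tau^2$ total. The dominant variance term is $\le \tau\, F_t$ (using $\sum_x v_x^{(t)} = F_t$ and $L=\tau\log(T/\delta)\ge\tau$). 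Feeding $\sigma^2 \le \tau F_t$ and almost-sure bound $b\le L$ (centered: $|Z_x^{(t)}-\E Z_x^{(t)}|\le L$ for small keys, $\le L$ for heavy keys too) into Bernstein's inequality $\Pr[|\hat F_t - \E\hat F_t|>s]\le 2\exp(-s^2/(2\sigma^2 + \tfrac23 b s))$ and inverting at confidence $\gamma$ gives exactly $s \le \sqrt{2\tau F_t\log(2/\gamma)} + \tfrac23 L\log(2/\gamma)$, as claimed.

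For the space bound the argument is the same but simpler: $|S_t| = \sum_x I_x^{(t)}$ is a sum of independent (conditionally on $\mathcal E$) Bernoulli-type indicators with $\E[I_x^{(t)}\mid\mathcal E] = \Pr[R_{x,j}<v_x^{(t)}\mid R_{x,j}\le L] \le (1-e^{-v_x^{(t)}/\tau})/(1-\delta/T) \approx v_x^{(t)}/\tau$, so $\E[|S_t|\mid\mathcal E]\le F_t/\tau$ (to leading order), each summand is in $\{0,1\}$ so $b=1$ and $\sigma^2\le\E[|S_t|]\le F_t/\tau$, and Bernstein gives the stated $|S_t|\le F_t/\tau + \sqrt{2(F_t/\tau)\log(1/\gamma)} + \tfrac23\log(1/\gamma)$.

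The main obstacle I expect is the bookkeeping around conditioning on $\mathcal E$: strictly, conditioning on $\mathcal E$ couples the thresholds of all active epochs (it is the intersection of the events $\{R_{x,j}\le L\}$), but since these events are on disjoint coordinates, conditioning still leaves the truncated thresholds independent—this needs to be stated carefully. The second delicate point is controlling the bias introduced by truncation: one must verify that replacing $\E[Z_x^{(t)}]$ by $v_x^{(t)}$ (and $\E[I_x^{(t)}]$ by $v_x^{(t)}/\tau$) costs only a lower-order additive term that can be folded into the $\tfrac23 L\log(2/\gamma)$ slack. Beyond that, the computation is a routine application of Bernstein's inequality with correctly identified variance proxies.
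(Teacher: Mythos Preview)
Your proposal is correct and follows essentially the same approach as the paper: split into heavy keys ($v_x^{(t)}>L$, where $I_x=1$ deterministically under $\mathcal E$) and light keys, establish the uniform range bound $|Z_x^{(t)}-v_x^{(t)}|\le L$, combine with the variance proxy $\sum_x\Var[Z_x^{(t)}]\le\tau F_t$, and apply Bernstein; the space bound is handled analogously via $|S_t|=\sum_x I_x^{(t)}$. Your treatment is in fact slightly more careful than the paper's on one point: you explicitly flag that conditioning on $\mathcal E$ could in principle shift the mean away from $v_x^{(t)}$ and break independence, and you argue both effects are negligible---the paper's proof simply centers at $v_x^{(t)}$ and cites the unconditional variance bound ``from standard properties'' without discussing the truncation bias.
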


\begin{proof}
    \textbf{Accuracy:} 
    Let $E_x = Z_x - v_x$ be the error contribution of key $x$. We analyze the range of $E_x$ given $R_{x,j} \le L$:
    \begin{itemize}
        \item If $v_x > L$: Then $I_x = 1$ is deterministic. The estimator is $Z_x = v_x + \tau - R_{x,j}$. The error is $E_x = \tau - R_{x,j}$. Since $0 \le R_{x,j} \le L$, we have $|E_x| \le \max(\tau, L)$.
        \item If $v_x \le L$: 
        \begin{itemize}
            \item If sampled ($I_x=1$): $E_x = \tau - R_{x,j}$, so $|E_x| \le L$.
            \item If not sampled ($I_x=0$): $Z_x = 0$, so $E_x = -v_x$. Since $v_x \le L$, $|E_x| \le L$.
        \end{itemize}
    \end{itemize}
    Thus, conditioned on $\mathcal{E}$, the variable $Z_x$ has a deviation range bounded by $L$. Using the variance bound $\sum \Var[Z_x] \le \tau F_t$ (from standard properties) and Bernstein's inequality yields the result.

    \textbf{Space:}
    The sample size is $|S_t| = \sum_x I_x$. This is a sum of independent Bernoulli trials. Under $\mathcal{E}$, if $v_x > L$, then $I_x = 1$ deterministically (variance is 0 for these terms). For $v_x \le L$, $I_x$ remains random. In all cases, the terms are independent and bounded by 1. The mean is $\mu = \sum (1-e^{-v_x/\tau}) \le F_t/\tau$. Bernstein's inequality yields the bound.
\end{proof}

\begin{corollary}[Summary of Trade-offs]
\label{cor:robust-summary}
Fix accuracy $\varepsilon$ and failure probability $\delta$. Set $\tau = \Theta(\varepsilon^2 F_{\max} / \log^2(T/\delta))$. 
With probability $1 - \delta$, the algorithm maintains a sketch of size $O(\varepsilon^{-2} \log(T/\delta))$ and provides an estimate $\hat{F}_t$ satisfying:
\[
    |\hat{F}_t - F_t| \;\le\; \varepsilon F_{\max}.
\]
\end{corollary}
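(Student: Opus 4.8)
The plan is to derive \cref{cor:robust-summary} as a direct consequence of the two conditional concentration bounds of \cref{lem:conditional-concentration}, the bounded-threshold event of \cref{lem:bounded-thresholds}, and a single substitution of the parameter $\tau$. First I would split the failure budget into two halves. I would invoke \cref{lem:bounded-thresholds} at failure probability $\delta/2$ to pass to the event $\mathcal{E}$ on which every entry threshold $R_{x,j}$ of a unit active at some time $t \le T$ satisfies $R_{x,j} \le L := \tau\log(2T/\delta)$, and then condition on $\mathcal{E}$ for the rest of the argument. Under this conditioning the entry-threshold view of \cref{def:entry-threshold} still writes $\hat F_t$ and $|S_t|$ as sums of mutually independent per-unit contributions, now each of range at most $L$ (resp.\ contained in $[0,1]$), which is exactly the structure \cref{lem:conditional-concentration} requires.

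Next I would, working on $\mathcal{E}$, apply the accuracy and space bounds of \cref{lem:conditional-concentration} at each fixed time $t$ with confidence parameter $\gamma := \delta/(2T)$ and take a union bound over $t \in [T]$; together with $\mathcal{E}$ this makes all of the bounds below hold simultaneously for every $t \in [T]$ with probability at least $1-\delta$. Throughout I would use $F_t \le F_{\max}$ and $\log(2/\gamma) = \log(4T/\delta) = \Theta(\log(T/\delta))$, so that $L = \Theta(\tau\log(T/\delta))$.

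Then I would carry out the substitution $\tau = \Theta\!\left(\varepsilon^2 F_{\max}/\log^2(T/\delta)\right)$. For accuracy, the sampling-variance term of \cref{lem:conditional-concentration} becomes $\sqrt{2\tau F_t \log(2/\gamma)} \le \sqrt{2\tau F_{\max}\log(2/\gamma)} = O\!\left(\varepsilon F_{\max}/\sqrt{\log(T/\delta)}\right)$, while the conditioning term becomes $\tfrac23 L\log(2/\gamma) = \Theta\!\left(\tau\log^2(T/\delta)\right) = \Theta(\varepsilon^2 F_{\max})$; taking the hidden constant in $\tau$ small enough makes their sum at most $\varepsilon F_{\max}$, giving $|\hat F_t - F_t| \le \varepsilon F_{\max}$ for all $t$. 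For space, the sample-size bound of \cref{lem:conditional-concentration} is dominated by its mean term $F_t/\tau \le F_{\max}/\tau$, the remaining $\sqrt{(F_{\max}/\tau)\log(1/\gamma)}$ and $\log(1/\gamma)$ terms being of strictly lower order for $\varepsilon < 1$; substituting $\tau$ gives $F_{\max}/\tau = O(\varepsilon^{-2}\polylog(T/\delta))$, so at every time the sketch stores $\tilde O(\varepsilon^{-2})$ keys with one counter each, which is the claimed space bound.

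The step I would treat as the main (and essentially the only) obstacle is the choice of $\tau$, since the concentration work itself is already packaged in \cref{lem:conditional-concentration}. Conditioning on $\mathcal{E}$ leaves each active unit with a residual deviation of magnitude up to $L = \Theta(\tau\log(T/\delta))$, so the Bernstein tail of $\hat F_t$ picks up an additive term of order $\tau\log^2(T/\delta)$ on top of the usual $\sqrt{\tau F_{\max}\log(T/\delta)}$ coming from the variance; it is this $L$-dependent term that forces $\tau$ to be a polylogarithmic factor below the naive $\varepsilon^2 F_{\max}$ one would pick from variance alone, and in turn inflates $F_{\max}/\tau$ by that polylogarithmic factor. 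Balancing these two error contributions against the target $\varepsilon F_{\max}$ is therefore the one place I would compute carefully; everything else is a union bound and an algebraic substitution.
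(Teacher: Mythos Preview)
Your proposal is correct and follows exactly the approach the paper intends: the corollary is stated without proof as an immediate consequence of \cref{lem:bounded-thresholds} and \cref{lem:conditional-concentration}, and you have spelled out precisely that derivation (split the failure budget, condition on $\mathcal{E}$, apply the concentration lemma pointwise with $\gamma=\delta/(2T)$, union bound over $t$, substitute $\tau$).

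One thing worth flagging, which your own analysis in fact uncovers: with $\tau = \Theta(\varepsilon^2 F_{\max}/\log^2(T/\delta))$ the dominant space term is $F_{\max}/\tau = \Theta(\varepsilon^{-2}\log^2(T/\delta))$, not the $O(\varepsilon^{-2}\log(T/\delta))$ printed in the corollary. You correctly write ``$O(\varepsilon^{-2}\polylog(T/\delta))$'' but then call this ``the claimed space bound,'' which glosses over the exponent mismatch. This is a minor inconsistency in the statement of the corollary itself (the extra $\log$ factor comes, as you rightly identify in your last paragraph, from the $L$-dependent Bernstein term forcing $\tau$ down by a $\log(T/\delta)$ factor), not an error in your argument.
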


In the analysis of our robust algorithm, we will further need to bound the error contribution of the keys $x$ with $v_x > L$. Let $D_t = \sum_{x} \max(v_x - L, 0)$ and $\hat D_t = \sum_{x} \max(v_x + \tau - R_{x,j} - L, 0)$. 

\begin{corollary}[Error of deterministic parts]\label{cor:deterministic:part:error}
    Fix accuracy $\eps >0$ and failure probability $\delta > 0$. Let $\tau = O\left(\eps^2 M/\log^2(T/\delta) \right)$. Then, with probability $1-\delta$, we have $|\hat D_t - D_t | \leq \eps \sqrt{M \cdot F_t} + \eps^2 M$ at all $t \in [T]$.
\end{corollary}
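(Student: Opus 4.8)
The plan is to follow the template of the proof of \cref{lem:conditional-concentration}: condition on the bounded-threshold event $\mathcal{E}$ of \cref{lem:bounded-thresholds}, write $\hat D_t-D_t$ as a sum of \emph{independent}, bounded, small-variance per-key contributions, and apply Bernstein's inequality together with a union bound over $t\in[T]$. Throughout the analysis the stream is fixed, so each $v_x^{(t)}$ and the active reset epoch $j=j(x,t)$ are deterministic and the only randomness lies in the entry thresholds $\{R_{x,j}\}$. To handle the conditioning on $\mathcal{E}$ cleanly, replace each $R_{x,j}$ by its truncation $R'_{x,j}:=\min(R_{x,j},L)$; the $R'_{x,j}$ stay mutually independent and, on $\mathcal{E}$, agree with the $R_{x,j}$ on every active unit, so it suffices to bound $\sum_x G_x^{(t)}$ where
\[
  G_x^{(t)}\;:=\;\max\!\bigl(v_x^{(t)}+\tau-R'_{x,j}-L,\,0\bigr)-\max\!\bigl(v_x^{(t)}-L,\,0\bigr),\qquad \hat D_t-D_t=\sum_x G_x^{(t)}.
\]

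\paragraph{Per-key structure.}
Three facts drive the argument. First (\emph{support}), $G_x^{(t)}=0$ unless $v_x^{(t)}>L-\tau$, and since each such ``heavy'' key contributes more than $L-\tau$ to $F_t$, their number is at most $F_t/(L-\tau)=O(F_t/L)$. Second (\emph{uniform bound}), since $u\mapsto(u)_+$ is $1$-Lipschitz, $|G_x^{(t)}|\le|\tau-R'_{x,j}|\le L$ using $R'_{x,j}\le L$ and $\tau\le L$. Third (\emph{variance at the $\tau$-scale}), for $v_x^{(t)}\ge L$ a short computation gives the identity $G_x^{(t)}=\tau-\min\!\bigl(R'_{x,j},\,v_x^{(t)}-L+\tau\bigr)$, i.e.\ $\tau$ minus a truncated exponential, so $\Var[G_x^{(t)}]\le\E[(R'_{x,j})^2]\le\E[R_{x,j}^2]=2\tau^2$; and for $v_x^{(t)}\in(L-\tau,L)$ one has $G_x^{(t)}=\bigl(v_x^{(t)}-L+\tau-R'_{x,j}\bigr)_+\in[0,\tau)$, so again $\Var[G_x^{(t)}]\le\tau^2$. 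The same computations give $0\le\E[G_x^{(t)}]\le\tau$. Combining the first and third facts and using $L=\tau\log(T/\delta)$, the total variance is $\sum_x\Var[G_x^{(t)}]=O(\tau F_t/\log(T/\delta))$ and the bias is $\bigl|\E[\hat D_t-D_t]\bigr|\le\tau\cdot O(F_t/L)=O(F_t/\log(T/\delta))$.

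\paragraph{Assembling the bound.}
Apply Bernstein with variance proxy $\sigma^2=O(\tau F_t/\log(T/\delta))$ and range $b=O(L)$, allocating $\delta/2$ to $\mathcal{E}$ and $\delta/(2T)$ per time step so that a union bound keeps the total failure probability at $\delta$. On that event, for all $t$,
\[
  |\hat D_t-D_t|\;\le\;\bigl|\E[\hat D_t-D_t]\bigr|+O\!\bigl(\sqrt{\sigma^2\log(T/\delta)}\bigr)+O\!\bigl(b\log(T/\delta)\bigr)\;=\;O\!\Bigl(\tfrac{F_t}{\log(T/\delta)}\Bigr)+O\!\bigl(\sqrt{\tau F_t}\bigr)+O\!\bigl(\tau\log^2(T/\delta)\bigr).
\]
Substituting $\tau=\Theta\!\bigl(\eps^2M/\log^2(T/\delta)\bigr)$: the middle term becomes $\Theta\!\bigl(\eps\sqrt{MF_t}/\log(T/\delta)\bigr)\le\eps\sqrt{MF_t}$, the last term becomes $\Theta(\eps^2M)$, and the first term, bounded via $F_t\le M$ (hence $F_t\le\sqrt{MF_t}$), is $O(\sqrt{MF_t}/\log(T/\delta))\le\eps\sqrt{MF_t}$ in the relevant parameter range; choosing the constant hidden in $\tau$ appropriately yields $|\hat D_t-D_t|\le\eps\sqrt{MF_t}+\eps^2M$ for all $t\in[T]$.

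\paragraph{Main obstacle.}
The two non-mechanical points are: (a) recognizing that for heavy keys $G_x^{(t)}$ equals $\tau$ minus a truncated exponential, which is what pins the per-key variance at $O(\tau^2)$ rather than the naive $O(L^2)$ — without this the square-root term would be inflated by a $\log(T/\delta)$ factor; and (b) controlling the bias $\E[\hat D_t-D_t]$, which is nonnegative by convexity of $(\cdot)_+$ and only of order $F_t/\log(T/\delta)$ because only the $O(F_t/L)$ keys with $v_x^{(t)}=\Theta(L)$ contribute non-negligibly, and then absorbing this term into $\eps\sqrt{MF_t}$ using $F_t\le M$ and the choice of $\tau$. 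Everything else is Bernstein plus the truncation bookkeeping already used in \cref{lem:conditional-concentration}.
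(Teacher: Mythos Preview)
Your approach matches the paper's: per-key error decomposition, range bound $|G_x|\le L$ and per-key variance $O(\tau^2)$, then Bernstein with a union bound over $t\in[T]$. You are in fact more careful than the paper in two places --- you work with the truncated thresholds $R'=\min(R,L)$ (a cleaner way to preserve independence than conditioning on $\mathcal E$), and you explicitly identify and bound the bias $\E[\hat D_t-D_t]=O(F_t/\log(T/\delta))$, which the paper's proof silently ignores when invoking Bernstein as though $\hat D_t$ were unbiased for $D_t$; your honest caveat ``in the relevant parameter range'' (needing $F_t\le M$ and $\eps\gtrsim 1/\log(T/\delta)$ to absorb the bias into $\eps\sqrt{MF_t}$) flags a gap that the paper shares but does not acknowledge.
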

\begin{proof}
This follows by almost the same argument as above.

Consider $x$ such that $v_x > L$. Let $Y_x = v_x + \tau - R_{x,j} - L$. The estimated contribution of key $x$ is $Z_x = \max(Y_x, 0)$, so the error is precisely $E_x = \tau - R_{x,j}$ when $Y_x > 0$. Otherwise, if $Y_x < 0$, the error is $E_x = v_x - L$. Now, recall that under event $\mathcal{E}$, we have that $0 \leq R_{x,j} \leq L$. Since we assumed that $Y_x = v_x + \tau - R_{x,j} - L < 0$, we have that $E_x \leq R_{x,j} - \tau$. Thus, it follows that $|E_x| \leq \max(\tau, L)$ in both cases. Since there can be at most $F_t/L$ keys $x$ with $v_x > L$ and $\Var[Z_x] \leq \tau^2$, it follows that $\sum_{x: v_x > L} \Var[Z_x] \leq \tau F_t$. Finally, for keys $x$ satisfying $v_x \leq L$, we have $Y_x = v_x + \tau - R_{x,j} - L \leq \tau - R_{x,j}$, so $|E_x| \leq \max(\tau, L)$ and $\Var[Z_x] \leq \tau v_x$.

By standard properties, we again conclude that $\sum_{x} \Var[Z_x]  \leq 2\tau \cdot F_t$, and applying Bernstein's inequality yields 

\[|\hat D_t - D_t| \leq \sqrt{2\tau F_t \log \tfrac{2}{\gamma}} + \frac{2}{3} L \log \tfrac{2}{\gamma}\]

Then, by setting $\gamma = \delta/T$, we can union bounding over all $T$ time steps, and obtain

\[|\hat D_t - D_t| \leq \sqrt{2\tau F_t \log \tfrac{T}{\delta}} + \frac{2}{3} L \log \tfrac{T}{\delta}\]

Finally, since $L = \tau \log(T/\delta)$, for $\tau$ chosen as above, we have 

\[| \hat D_t - D_t | \leq \eps \cdot \sqrt{M \cdot F_t} + \eps^2 M\]
\end{proof}

\ignore{
**************

\eccomment{Eventually this subsection content mostly moves to appendix and we keep what is needed for the proofs in the main part.}
We review properties of \Cref{alg:sh-resettable} for non-adaptive streams.

\begin{remark}[Entry-threshold interpretation] \label{rem:entrythreshold}
\cref{alg:sh-resettable} has the following equivalent (but not efficiently implementable) formulation.  
For each key $x$, draw an infinite sequence of i.i.d.\ \emph{entry thresholds}
$R_{x,1}, R_{x,2}, \ldots \sim \Exp(1/\tau)$, and use the next unused threshold
after each \textsc{Reset}$(x)$ event.  Let $v_x$ denote the current value of
key $x$ at time $t$.  Then, at any time,
\[
    x \in S^{(t)}
    \quad\Longleftrightarrow\quad
    v_x > R_{x,j_x},
\]
where $R_{x,j_x}$ is the current active threshold for key $x$.  
In that case the stored counter satisfies
\[
    c^{(t)}_x = v_x - R_{x,j_x}.
\]

Equivalently, a key enters the sample exactly when its value first exceeds its
current threshold, and thereafter its counter tracks the amount by
which the value surpasses that threshold until the next reset.

Importantly, for a fixed update stream, the entire behavior of the sketch
is a deterministic function of these i.i.d.\ threshold sequences
$\{R_{x,j}\}$.  Moreover, at any time $t$, the contribution of key $x$ to sum estimate is $\mathbf{1}\{v_x > R_{x,j}\}\cdot (v_x-R_{x,j}+\tau)$ (where $j-1$ is the total number of $\textsc{Reset}(x)$ updates performed until time $t$), that is, it depends only on $R_{x,j}$.
\end{remark}

\begin{lemma}[Properties of \cref{alg:sh-resettable} 
  \citep{CohenCormodeDuffield2012,Cohen2018CapStatistics}]
\label{lem:sh-props}
Consider an application of the sampler \cref{alg:sh-resettable} with threshold $\tau>0$ to a non-adaptive stream and let the dataset $\{(x,v_x)\}$ ($v_x \geq 0$) be the aggregation of the stream.

For each key $x$, consider the indicator $\mathbf{1}\{x\in S\}$ and the estimator 
\[
   Z_x \;:=\; \mathbf{1}\{x\in S\}\,(\tau + c_x).
\]

\begin{enumerate}
\item \textbf{Independence.}
The indicators $\{\mathbf{1}\{x\in S\}\}_x$ are mutually independent and the estimators $\{Z_x\}_x$ are mutually independent, and for
each key $x$,
\[
   \Pr[x\in S] \;=\; 1 - e^{-v_x / \tau}.
\]

\item \textbf{Per-key estimator: distribution, unbiasedness, and variance}
The distribution of $Z_x$ is
\[
   \Pr[Z_x = 0] = e^{-v_x/\tau}, \qquad
   f_{Z_x}(z)
   = \frac{1}{\tau}\exp\!\Bigl(-\frac{v_x+\tau - z}{\tau}\Bigr)
   \quad\text{for } z\in[\tau,\tau+v_x],
\]
and $Z_x$ is zero outside $\{0\}\cup[\tau,\tau+v_x]$.  This estimator is
unbiased:
\[
   \E[Z_x] \;=\; v_x ,
\]
with variance
\[
   \Var[Z_x]
   \;=\; \Var\!\bigl[\mathbf{1}\{x\in S\}(\tau + c_x)\bigr]
   \;=\; \tau^{2}\bigl(1-e^{-v_x /\tau}\bigr).
\]
\end{enumerate}
\end{lemma}

\begin{lemma}[Error for Resettable Sum]
\label{lem:sh-error-conc}
Let $F_t = \sum_x v_x^{(t)}$ be the true sum at time $t$. The estimator $\hat{F}_t$ is an unbiased estimate of $F_t$ with variance bounded by
\[
    \Var[\hat{F}_t] \;\le\; \tau F_t.
\]
\end{lemma}
\begin{proof}
    Let $Z_x$ be the estimator for key $x$ as defined in \cref{lem:sh-props}. The total estimator is $\hat{F}_t = \sum_x Z_x$. Since keys are independent,
\[
    \Var[\hat{F}_t] 
    \;=\; \sum_x \Var[Z_x] 
    \;=\; \sum_x \tau^2 \bigl(1 - e^{-v_x^{(t)}/\tau}\bigr).
\]
Using the inequality $1 - e^{-z} \le z$ for $z \ge 0$, we have $\tau^2 (1 - e^{-v_x/\tau}) \le \tau^2 (v_x/\tau) = \tau v_x$. Summing over all $x$ yields $\Var[\hat{F}_t] \le \tau F_t$.
\end{proof}

\begin{lemma}[Concentration with Bounded Thresholds]
\label{lem:sh-bounded-conc}
Let $L := \tau \log(T/\delta')$. Assume the event $\mathcal{E}$ holds, where all generated thresholds satisfy $R_{x,j} \le L$ (which occurs with probability $1-\delta'$ for sufficiently large $\delta'$). 

Conditioned on $\mathcal{E}$, for any time $t$, the absolute estimation error is bounded with probability $1-\gamma$ by:
\[
    |\hat{F}_t - F_t| 
    \;\le\; 
    \sqrt{2 \tau F_t \log\frac{2}{\gamma}} 
    \;+\; 
    \frac{2}{3} L \log\frac{2}{\gamma}.
\]
\end{lemma}

\begin{proof}
The estimation error for a single key is $E_x = Z_x - v_x$. We analyze the range of $E_x$ under the assumption $\mathcal{E}$ (that is, $R \le L$):
\begin{enumerate}
    \item \textbf{Case $v_x > L$ (Heavy Hitter):} Since $R \le L < v_x$, the key is definitely sampled ($x \in S$). The estimator is $Z_x = v_x + \tau - R$. The error is $E_x = \tau - R$. Since $0 \le R \le L$, we have $|E_x| \le \max(\tau, L)$.
    \item \textbf{Case $v_x \le L$ (Small Key):} 
    \begin{itemize}
        \item If sampled: $E_x = \tau - R$, so $|E_x| \le \max(\tau, L)$.
        \item If not sampled: $Z_x = 0$, so $E_x = -v_x$. Since $v_x \le L$, we have $|E_x| \le L$.
    \end{itemize}
\end{enumerate}
In all cases, the random variable $Z_x$ deviates from its mean $v_x$ by at most $M = \max(\tau, L) = L$ (assuming $\log(T/\delta') \ge 1$).

We apply Bernstein's inequality with variance $\sigma^2 \le \tau F_t$ and range parameter $M = L$. 
\[
    \Pr\left[ |\hat{F}_t - F_t| \ge \varepsilon \right] 
    \;\le\; 
    2 \exp\left( - \frac{\varepsilon^2}{2\tau F_t + 2L\varepsilon/3} \right).
\]
Solving for $\varepsilon$ with failure probability $\gamma$ yields the stated bound.
\end{proof}

\begin{lemma}[Sample-size concentration for \cref{alg:sh-resettable}]
\label{lem:sh-samplesize-conc}
Fix a (non-adaptive) resettable stream and a time $t$.  Let $S_t$ be the set
of sampled keys maintained by \cref{alg:sh-resettable} at time $t$,
and let $v_x^{(t)}\ge 0$ be the true value of key $x$ at time $t$.
Define
\[
  p_x^{(t)} := \Pr[x\in S_t] \;=\; 1-e^{-v_x^{(t)}/\tau},
  \qquad
  \mu_t := \E[|S_t|] \;=\; \sum_x p_x^{(t)}.
\]
Then for any $\delta\in(0,1)$, with probability at least $1-\delta$,
\[
  |S_t|
  \;\le\;
  \mu_t
  \;+\;
  \sqrt{2\,\mu_t \log\frac{1}{\delta}}
  \;+\;
  \frac{2}{3}\log\frac{1}{\delta}.
\]
Moreover, $\mu_t \le F_t/\tau$, where $F_t:=\sum_x v_x^{(t)}$ is the true sum
statistic at time $t$.
\end{lemma}

\begin{proof}
For each key $x$, define the indicator $I_x^{(t)}:=\mathbf{1}\{x\in S_t\}$, so
$|S_t|=\sum_x I_x^{(t)}$.

\paragraph{Step 1: independence and Bernoulli marginals.}
Under \cref{alg:sh-resettable}, each key $x$ is governed (within its current
post-reset interval) by an independent exponential threshold; resets resample
this threshold independently.  For a fixed (non-adaptive) stream and fixed
time $t$, the event $\{x\in S_t\}$ depends only on key $x$'s own threshold and
its updates since its last reset, hence the family $\{I_x^{(t)}\}_x$ is
mutually independent.  By \cref{lem:sh-props},
\[
  \Pr[I_x^{(t)}=1] \;=\; 1-e^{-v_x^{(t)}/\tau} \;=: p_x^{(t)}.
\]
Therefore $|S_t|$ is a sum of independent Bernoulli random variables with mean
$\mu_t=\sum_x p_x^{(t)}$.

\paragraph{Step 2: Bernstein’s inequality.}
Since $0\le I_x^{(t)}\le 1$ and $\Var(I_x^{(t)})\le p_x^{(t)}$, we have
$\sum_x \Var(I_x^{(t)}) \le \sum_x p_x^{(t)}=\mu_t$.  Bernstein’s inequality
for sums of independent bounded variables yields: for all $a>0$,
\[
  \Pr\!\left[|S_t|-\mu_t \ge a\right]
  \;\le\;
  \exp\!\left(-\frac{a^2}{2\mu_t + \tfrac{2}{3}a}\right).
\]
Setting $a := \sqrt{2\mu_t \log(1/\delta)} + \tfrac{2}{3}\log(1/\delta)$
implies $\Pr[|S_t| \ge \mu_t + a]\le \delta$, proving the first claim.

\paragraph{Step 3: relating $\mu_t$ to $F_t/\tau$.}
Using $1-e^{-z}\le z$ for $z\ge 0$,
\[
  \mu_t \;=\; \sum_x \bigl(1-e^{-v_x^{(t)}/\tau}\bigr)
  \;\le\;
  \sum_x \frac{v_x^{(t)}}{\tau}
  \;=\;
  \frac{F_t}{\tau}.
\]
\end{proof}

\begin{corollary}[Uniform bound on sample size in terms of $F_{\max}$]
\label{cor:sh-samplesize-eps}
Let $F_{\max}:=\max_{t\le T} F_t$.  Under the assumptions of
\cref{lem:sh-samplesize-conc}, with probability at least $1-\delta$, 
\[
  \max_{t\le T}|S_t|
  \;\le\;
  \frac{F_{\max}}{\tau}
  \;+\;
  \sqrt{ \frac{2F_{\max}}{\tau}\,\log\frac{T}{\delta}}
  \;+\;
  \frac{2}{3}\log\frac{T}{\delta}.
\]
\end{corollary}

\begin{proof}
Apply \cref{lem:sh-samplesize-conc} with failure probability $\delta/T$ for
each fixed $t$ and take a union bound over $t\in[T]$.  Then use $\mu_t\le
F_t/\tau\le F_{\max}/\tau$.
\end{proof}

\begin{corollary}[Accuracy and Space Trade-off]
\label{cor:sh-tradeoff}
Fix an accuracy parameter $\varepsilon \in (0,1)$ and let $F_{\max} := \max_{t \le T} F_t$. 
Set the sampling threshold to
\[
    \tau \;=\; \Theta\left( \frac{\varepsilon^2 F_{\max}}{\log(T/\delta)} \right).
\]
Then, with probability at least $1-\delta$, the following bounds hold simultaneously for all time steps $t \in [T]$:

\begin{enumerate}
    \item \textbf{Accuracy:} The estimator $\hat{F}_t$ approximates the true sum within an additive error of $\varepsilon F_{\max}$:
    \[
        |\hat{F}_t - F_t| \;\le\; \varepsilon F_{\max}.
    \]
    \item \textbf{Sketch Size:} The number of stored keys $|S_t|$ is bounded by:
    \[
        |S_t| \;=\; O\left( \frac{1}{\varepsilon^2} \log\frac{T}{\delta} \right).
    \]
\end{enumerate}
\end{corollary}

\begin{proof}
The bounds follow by substituting $\tau$ into \cref{lem:sh-bounded-conc} (for accuracy) and \cref{cor:sh-samplesize-eps} (for space). 

For accuracy, the error is dominated by $\sqrt{\tau F_{\max} \log(T/\delta)}$. Substituting $\tau \asymp \varepsilon^2 F_{\max} / \log(T/\delta)$ yields an error of order $\sqrt{\varepsilon^2 F_{\max}^2} = \varepsilon F_{\max}$.

For space, the size is dominated by $F_{\max}/\tau$. Substituting $\tau$ yields a size of order $\frac{F_{\max}}{\varepsilon^2 F_{\max} / \log(T/\delta)} = \varepsilon^{-2} \log(T/\delta)$.
\end{proof}
} 

\subsection{Privacy Layer} \label{sec:robustsumprivacy}

To obtain robustness, we compose \cref{alg:sh-resettable} with the Binary Tree Mechanism described in \cref{sec:tree-mechanism}, applied on a stream of scaled differences of the algorithm's estimates. 

Let $F^{(t)} = \sum_x v_x^{(t)}$ be the value of the statistic at time $t$. Let $S_t$ denote the sample at time $t$, and for each $x \in S_t$ let $c^{(t)}_x$ be its associated counter. 

\begin{corollary}\label{cor:threshold-bound}
    Fix a time horizon $T$. Let $B := \tau \log(T/\delta)$. With probability $1-\delta$, it holds that for all keys $x$ and $t \in [T]$, $c_x^{(t)} \geq v_x^{(t)} - B$. 
\end{corollary}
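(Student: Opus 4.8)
The plan is to read this off directly from the entry-threshold formulation (\cref{def:entry-threshold}) and the bounded-threshold event of \cref{lem:bounded-thresholds}. First I would set $L := \tau\log(T/\delta)$, so that $L = B$, and let $\mathcal{E}$ be the event that every entry threshold $R_{x,j}$ that is active at some time $t \le T$ satisfies $R_{x,j}\le L$. By \cref{lem:bounded-thresholds}, $\Pr[\mathcal{E}]\ge 1-\delta$, so it suffices to show that on $\mathcal{E}$ the claimed inequality holds deterministically for all keys $x$ and all $t\in[T]$.

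Next I would unwind the counter invariant maintained by \cref{alg:sh-resettable}: if $x\in S_t$ with active reset epoch $j$, then the stored counter equals $c_x^{(t)} = v_x^{(t)} - R_{x,j}$ (this is exactly the $Z_x$-decomposition in \cref{def:entry-threshold}, since $Z_x^{(t)} = I_x^{(t)}(v_x^{(t)}+\tau - R_{x,j})$ and the algorithm stores $c_x^{(t)} = Z_x^{(t)} - \tau$ for sampled keys); if $x\notin S_t$, we adopt the convention $c_x^{(t)} = 0$, which is consistent with the fact that no counter is maintained for unsampled keys. I would then split into the two cases. If $x\in S_t$: on $\mathcal{E}$ we have $R_{x,j}\le L = B$, hence $c_x^{(t)} = v_x^{(t)} - R_{x,j} \ge v_x^{(t)} - B$. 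If $x\notin S_t$: by the sample-indicator characterization $I_x^{(t)} = \mathbb{I}\{v_x^{(t)} > R_{x,j}\}$, non-membership means $v_x^{(t)} \le R_{x,j} \le L = B$, so $v_x^{(t)} - B \le 0 = c_x^{(t)}$. In both cases $c_x^{(t)} \ge v_x^{(t)} - B$, which is the desired bound.

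Finally I would take a union over the (at most $T$) key-epoch pairs that are active during the stream — this is already absorbed in \cref{lem:bounded-thresholds} — and conclude that with probability at least $1-\delta$, $c_x^{(t)} \ge v_x^{(t)} - B$ simultaneously for all $x$ and all $t\in[T]$.

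I do not expect any real obstacle here: the statement is essentially a reformulation of \cref{lem:bounded-thresholds} in terms of counters rather than thresholds. The only point requiring mild care is fixing the convention for $c_x^{(t)}$ when $x\notin S_t$ and verifying that the counter invariant $c_x^{(t)} = v_x^{(t)} - R_{x,j}$ is preserved under both $\mathsf{Inc}$ and $\mathsf{Reset}$ updates in \cref{alg:sh-resettable} (for $\mathsf{Inc}$ it is preserved by $c_x \gets c_x + \Delta$; a $\mathsf{Reset}$ either drops $x$ from the sample or leaves it out, after which a fresh threshold governs the next epoch).
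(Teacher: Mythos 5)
Your proposal is correct and follows the paper's own route: the paper proves this corollary by observing that $c_x^{(t)} \ge v_x^{(t)} - B$ for all $x,t$ holds exactly when all applicable thresholds satisfy $R_{x,j}\le B$, and then invoking \cref{lem:bounded-thresholds} with $L=B$. Your write-up simply makes explicit the two cases (sampled vs.\ unsampled, with the convention $c_x^{(t)}=0$ for unsampled keys) that the paper leaves implicit.
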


In fact, it holds that for all $x$ and $t \in [T]$ we have $c_x^{(t)} \geq v_x^{(t)} - B$ if and only if $R_{x,j} \leq B$ for all applicable units $(x,j)$. So, \cref{cor:threshold-bound} follows from \cref{lem:bounded-thresholds}. 

We condition on the event

\[\mathcal{E} \coloneq \{R_{x,j} \leq B \textrm{ for all applicable } (x,j) \}\]

which holds with probability $1-\delta$ by \cref{cor:threshold-bound}. 

Now, we separate the contribution of each key $x$ to $F^{(t)}$ into the \textit{deterministic part} $\max(0, v_x^{(t)} - B)$ which is known to the adversary, and the \textit{uncertain part} $\min(B, v_x^{(t)})$ which will be hidden via the binary tree mechanism. Thus, at each time $t \in [T]$ we can decompose the total value of $F_t$ as follows:

$$F_t = \sum_{x} \max(0, v_x^{(t)} - B) + \sum_{x} \min( B, v_x^{(t)}) = D_t + P_t.$$

Recall that the non-private estimate returned by \cref{alg:sh-resettable} at time $t$ is
\[
    \hat F_t \;:=\; \sum_{x \in S^{(t)}} \bigl(\tau + c^{(t)}_x\bigr) = \sum_{x \in S^{(t)}} \left(v_x^{(t)} - R_{x,j_x} + \tau \right).
\]

Likewise, we decompose the contribution of each key $x \in S_t$ to the estimate $\hat F_t$:

\[ \hat F_t = \sum_{x \in S_t} \max(0, v_x^{(t)}- R_{x,j_x} + \tau - B) + \sum_{x \in  S_t} \min(B, v_{x}^{(t)}- R_{x, j_x} + \tau) = \hat D_t + \hat P_t\]

We define the update stream given to the tree mechanism by
\[
    u_t \;:=\; \frac{1}{B}\,\bigl(\hat P_t - \hat P_{t-1}\bigr).
\]
We further partition updates $\{u_t\}_{t = 1}^{T}$ based on the updated key $x$: for any $t \in [T]$, let $U_{x, j_x}^{(t)}$ denote the set of updates to coordinate $x$ after the $(j_x-1)$-th reset and before time $t$. For each $x$ and $j_x > 1$, let the sequence of updates $\{u: u \in U_{x, j_x}^{(t)}\}$ be a single privacy unit. Importantly, since the input stream to the algorithm only contains element insertions and resets, it follows that all positive updates $u$ satisfy $\sum_{u \in U_{x, j_x}^{(t)}} u \leq 1$ and there can be at most one negative update $u' = - \sum_{u \in U_{x, j_x}^{(t)}} u$. Thus, we have
\[\sum_{u \in U_{x, j_x}^{(t)}} |u| \leq 2 .\] We feed the stream of updates $(u_t)_{t = 1}^T$ to the uncertain part $\hat P_t$ as input to the Binary Tree Mechanism with sensitivity $L = 2$ and privacy parameter $\eps_{\textrm{dp}}$. Let $\tilde U_t$ denote the noisy estimates of the prefix sums $\hat U_t = \frac{1}{B} \cdot \hat P_t$ produced by the tree mechanism on the stream $\{u_t\}_{t = 1}^T$. Then, the estimate returned at time $t$ by the composed algorithm is
\[
    \tilde F_t \;:=\; B \,\tilde U_t + \hat D_t.
\]

\subsection{Analysis}\label{sec:max:error:sum:analysis}

We decompose the error of the adversarially robust algorithm as follows:
\begin{equation}
\label{eq:decomposition}
  |\tilde F_t - F_t|
  \;\le\;
  \underbrace{|\tilde F_t - \hat F_t|}_{\text{(I) DP mechanism noise}}
  +
  \underbrace{|\hat F_t - F_t|}_{\text{(II) sampling fluctuation+adaptive bias}}
\end{equation}

In the next subsections, we show how to separately bound the contribution of the error from (I) the Binary Tree Mechanism and (II) the error of the algorithm's sampling under adaptive updates.

\subsubsection{Term (I): Mechanism Noise}
The first term (I) only depends on the properties of the Tree Mechanism, and in particular it is independent of the internal randomness of the algorithm and the adversarial strategy. By construction, each privacy unit $\{u : u \in U_{x, j_x}^{(t)} \}$ satisfies \[\sum_{u \in U_{x, j_x}^{(t)}} |u| \leq 2 \]
By \cref{lem:tree-unit-dp}, with probability at least $1-\delta$, we have
$  |\tilde U_t - U_t|
  \;\le\;
  C_1 \frac{\log^{3/2} T}{\varepsilon_{\mathrm{dp}}} \cdot \log\frac{T}{\delta}$  and therefore
\begin{align}\label{eq:tree:mechanism:sum}
    |\tilde F_t - \hat F_t| = |B \cdot \tilde U_t - B \cdot \hat U_t |\leq C_1 \cdot B \cdot \frac{\log^{3/2} T}{\varepsilon_{\mathrm{dp}}} \cdot \log\frac{T}{\delta}.
\end{align}

\subsubsection{Term (II): Adaptive Bias}
Next, we bound the sampling error and adaptive bias $|\hat F_t -  F_t|$. To this end, we first show the following lemma, which gives an upper bound on $|\hat P_t - P_t|$.

\begin{lemma}[Robust sampling via DP generalization]\label{lem:adaptive:sampling:sum}
    Let $(\hat P^{(t)})_{t = 1}^T$ denote the sequence of protected components of the estimate $\hat F^{(t)}$ produced by \cref{alg:sh-resettable} with sampling threshold $\tau$, and let $P^{(t)}$ be the true contribution of the protected part of $F^{(t)}$ at time $t$. Let $\eps_{\textrm{dp}} \in (0, 1]$ be the privacy parameter for the tree mechanism applied to updates $(u_t)_{t = 1}^T$ defined above.

    Then, there exists a universal constant $C > 0$ such that for any $\delta > 0$, with probability at least $1-\delta$ over the randomness of the algorithm, the tree mechanism noise, and the internal randomness of the adversary, we have that for all $t \in [T]$ simultaneously, 

    \[\left |\hat P^{(t)} - P^{(t)} \right | \leq C \cdot \left(\frac{B}{\eps_{\textrm{dp}}} \log \left(1 + \frac{T}{\delta}\right) + \eps_{\textrm{dp}} \cdot F_{\max} \right) \]
\end{lemma}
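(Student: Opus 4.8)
The plan is to transcribe the cardinality robust‑sampling argument (\cref{lem:robust-sampling}) into the sum setting, with the entry thresholds $R_{x,j}\sim\operatorname{Exp}(1/\tau)$ playing the role of the Bernoulli sampling bits and the Binary Tree Mechanism run on the scaled increments $u_t=\tfrac1B(\hat P_t-\hat P_{t-1})$ playing the role of the DP mechanism. First I would condition on the event $\mathcal{E}=\{R_{x,j}\le B$ for every applicable unit $(x,j)\}$, which holds with probability at least $1-\delta$ by \cref{cor:threshold-bound}; on $\mathcal{E}$ any key whose active value exceeds $B$ is deterministically in the sample. I then fix the sensitive dataset $S=(R_u)_u$, where $u=(x,j)$ ranges over the at most $T$ key/reset‑epoch pairs, with population law $\mathcal{D}=\operatorname{Exp}(1/\tau)$, and split units into \emph{heavy} ones (active value exceeds $B$ at some point during their epoch) and \emph{light} ones (value stays below $B-\tau$ throughout), writing $\hat P_t=\hat P_t^{\ell}+\hat P_t^{h}$ and $P_t=P_t^{\ell}+P_t^{h}$ accordingly.

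Next I would set up the privacy layer. A unit's contribution to $\hat P_t$ only ever (i) jumps up once to a value at most $B$ when the key enters the sample, (ii) increases monotonically, capped at $B$, and (iii) drops back to $0$ at the next reset of that key, so the total $\ell_1$ mass a single unit contributes across all increments $\{u_t\}$ is at most $2$; hence by the re‑analyzed tree mechanism (\cref{lem:tree-unit-dp} with $L=2$, in the variant allowing many‑step but bounded‑total‑mass units) the noisy prefixes $(\tilde U_t)$ are unit‑level $\varepsilon_{\mathrm{dp}}$‑DP with respect to $S$, and by post‑processing so is the joint transcript of $(\tilde U_t)$, the adaptively chosen update stream, and the adversary's coins. \textbf{The main obstacle is here:} the released value is $\tilde F_t=B\tilde U_t+\hat D_t$, and the deterministic part $\hat D_t$ — which the adversary sees — still depends on the entry thresholds of the heavy keys, so post‑processing alone does not certify DP of what the adversary observes. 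I would resolve this using the conditioning on $\mathcal{E}$: heavy keys are deterministically sampled, so their thresholds may be revealed to the adversary at no cost, and a light unit contributes $0$ to $\hat D_t$ while contributing only the \emph{unclipped} estimator $Z_x$ to $\hat P_t$; thus changing a single light‑unit threshold leaves $\hat D_t$ unchanged, and the composed mechanism is $\varepsilon_{\mathrm{dp}}$‑DP with respect to light units, which is exactly what the generalization theorem needs. The delicate point that the light/heavy partition is itself a random function of the transcript (the per‑key increment pattern being adaptive) I would dispatch by running the generalization argument conditionally on $\mathcal{E}$ and on the realized heavy thresholds — exposed, in the filtration, at the first time each such key's value exceeds $B$ — and then integrating out, noting that no union bound over partitions is needed since DP generalization only asks that the mechanism be private for each fixed conditioning.

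Granting that, I would run DP generalization on the light part. For each $t$ define the query $h^{(t)}(Y)=\sum_{u\ \text{light}}\mathbf{1}\{u\text{ active at }t\}\cdot\min\!\bigl(B,\;(v_x^{(t)}-Y_u+\tau)\,\mathbf{1}\{v_x^{(t)}>Y_u\}\bigr)$, with values in $[0,B]$, so that $h^{(t)}(S)=\hat P_t^{\ell}$ while $\E_{Z\sim\mathcal{D}^d}[h^{(t)}(Z)]=P_t^{\ell}$ \emph{exactly}: for a light unit the estimator $Z_x=(v_x-R+\tau)\mathbf{1}\{v_x>R\}$ is unbiased for $v_x$ and is always below $B$, so the clip is inert and introduces no bias. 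Rescaling $h^{(t)}/B$ to range $[0,1]$, applying \cref{thm:DP-generalization} with $\xi=2\varepsilon_{\mathrm{dp}}$ at each fixed $t$, and union bounding over $t\in[T]$ with $\beta=\delta/T$ gives $|\hat P_t^{\ell}-P_t^{\ell}|\le C\bigl((e^{2\varepsilon_{\mathrm{dp}}}-1)P_t^{\ell}+\tfrac{B}{\varepsilon_{\mathrm{dp}}}\log(1+T/\delta)\bigr)\le C\bigl(\varepsilon_{\mathrm{dp}}F_{\max}+\tfrac{B}{\varepsilon_{\mathrm{dp}}}\log(1+T/\delta)\bigr)$, using $P_t^{\ell}\le F_t\le F_{\max}$ and $e^{2\varepsilon_{\mathrm{dp}}}-1=O(\varepsilon_{\mathrm{dp}})$.

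Finally, I would control the heavy part without DP: on $\mathcal{E}$ every heavy key is sampled, so $\hat P_t^{h}+\hat D_t^{h}=\sum_{x:\,v_x^{(t)}>B}(v_x^{(t)}-R_{x,j}+\tau)=:\hat F_t^{h}$ and $P_t^{h}+D_t^{h}=\sum_{x:\,v_x^{(t)}>B}v_x^{(t)}=:F_t^{h}$, hence $|\hat P_t^{h}-P_t^{h}|\le|\hat F_t^{h}-F_t^{h}|+|\hat D_t^{h}-D_t^{h}|$. The second term is bounded by \cref{cor:deterministic:part:error} (which in fact bounds $|\hat D_t-D_t|$, as light units contribute $0$ to both $D_t$ and $\hat D_t$), and the first by a Bernstein bound over the heavy keys exactly as in \cref{lem:conditional-concentration} (their estimators are independent, have total variance at most $\tau F_t$, and, on $\mathcal{E}$, per‑key range at most $L=\tau\log(T/\delta)$); both are $O(\varepsilon F_{\max})$ under the eventual choice $\tau=\Theta(\varepsilon^2F_{\max}/\log^2(T/\delta))$, $B=\tau\log(T/\delta)$, $\varepsilon_{\mathrm{dp}}=\Theta(\varepsilon/\polylog)$, and so are absorbed into the two terms already in the statement. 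Combining the light and heavy estimates by the triangle inequality, absorbing constants, and union bounding over the failure events of $\mathcal{E}$, of DP generalization, and of the Bernstein step gives the claim; beyond the DP step, the only computation needing care is checking that the residual bias from keys whose value is just above $B$ (of total size $O(\tau F_t/B)$) is genuinely dominated, which may force $B$ to be a slightly larger multiple of $\tau$.
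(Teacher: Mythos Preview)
Your approach differs from the paper's and is more careful in an important way: you correctly flag that the released value $\tilde F_t = B\tilde U_t + \hat D_t$ exposes $\hat D_t$ without Laplace protection, so the adversary's view is not $\varepsilon_{\mathrm{dp}}$-DP with respect to \emph{all} thresholds, and the paper's direct invocation of \cref{thm:DP-generalization} glosses over this. (The paper's query also lacks the sample indicator $\mathbf{1}\{v_x>r\}$, so neither the identity $h^{(t)}(S)=\tfrac1B\hat P_t$ nor the claimed range $[0,1]$ holds as written; your formulation with the indicator, and the observation that for light keys the clip is inert so the estimator is exactly unbiased, repairs both.) Your heavy/light decomposition --- DP generalization on the light thresholds, Bernstein on the deterministically-sampled heavy keys --- is the right structural fix.

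Two loose ends remain. The easy one: your claimed residual $O(\tau F_t/B)$ for middle-range keys is off; there can be $O(F_t/B)$ keys with $v_x^{(t)}\in(B-\tau,B]$, and each may be unsampled with error $\approx B$, giving $O(F_t)$. Strengthening the conditioning to $\mathcal{E}'=\{R_{x,j}\le B-\tau\ \text{for all }(x,j)\}$ (still probability $1-O(\delta)$) makes every key with $v_x^{(t)}>B-\tau$ deterministically sampled and eliminates the middle range. The harder one is the circularity. You use ``heavy'' in two inconsistent senses --- epoch-based when revealing thresholds, time-$t$-based when invoking Bernstein on $\hat P_t^h$ --- and a key that will eventually exceed $B$ but currently has $v_x^{(t)}\le B$ need not be sampled, so Bernstein does not cover it under the epoch-based definition. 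Using the per-time-$t$ split throughout (light-at-$t$ iff $v_x^{(t)}\le B-\tau$) plus monotonicity of values within an epoch gives that $\hat D_{t'}$ is independent of light-at-$t$ thresholds for every $t'\le t$, so the transcript up to $t$ is $\varepsilon_{\mathrm{dp}}$-DP with respect to light-at-$t$ thresholds, and the light-at-$t$ indicator is then a post-processed function of that DP transcript. What still needs an explicit line is how to invoke \cref{thm:DP-generalization}, which as stated requires DP with respect to the \emph{full} dataset: one route is to treat the heavy-at-$t$ thresholds as auxiliary randomness (the adversary may as well be told them, since on $\mathcal{E}'$ those keys are sampled regardless) and apply the theorem to the reduced dataset of light-at-$t$ thresholds, then average; the sequential exposure argument you sketch is the right shape for this, but the fact that the identity of the reduced dataset is itself adaptive needs to be made explicit.
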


\begin{proof}
We will apply the DP generalization theorem (\cref{thm:DP-generalization}). To do so, we first define the sensitive units, their distribution, and predicates that depend only on the protected units of the tree mechanism $U_{x, j_x}^{(t)}$, which consist of sequences of updates to the private portion of each key $x$. 

\paragraph{Dataset and distribution.} Fix the time horizon $T$ and sampling threshold $\tau$. By \cref{def:entry-threshold}, the algorithm can be viewed as first drawing an infinite sequence of i.i.d. entry thresholds $R_{x,1}, R_{x,2},... \sim \Exp(1/\tau)$ and then using threshold $R_{x,j}$ for the $j$-th re-insertion of $x$. Recall that at a given time $t$ during the $j$-th insertion interval, $x \in S^{(t)}$ if and only if $v_x^{(t)} > R_{x, j}$. 
Consider the set of potential sampling units $$U \coloneqq \{(x,j, R_{x,j})\}$$ where $x$ is any key, $j \in \{1,2, ...\}$ represents the number of times that $x$ has been re-inserted into the stream following a reset operation, and $R_{x,j} \sim \textrm{Exp}(1/\tau)$ is the corresponding entry threshold. 

Define the dataset 
\[S = \{u_1,..., u_d\},  \ \ u_i = (x_i, j_i, R_{x_i, j_i})\]

Under event $\mathcal{E}$, the distribution $\cal{D}$ for each entry threshold $R_{x,j}$ is $\Exp(1/\tau)$ conditioned on $R_{x,j} \leq B$. 

\paragraph{DP mechanism.} By \cref{lem:tree-unit-dp}, the algorithm producing outputs $\hat F^{(t)}$ is $\eps_{\textrm{dp}}$ with respect to event-level adjacency on the privacy units $U_{x, j}^{(t)}$, and the contributions of these units to $\hat F^{(t)}$ are determined by the value of $R_{x,j}$. Thus, it follows that the mechanism is also $\eps_{\textrm{dp}}$-DP with respect to the redefined privacy units $u = (x, j, R_{x,j}) \in S$. 

\paragraph{Defining Queries.} For each time $t \in [T]$ and unit $u = (x, j, R_{x,j}) \in U$, define the function 

\[h_u^{(t)}(r) \coloneq \mathbf 1\{u\text{ is active at time }t\} \cdot \left(\frac{1}{B} \min(B, v_x^{(t)} - r + \tau) \right) , \ \  r \in [0, B]\]

Observe that $h_u^{(t)}(r) : [0, B] \rightarrow [0,1]$. For a dataset $Y = (y_u)_{u \in U}$, define 

\[h^{(t)}(Y) \coloneq \sum_{u \in U} h_u^{(t)}(y_u). \]

For the actual dataset $S = (u_1,..., u_d)$, we have that 

\[h^{(t)}(S) = \sum_{u \in U} h_u^{(t)}(R_{x,j}) = \frac{1}{B} \cdot \hat P^{(t)}.\]

Next, draw $Z = (Z_1,..., Z_d) \sim D^d$ from the same conditional exponential distribution. We have that

\[\mathbb{E}_{Z \sim D^d} [h^{(t)}(Z)] = \frac{1}{B} \sum_{u \in U} \mathbf{1}\{u \textrm{ is active at time } t \} \cdot \mathbb{E}_{Z \sim \mathcal{D}^{d}} [\min(B, v_x^{(t)} - Z_u + \tau) ]\]

Since $\mathbb{E}[v_x^{(t)} - Z_u + \tau] = v_{x}^{(t)}$, and $v_x^{(t)} = \max(0, v_{x}^{(t)} - B) + \min(B, v_x^{(t)})$, it follows that \[\mathbb{E}_{Z \sim \mathcal{D}^d} [h^{(t)}(Z)] = \frac{1}{B} \cdot  P^{(t)}\]

\paragraph{Applying DP generalization.} Fix time $t \in [T]$. Let $\xi = 2 \eps_{\textrm{dp}}$. By \cref{thm:DP-generalization}, for any $\beta \in (0,1]$, with probability at least $1-\beta$ over the randomness $S$ of the algorithm and the DP tree mechanism, we have that 

\[\frac{1}{B} \left|e^{-2\eps_{\textrm{dp}}} \cdot P_t - \hat P_t \right| \leq \frac{2}{\eps_{\textrm{dp}}} \log \left(1 + \frac{1}{\beta} \right) \]

By rearranging the inequality above, we have

\[\left | \hat P_t - P_t \right | \leq \frac{2B}{\eps_{\textrm{dp}}} \log \left(1 + \frac{1}{\beta}\right) + (e^{-2\eps_{\textrm{dp}}} - 1) \cdot P_t\]

Observe that $e^{-2\eps_{\textrm{dp}}} - 1 \leq C' \eps_{\textrm{dp}}$ for $\eps_{\textrm{dp}} \leq 1$ for some universal constant $C'$, so 

\[\left | \hat P_t - P_t \right | \leq \frac{2B}{\eps_{\textrm{dp}}} \log \left(1 + \frac{1}{\beta}\right) + C' \eps_{\textrm{dp}}\cdot P_t\]

\paragraph{Uniform-in-time bound.} Set $\beta \coloneq \delta / T$. By a union bound, with probability at least $1-\delta$, we obtain

\[|\hat P_t - P_t | \leq \frac{2B}{\eps_{\textrm{dp}}} \log \left(1 + \frac{T}{\delta}\right) + C' \eps_{\textrm{dp}} \cdot P_t, \ \ \textrm{for all } t \in [T].\]

Since $\log\left(1 + T/\delta \right) \leq C'' \cdot \log(T/\delta)$ for some constant $C'' > 0$, we have that 

\begin{align}\label{eq:protected:sum:error}
    |\hat P_t - P_t | \leq C \cdot  \left(\frac{B}{\eps_{\textrm{dp}}} \log \left(1 + \frac{T}{\delta}\right) + \eps_{\textrm{dp}} \cdot P_t \right), \ \ \textrm{for all } t \in [T]
\end{align}

for some universal constant $C > 0$. Since  $\max_{t \in [T]} P_t \leq \max_{t \in [T]} F_{t} = F_{\max}$, the claim follows. 

\end{proof}

Finally, to bound the term $|\hat F_t - F_t|$ from \cref{eq:decomposition}, we recall that we decomposed $\hat F_t = \hat D_t + \hat P_t$ and similarly $F_t = D_t + P_t$. So, the error is bounded as follows:
\begin{align*}
    \left |\hat F_t - F_t \right| &\leq \left | \hat D_t - D_t \right| + \left | \hat P_t - P_t \right|
\end{align*}

Also, note that by applying \cref{cor:deterministic:part:error} with error parameter $\eps'$, we have $|\hat D_t - D_t| \leq \eps' F_{\max}$. By combining the guarantees of the DP generalization analysis above with this, we obtain
 \begin{align}\label{eq:robust:sampling:sum}
\left |\hat F_t - F_t \right| \leq \eps' F_{\max} + C \cdot  \left(\frac{B}{\eps_{\textrm{dp}}} \log \left(1 + \frac{T}{\delta}\right) + \eps_{\textrm{dp}} \cdot F_{\max} \right) 
\end{align}

\subsection{Proof for fixed-rate sketch}\label{sec:max:error:robust:proof}
In this subsection, we complete the analysis of the fixed-rate sketch for the sum $F = \sum_x v_x$ under adaptive increments and resets. 
\begin{proof}[Proof of \cref{lem:max:error:sum}]
    We recall the error decomposition in \cref{eq:decomposition}. In particular, we will apply the guarantee of the tree mechanism (\cref{eq:tree:mechanism:sum}) and the robust sampling bound (\cref{eq:robust:sampling:sum}) with appropriate parameters to obtain the claimed error guarantee. Recall that we condition on the event $\mathcal{E}$ that all relevant thresholds satisfy $R_{x,j} < B$, which occurs with probability $1-\delta$ at all times $t \in [T]$. 
    
    Then, conditioning on $\mathcal{E}$, observe that with probability at least $1-\delta$, for all $t \in [T]$ we have
    \begin{align*}
        |\tilde F^{(t)} - F^{(t)}| \leq C_1 \cdot B \cdot \frac{\log^{3/2} T}{\eps_{\textrm{dp}}} \log \frac{T}{\delta} + C\left(\frac{B}{\eps_{\textrm{dp}}} \log\left(\frac{T}{\delta}\right) + \eps_{\textrm{dp}} \cdot P^{(t)} \right) + \eps' \cdot F_{\max}
    \end{align*} 
    We select parameters as follows:  
    \begin{align*}
        &\eps' = \Theta(\eps) \\ 
        &\eps_{\textrm{dp}} \coloneq \Theta(\eps) \\
        &\tau \coloneq \frac{\eps^2 F_{\max}}{\log^{7/2}(T) \cdot \log^2(1/\delta)}
    \end{align*}
  
    Recall that $B \coloneq \tau \cdot \log \left(\frac{T}{\delta} \right)$. Then, by this choice of parameters (and absorbing the constants into $\eps$), it follows that 

    \begin{align*}
        |\tilde F^{(t)} - F^{(t)}| &\leq C_1 \cdot \frac{\tau}{\eps} \cdot \log^{7/2} (T) \log^2 \left(\frac{1}{\delta} \right) + C\left( \frac{\tau}{\eps} \log^2 (T) \log^2 \left(\frac{1}{\delta} \right) + \eps F_{\max} \right) + \eps F_{\max} \\
        & \leq \eps \cdot F_{\max} 
    \end{align*}

    \paragraph{Space Complexity.} We analyze the total space used by the algorithm.
    Throughout the stream, the algorithm stores the sample $S_t$ and the associated counts $c_x$ for $x \in S_t$, in addition to $O(\log T)$ tree mechanism counters.
    
    By \cref{lem:conditional-concentration}, the size of the sample $S_t$ maintained by \cref{alg:sh-resettable} is bounded by 
    \begin{align*}
        |S_t| &\leq \frac{F_t}{\tau} + \sqrt{\frac{2F_t}{\tau} \log \frac{1}{\delta}} + \frac{2}{3} \log \frac{1}{\delta} 
         \leq C' \cdot \frac{1}{\eps^2} \cdot \log^{7/2}(T) \cdot \log^2 \left(1/\delta \right)
    \end{align*}
    
    for some universal constant $C' > 0$. Thus, the algorithm stores $O\left(\frac{1}{\eps^2} \cdot \log^{9/2}(T) \cdot \log^2 (1/\delta) \right)$ bits of memory in total for all of the counts of sampled elements. Note that the $O(\log T)$ counters needed for the tree mechanism are dominated by this term, so the total space bound follows.

\end{proof}
\subsection{Achieving the Prefix-Max Error Guarantee}\label{sec:prefix:max:robust:sum}
In this subsection, we modify the robust algorithm from \cref{sec:robustsumprivacy} to achieve the prefix-max error guarantee and finish the proof of \cref{thm:robust:resettable:sum}. For notational convenience, let $M_t = \max_{t' \leq t} F_{t'}$. We design a modified estimator $\tilde F_{t}$ which satisfies 
    \[|\tilde F_{t} - F_t | \leq \eps \cdot M_t \ \ \textrm{ for all } t \in [T]\]

Let $\mathcal{A}_1, ..., \mathcal{A}_{L}$ be independent instances of the robust $\ell_1$ estimation algorithm from \cref{sec:robustsumprivacy} for $L = O(\log M_T) = O(\log T)$, where $\mathcal{A}_k$ is instantiated with threshold parameter 

\[\tau_k \coloneq \frac{\eps^2 \cdot 2^k}{\log^{7/2}(T) \cdot \log^2 (1/\delta)}\]

and $\tilde F_t^{(k)}$ is the estimate returned by $\mathcal{A}_k$ at time $t \in [T]$. 

\begin{figure*}[!htb]
\begin{mdframed}
\textbf{Algorithm via Sketch-Switching}:
\begin{enumerate}
\item Initialize counter $c = 1$ to indicate the current active sketch $\mathcal{A}_c$.
\item For stream update $(x, \Delta)$ at time $t$: 
\begin{enumerate}
    \item Update each active copy among $\mathcal{A}_1,\ldots, \mathcal{A}_L$ with $(x, \Delta)$. 
    \item We \textit{activate} a copy $\mathcal{A}_k$ at the first time $t$ when $\tilde F_t^{(k)} \geq 2^k$. Return $\tilde F_t \coloneq \tilde F_t^{(c)}$, where $c = \max \left \{k : \mathcal{A}_k \textrm{ is active at time } t \right \}$. 
    \item De-allocate space from $\mathcal{A}_k$ for $k < c$ (instances which are no longer active).
\end{enumerate}
\end{enumerate}
\label{alg:mult:error}
\end{mdframed}
\end{figure*}

\begin{proof}[Proof of \cref{thm:robust:resettable:sum}]
We first recall the error bound for a single robust $\ell_1$ estimation sketch $\mathcal{A}_k$ from \cref{sec:robustsumprivacy}, now written in terms of the prefix-max error guarantee. Observe that by \cref{cor:deterministic:part:error}, we know that \[|\hat D_t - D_t | \leq \eps \sqrt{2^k \cdot F_t} + \eps^2 \cdot 2^k\] 

By combining this with \cref{eq:protected:sum:error}, we have 
\begin{align*}
    |\hat F_t - F_t | &\leq  |\hat P_t - P_t| + |\hat D_t - D_t | \\  &\leq C \cdot \left(\frac{B}{\eps} \log \left(1 + \frac{T}{\delta}\right) + \eps \cdot  P_t \right) +\frac{1}{2} \eps \cdot  2^k + \eps \sqrt{2^k \cdot F_t} \\ &\leq C \cdot \frac{B}{\eps} \log\left(1 + \frac{T}{\delta} \right) + O(\eps) \cdot (2^k + F_t) + \eps \sqrt{2^k \cdot F_t} 
\end{align*}

where the second inequality follows since $P_t \leq F_t$. Then, by applying the error guarantee from the Tree Mechanism (\cref{eq:tree:mechanism:sum}), it follows that
\begin{align*}
    |\tilde F_t - F_t | &\leq |F_t - \hat F_t| + |\hat F_t - \tilde F_t | \\
    & \leq C \cdot \frac{B}{\eps} \log\left(1 + \frac{T}{\delta} \right) + O(\eps) \cdot (2^k + F_t) + \eps \sqrt{2^k \cdot F_t} + C_1 \cdot B \cdot \frac{\log^{3/2} T}{\eps} \log\frac{T}{\delta} \\ 
    & \leq O\left(\tau \cdot \log^2 T \cdot \log^2 \frac{1}{\delta} \right) +  O(\eps) \cdot (2^k + F_t) + \eps \sqrt{2^k \cdot F_t} + O\left(\tau \cdot \log^{7/2}T \cdot \log^2 \frac{1}{\delta} \right)
\end{align*}

For $B = \tau_k \cdot \log(\frac{T}{\delta})$ and $\tau_k$ chosen as above (and absorbing constants into $\eps$), we see that $\mathcal{A}_k$ satisfies 

\[|\tilde F_t^{(k)} - F_t | \leq \eps \cdot (2^k + F_t) + \eps \sqrt{2^k \cdot F_t}\]

By the sketch-switching construction, we use the output of the largest active instance $\mathcal{A}_c$, i.e. $\tilde F_{t'}^{(c)} \geq 2^c$ at some time $t' \leq t$ and for all $k > c$, we have $\tilde F_{t'}^{(k)} < 2^{k}$ at all times $t' \leq t$. Note that if $\tilde F_{t'}^{(c)} \geq 2^c$ at some time $t' \leq t$, we have 

\[ F_{t'} \geq 2^c - \eps (2^c + F_{t'}) - \eps \sqrt{2^c \cdot F_{t'}}\]

By solving the inequality above for $F_{t'}$, we obtain 

\[F_{t'} \geq (1- 4\eps) \cdot 2^c \]

Alternatively, we know that for $k > c$, $\tilde F_{t'}^{(k)} < 2^k$ for all $t' \leq t$, it follows that 

\[ F_{t'} \leq 2^k + \eps (2^k + F_{t'}) + \eps \sqrt{2^k \cdot F_{t'}}\]

By solving the inequality above, we obtain that 

\[F_{t'} \leq (1+4\eps)\cdot 2^k\]

To summarize, if $F_{t'} \geq (1- 4\eps) \cdot 2^c$ at any previous time $t' \leq t$ in the stream and $F_{t'} \leq (1+ 4\eps) \cdot 2^k$ for all $t' \leq t$ for $k > c$, we use sketch $\mathcal{A}_c$. Equivalently, we use sketch $\mathcal{A}_c$ at time $t$ if and only if

\[(1-4\eps) \cdot 2^c \leq \max_{t' \leq t} F_{t'} \leq (1+ 4\eps) \cdot 2^{c+1}\]

Therefore, we have shown that

\[|\tilde F_t - F_t | \leq O(\eps) \cdot M_t\]

as claimed. 

\paragraph{Robustness.} Robustness follows from the robustness of each $\mathcal{A}_k$ to adaptive inputs.

\paragraph{Space Complexity.}
 Recall that by \cref{lem:conditional-concentration}, each copy $\mathcal{A}_k$ stores an evolving sample $S_t^{(k)}$, for threshold parameter $\tau_k$ and failure probability $\gamma$. Substituting our choice of parameters, we have
\begin{align*} |S_t^{(k)}| &\;\le\; \frac{F_t}{\tau_k} \;+\; \sqrt{\frac{2 F_t}{\tau_k} \log \tfrac{1}{\gamma}} \;+\; \frac{2}{3} \log \frac{1}{\gamma} \\ 
&\leq \frac{\log^{7/2} (T) \log^2 \left(\frac{1}{\delta}\right) \cdot F_t}{\eps^2 \cdot 2^k} + \sqrt{\frac{\log^{7/2}(T) \log^3\left(\frac{1}{\delta}\right)  \cdot F_t}{\eps^2 \cdot 2^k}} + \frac{2}{3} \log{\frac{1}{\delta}} \leq C \cdot \frac{\log^{7/2} T \log^2\left(\frac{1}{\delta}\right)  \cdot F_t}{\eps^2 \cdot 2^k}
\end{align*}

for some universal constant $C > 0$. Since we stop storing sketch $\mathcal{A}_k$ as soon as $F_t \geq (1-4\eps) \cdot  2^{c}$ for some $c > k$, it follows that the space allocated to storing $\mathcal{A}_k$ is $O\left(\frac{1}{\eps^2} \log^{9/2} (T )\cdot \log^2 \left(\frac{1}{\delta} \right)\right)$. The final algorithm stores $L = O(\log T)$ copies of the robust algorithm from \cref{sec:robustsumprivacy}, so the final space bound follows.
\end{proof}

\section{Bernstein Statistics} \label{sec:concavesublinear}


In this section we present our robust restatble sketches for Bernstein (soft concave sublinear) statistics.  We first restate the definition:

\Bernsteindef*

Useful Bernestein functions include:
\begin{itemize}
    \item \emph{Low Frequency Moments ($w^p, p \in (0, 1)$):} 
    \[ a(t) \;=\; \frac{p}{\Gamma(1-p)} t^{-(1+p)}. \]

    \item \emph{Logarithmic Growth ($\ln(1+w)$):} 
    \[ a(t) \;=\; \frac{e^{-t}}{t}. \]

    \item \emph{Soft Capping ($T(1 - e^{-w/T})$):} 
    \[ a(t) \;=\; T \cdot \delta(t - 1/T). \]
\end{itemize}
Additionally, soft concave sublinear statistics provide a constant approximation of concave sublinear statistics, including capping functions. 

We prove \cref{thm:Bernstein}, restated here for convenience:

\thmBernstein*

We adapt the framework of \citet{Cohen2016HyperExtended}, which 
reduces composable sketching of Bernstein statistics over unaggregated datasets with nonnegative updates to composable sketching of two base statistics: \emph{Sum}, on the original dataset, and \emph{Max-Distinct} (a generalization of cardinality), on a (randomized) mapped dataset $E$. Any Bernstein statistic $f\in\mathcal B$ can be approximated, within a relative error, by a linear combination of these two base statistics. Therefore, sketches with relative error approximation for the base statistics yield a sketch with relative error guarantees for the Bernstein statistic. We review the reduction and its properties in \cref{sec:reduction}. 

The adoption of this framework for designing robust and resettable sketches required overcoming some technical challenges:

The weighting of the two base base statistics proposed in \citep{Cohen2016HyperExtended} depends on the sum statistic value.
We propose (\cref{cor:input-independent-reduction-concentration}) and analyze a different parametrization so that the weighting does not depend on the input.

The composable Max-Distinct sketch proposed in \citep{Cohen2016HyperExtended} is not robust to input adaptivity (As discussed in \cref{sec:landscape}, composable cardinality sketches are inherently not robust~\citep{CohenNelsonSarlosSinghalStemmer2024,GribelyukLinWoodruffYuZhou2024}). In \cref{sec:maxdistinct2distinct} we propose an approximation of the Bernstein statistic as a linear combination of sum and cardinality statistics. We propose a different map of the input data elements into``output'' sets of data elements $(E_i)$ so that the component that was approximated by the Max-Distinct statistic is instead approximated by a linear combination of cardinality statistics on datasets $(E_i)$. Our map uses independent randomness per generated data element, which facilitates the robustness guarantees.

We conclude the proof of \cref{thm:Bernstein} in \cref{sec:wrap}. 
The modified parametrization and map allow us to approximate the Bernstein statistics (over increments-only datasets) as a linear combination of sum and cardinality statistics (over mapped output datasets). In order to apply this with resettable streams, we show how to implement reset operations on the input streams as reset operations on the output streams so that the approximation guarantee holds. We then specify a Bernstein sketch as a sum sketch applied to the input stream and cardinality sketches applied to the output streams $(E_i)$, with the estimator being a respective linear combination of the sketch-based approximations of the base sketches.
We establish end-to-end robustness guarantees and set parameters for the base sketches so that prefix-max guarantees of the sum and cardinality  sketches transfer to the Bernstein sketch.

\subsection{Bernstein Statistics to Sum and Max-Distinct}
\label{sec:reduction}

We review the reduction of \citep{Cohen2016HyperExtended}.
Let the input dataset $W$ be a set of elements (updates) $e = (x, \Delta)$, where $x$ is a key from a universe $\mathcal{U}$ and $\Delta \ge 0$ is a positive increment value.
We define the aggregated weight $w_x$ of a key $x$ as the sum of increments for that key:
\[
    w_x \;=\; \sum_{(x, \Delta) \in W} \Delta.
\]
We are interested in estimating statistics of the form $F(W) = \sum_x f(w_x)$, where $f \in \mathcal{B}$.

\begin{definition}[Complement Laplace Transform]
For an aggregated dataset $W= \{(x,w_x)\}$, the complement Laplace transform at parameter $t > 0$ is defined as:
\[
    \mathcal{L}^c[W](t) \;:=\; \sum_{x} \left( 1 - e^{-w_x t} \right).
\]
In the continuous view, where $W(w)$ denotes the density of keys with weight $w$, this is equivalent to:
\begin{equation} \label{eq:pointmeasurement}
    \mathcal{L}^c[W](t) \;\equiv\; \int_{0}^{\infty} W(w) \left( 1 - e^{-wt} \right) \, dw 
    \;=\; \textsf{Distinct}(W) - \mathcal{L}[W](t),
\end{equation}
where $\mathcal{L}[W](t)$ is the standard Laplace transform of the frequency distribution. 
\end{definition}

For a function $f \in \mathcal{B}$ with inverse transform $a(\cdot)$ (see \cref{def:bernstein-functions}), we can express the statistic as:
\begin{equation} \label{eq:transformview}
    F(W) \;=\; \int_{0}^{\infty} a(t) \mathcal{L}^c[W](t) \, dt.
\end{equation}

The method of \citep{Cohen2016HyperExtended} approximates \eqref{eq:transformview} using sum and Max-Distinct statistics. 
The Max-Distinct statistic generalizes the Distinct Count (cardinality) statistic:
\begin{definition}[The Max-Distinct Statistic]
Let $E$ be a dataset of pairs $(z, \nu)$, where $z$ is a key and $\nu \ge 0$ is a value. The \emph{Max-Distinct} statistic is defined as the sum over distinct keys of the maximum value associated with each key:
\[
    \textsf{MxDist}(E) \;:=\; \sum_{z \in \textsf{Distinct}(E)} \max_{(z, \nu) \in E} \nu.
\]
\end{definition}

The \textsf{Sum} statistic is over the original dataset $W$.
The \textsf{MxDist} statistics is of a new stream $E = \bigcup_{e\in W} \mathcal{M}(e)$, obtained by a randomized mapping of each input element $e=(x,\Delta) \in W$ to a set of output elements. The map is described in \cref{alg:elementmap}, parameterized by a granularity $r \ge 1$ and a cutoff $\tau \ge 0$.

\begin{algorithm2e}[h]
\caption{Element Mapping $\mathcal{M}(x, \Delta)$\label{alg:elementmap}}
\KwIn{Update $(x, \Delta) \in W$, transform $a(t)$, parameters $r, \tau$}
\KwOut{Output elements emitted to dataset $E$}
Draw $r$ independent variables $y_1, \dots, y_r \sim \operatorname{Exp}(\Delta)$\;
\ForEach{$k \in \{1, \dots, r\}$}{
    Compute weight $\nu_k \gets \int_{\max(\tau, y_k)}^\infty a(t) \, dt$\;
    \If{$\nu_k > 0$}{
        Generate output key $z_k \gets H(x, k)$ \tcp*{Unique key for (input key, index) pair}
        Emit output element $(z_k, \nu_k)$ to dataset $E$ \;
    }
}
\end{algorithm2e}

The estimator, for a parameter $\tau>0$, is defined as:
\begin{equation} \label{eq:Bernsteinapprox}
      \tilde{F}(W) \;=\; \underbrace{\textsf{Sum}(W) \cdot \int_0^\tau a(t)t \, dt}_{\text{Linear approximation for } t \in [0, \tau]} \;+\; \underbrace{\frac{1}{r} \textsf{MxDist}(E)}_{\text{Sampled tail for } t \in [\tau, \infty)}.  
\end{equation}

The first term approximates the integral \eqref{eq:transformview} in the range $[0,\tau]$, while the second term (a random variable) estimates the range $[\tau, \infty)$ using the stream $E$. 
Intuitively, \cref{eq:pointmeasurement} behaves like a linear function (approximated by sum) for low $t$ and like a distinct count for high $t$.

\subsubsection*{Concentration Analysis}
We bound the error of $\tilde{F}(W)$. 
The estimator splits the integration domain at $\tau$. We analyze the error contribution of the two ranges separately.

\begin{claim} [Low-range Bias] \label{claim:firstterm}
   Let $\eps, \beta \in (0,1)$. For any $\tau \le \frac{\sqrt{\eps}}{\textsf{Max}(W)}$, the linear approximation error is at most $\eps$.
with probability at least $1-\beta$, the first term of the estimator $\tilde{F}(W)$ approximates 
$\int_{0}^{\tau} a(t) \mathcal{L}^c[W](t) \, dt$
within relative error $\eps$. 
\end{claim}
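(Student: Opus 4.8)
The plan is to peel off the (non‑existent) randomness and reduce everything to an elementary pointwise comparison of integrands. Writing both quantities key by key, the first term of $\tilde F(W)$ is
\[
  A \;:=\; \textsf{Sum}(W)\int_0^\tau a(t)\,t\,dt \;=\; \sum_x w_x \int_0^\tau a(t)\,t\,dt ,
\]
while the quantity it is meant to approximate is
\[
  B \;:=\; \int_0^\tau a(t)\,\mathcal{L}^c[W](t)\,dt \;=\; \sum_x \int_0^\tau a(t)\,\bigl(1-e^{-w_x t}\bigr)\,dt .
\]
Both are deterministic functions of $W$, so the ``with probability at least $1-\beta$'' clause is vacuous here and is retained only for uniformity with the companion tail estimate; it suffices to bound $|A-B|$ deterministically. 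Since $a\ge 0$, the whole thing reduces to comparing the integrands $w_x t$ and $1-e^{-w_x t}$ on $t\in[0,\tau]$, key by key.

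The one ingredient I would use is the elementary two‑sided bound $z-\tfrac{z^2}{2}\le 1-e^{-z}\le z$ for all $z\ge 0$, which follows from $g(z):=1-e^{-z}-z+\tfrac{z^2}{2}$ having $g(0)=g'(0)=0$ and $g''(z)=1-e^{-z}\ge 0$ (and the analogous one‑line argument for the upper bound). Applying it with $z=w_x t$: the hypothesis $\tau\le\sqrt{\eps}/\textsf{Max}(W)$ gives $w_x t\le w_x\tau\le\textsf{Max}(W)\,\tau\le\sqrt{\eps}$ for every key $x$ and every $t\in[0,\tau]$, hence $0\le w_x t-(1-e^{-w_x t})\le\tfrac12(w_x t)^2$ and $1-e^{-w_x t}\ge w_x t\,(1-\tfrac{\sqrt{\eps}}{2})$ on this range. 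Integrating the first inequality against $a(t)\,dt$, summing over $x$, and using $\sum_x w_x^2\le\textsf{Max}(W)\,\textsf{Sum}(W)$ together with $\int_0^\tau a(t)\,t^2\,dt\le\tau\int_0^\tau a(t)\,t\,dt$ yields
\[
  0 \;\le\; A-B \;\le\; \tfrac12\,\textsf{Max}(W)\,\textsf{Sum}(W)\int_0^\tau a(t)\,t^2\,dt \;\le\; \tfrac12\,\textsf{Max}(W)\,\tau\cdot A .
\]
Integrating the second inequality gives $B\ge(1-\tfrac{\sqrt{\eps}}{2})A\ge\tfrac12 A$ (using $\eps<1$), so $A\le 2B$ and therefore $A-B\le\textsf{Max}(W)\,\tau\cdot B$: the first term overestimates $B$ by a relative amount at most $\textsf{Max}(W)\,\tau$, which under the stated choice of $\tau$ is at most $\sqrt{\eps}$.

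I do not expect a genuine obstacle here — the argument is just the convexity inequality for $1-e^{-z}$ plus the trivial facts $\sum_x w_x^2\le\textsf{Max}(W)\textsf{Sum}(W)$ and $t^2\le\tau t$ on $[0,\tau]$. The one point I would be careful about is the exponent of $\eps$: the computation produces a relative error of order $\textsf{Max}(W)\,\tau$, so a threshold $\tau\le\sqrt{\eps}/\textsf{Max}(W)$ literally gives relative error $\sqrt{\eps}$, whereas relative error $\eps$ would require $\tau\le\eps/\textsf{Max}(W)$. Before committing to the constant in the statement I would reconcile this with how $\tau$ is actually calibrated against the tail (second‑term) concentration estimate, since that term contributes an inherent $\sqrt{\cdot}$ factor and may well be the reason the $\sqrt{\eps}$ threshold is the natural one to state.
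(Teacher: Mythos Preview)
Your argument is correct and follows essentially the same route as the paper: both reduce to the pointwise comparison of $w_x t$ against $1-e^{-w_x t}$ via the Taylor bound, with the paper deferring the arithmetic to Lemma~2.1 of \citet{Cohen2016HyperExtended} while you carry it out explicitly. Your observation about the exponent is well taken: the paper's own proof asserts ``the relative error of this approximation is bounded by $w_x t$'' and then imposes $w_x t \le \sqrt{\eps}$, which yields relative error $\sqrt{\eps}$ rather than $\eps$ exactly as you found---this is a cosmetic slip in the statement that does not affect the downstream use, since the final Bernstein-sketch error budget absorbs constants and the $\sqrt{\eps}$ threshold is indeed what the tail analysis in \cref{claim:secondterm} is calibrated against.
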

\begin{proof}
    The first term $\textsf{Sum}(W) \int_0^\tau a(t)t \, dt$ relies on the approximation $1 - e^{-w_x t} \approx w_x t$.
The relative error of this approximation is bounded by $w_x t$ for small inputs.
To ensure the error is at most $\eps$ for \emph{all} keys in the stream simultaneously, we require:
\[
    w_x t \;\le\; \sqrt{\eps} \quad \forall x \in W, \forall t \in [0, \tau].
\]
This condition holds if and only if the maximum integration limit satisfies $\tau \le \frac{\sqrt{\eps}}{\max_x w_x}$.
Since our chosen $\tau$ satisfies $\tau \le \frac{\sqrt{\eps}}{\textsf{Max}(W)}$, the linear approximation is valid, and the integral over $[0, \tau]$ has relative error at most $\eps$ (Lemma 2.1 in \citep{Cohen2016HyperExtended}).
\end{proof}

\begin{claim}[High-range Variance] \label{claim:secondterm}
    Let $\eps, \beta \in (0,1)$ and for $Y \geq \textsf{Sum}(W)$ let the granularity $r \geq \Theta(\eps^{-2.5}\log(1/\beta) Y/\textsf{Sum}(W))$. Then for $\tau \ge \frac{\sqrt{\eps}}{Y}$, the sampling failure probability is at most $\beta$.
\end{claim}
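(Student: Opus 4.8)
The plan is to reduce the statement to a Bernstein concentration bound for a sum of independent bounded random variables, and then to control the variance‑to‑range ratio using the constraint on $\tau$. First I would pass to the aggregated view: for an aggregated key $x$ of weight $w_x$, the min‑of‑exponentials (mergeability) identity shows that the value contributed by output key $H(x,k)$ to $\textsf{MxDist}(E)$ is $X_{x,k}:=\int_{\max(\tau,\,Y_{x,k})}^{\infty} a(t)\,dt$ with $Y_{x,k}\sim\operatorname{Exp}(w_x)$, and that the family $\{X_{x,k}\}_{x,\,k\in[r]}$ is mutually independent (distinct keys and distinct indices $k$ draw independent randomness). Hence $\tfrac1r\textsf{MxDist}(E)=\tfrac1r\sum_x\sum_{k=1}^r X_{x,k}$, where each $X_{x,k}\in[0,A_\tau]$ for $A_\tau:=\int_\tau^\infty a(t)\,dt<\infty$ (finiteness for every $\tau>0$ follows from the L\'evy integrability $\int_0^\infty\min(1,t)\,a(t)\,dt<\infty$). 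A Fubini computation gives $\E[X_{x,k}]=\int_\tau^\infty a(t)\,(1-e^{-w_x t})\,dt$, so $\E[\tfrac1r\textsf{MxDist}(E)]=\int_\tau^\infty a(t)\,\mathcal L^c[W](t)\,dt=:F_{\mathrm{tail}}$, which is exactly the tail integral that the second term of $\tilde F(W)$ should estimate.

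Next I would apply Bernstein's inequality. Since the summands lie in $[0,A_\tau]$ and $\Var\!\big[\sum_{x,k}X_{x,k}\big]\le\sum_{x,k}\E[X_{x,k}^2]\le A_\tau\sum_{x,k}\E[X_{x,k}]=rA_\tau F_{\mathrm{tail}}$, taking deviation $\lambda=\eps\,rF_{\mathrm{tail}}$ yields
\[
\Pr\!\Big[\big|\tfrac1r\textsf{MxDist}(E)-F_{\mathrm{tail}}\big|>\eps\,F_{\mathrm{tail}}\Big]\;\le\;2\exp\!\Big(-\tfrac{\eps^2\,r\,F_{\mathrm{tail}}}{4A_\tau}\Big),
\]
so it suffices to take $r\ge \tfrac{4A_\tau}{\eps^2 F_{\mathrm{tail}}}\log\tfrac2\beta$.

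It remains to bound the ratio $A_\tau/F_{\mathrm{tail}}$. By monotonicity of $t\mapsto\mathcal L^c[W](t)$ we have $F_{\mathrm{tail}}\ge \mathcal L^c[W](\tau)\cdot A_\tau$, so it is enough to lower‑bound $\mathcal L^c[W](\tau)=\sum_x(1-e^{-w_x\tau})$. Invoking the same window on $\tau$ as in \cref{claim:firstterm}, namely $\tau\le\sqrt\eps/\textsf{Max}(W)$, we get $w_x\tau\le\sqrt\eps\le1$ for every $x$, hence $1-e^{-w_x\tau}\ge\tfrac12 w_x\tau$ and $\mathcal L^c[W](\tau)\ge\tfrac12\tau\,\textsf{Sum}(W)\ge \tfrac{\sqrt\eps\,\textsf{Sum}(W)}{2Y}$ using $\tau\ge\sqrt\eps/Y$. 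Therefore $A_\tau/F_{\mathrm{tail}}\le 2Y/(\sqrt\eps\,\textsf{Sum}(W))$, and substituting into the bound above gives $r\ge \tfrac{8}{\eps^{2.5}}\log\tfrac2\beta\cdot\tfrac{Y}{\textsf{Sum}(W)}$, i.e.\ the claimed $r\ge\Theta(\eps^{-2.5}\log(1/\beta)\,Y/\textsf{Sum}(W))$.

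The main obstacle is this last step: squeezing out the lower bound on $\mathcal L^c[W](\tau)$ is what forces the $\sqrt\eps$ loss and hence the $\eps^{-2.5}$ (rather than $\eps^{-2}$) dependence in $r$, and it implicitly requires the admissible window $\sqrt\eps/Y\le\tau\le\sqrt\eps/\textsf{Max}(W)$ to be non‑empty — which holds precisely because $Y\ge\textsf{Sum}(W)\ge\textsf{Max}(W)$, exactly the regime in which this claim is later combined with \cref{claim:firstterm}. Two more routine points warrant care: justifying the min‑of‑exponentials reduction so that the $\max$ over a key's increments corresponds to an $\operatorname{Exp}(w_x)$ threshold, and confirming $A_\tau<\infty$ for a general Bernstein $f\in\mathcal B$; both are standard but should be stated explicitly.
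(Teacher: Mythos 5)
Your proof is correct and follows the same overall structure as the paper's: establish a concentration exponent of order $r\eps^2\,\mathcal L^c[W](\tau)$ for the sampled tail estimator, then lower-bound $\mathcal L^c[W](\tau)\gtrsim\sqrt\eps\,\textsf{Sum}(W)/Y$ to conclude that $r=\Theta(\eps^{-2.5}\log(1/\beta)\,Y/\textsf{Sum}(W))$ suffices. The difference is that the paper imports both ingredients as black boxes (Lemma~5.1 of \citet{Cohen2016HyperExtended} for the concentration bound $2\exp(-r\eps^2\mathcal L^c[W](\tau)/3)$, and Lemma~3.3 for the bound $\mathcal L^c[W](\tau)\ge(1-1/e)\,\textsf{Sum}(W)\min\{1/\textsf{Max}(W),\tau\}$), whereas you re-derive them: the first via the min-of-exponentials reduction plus Bernstein's inequality with range $A_\tau$ and variance $\le rA_\tau F_{\mathrm{tail}}$, the second via linearization of $1-e^{-w_x\tau}$. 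This makes your argument self-contained, at the cost of one small gap: your linearization requires $\tau\le\sqrt\eps/\textsf{Max}(W)$, which the claim as stated does not assume (only $\tau\ge\sqrt\eps/Y$). The paper's Lemma~3.3 avoids this through the $\min\{1/\textsf{Max}(W),\tau\}$ term; you can recover the same generality by monotonicity of $\mathcal L^c[W](\cdot)$ — replace $\tau$ by $\min\{\tau,\,1/\textsf{Max}(W)\}$ before linearizing and observe that in the regime $\tau>1/\textsf{Max}(W)$ one gets $\mathcal L^c[W](\tau)\ge\tfrac12\textsf{Sum}(W)/\textsf{Max}(W)\ge\tfrac12\ge\sqrt\eps\,\textsf{Sum}(W)/(2Y)$, which is the same bound. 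With that repair the argument is complete.
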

\begin{proof}
    The second term estimates the tail integral via sampling. By Lemma 5.1 of \citep{Cohen2016HyperExtended}, the failure probability is bounded by
    \begin{equation} \label{eq:concentration-exponent}
        2\exp(-r\eps^2 \mathcal{L}^c[W](\tau)/3) .
    \end{equation}    
We need to ensure the exponent is large enough, i.e., $\mathcal{L}^c[W](\tau) \gtrsim \sqrt{\eps}$.
Since $\mathcal{L}^c[W](t)$ is a monotonically increasing function of $t$, it suffices to bound it at the minimum allowed cutoff $\tau_{\min} = \frac{\sqrt{\eps}}{ Y}$.

Using Lemma 3.3 of \citep{Cohen2016HyperExtended}:
\[
    \mathcal{L}^c[W](\tau) \;\ge\; \left(1 - \frac{1}{e}\right) \textsf{Sum}(W) \min\left\{\frac{1}{\textsf{Max}(W)}, \tau\right\}.
\]
Substituting $\tau = \tau_{\min} = \frac{\sqrt{\eps}}{Y}$:
\begin{align*}
    \mathcal{L}^c[W](\tau_{\min}) 
    &\;\ge\; \left(1 - \frac{1}{e}\right) \textsf{Sum}(W) \min\left\{\frac{1}{\textsf{Max}(W)}, \frac{\sqrt{\eps} }{ Y}\right\} \\
    &\;=\; \left(1 - \frac{1}{e}\right) \min\left\{\frac{\textsf{Sum}(W)}{\textsf{Max}(W)}, \sqrt{\eps}\frac{\textsf{Sum}(W) }{Y}\right\}.
\end{align*}
Since $\textsf{Sum}(W) \ge \textsf{Max}(W)$ and $\eps < 1$ and $Y\geq \textsf{Sum}(W)$, the minimum is $\sqrt{\eps} \textsf{Sum}(W)/Y$. Thus:
\[
    \mathcal{L}^c[W](\tau) \;\ge\; \mathcal{L}^c[W](\tau_{\min}) \;\ge\; \left(1 - \frac{1}{e}\right)\sqrt{\eps} \frac{\textsf{Sum}(W)}{Y}.
\]
With $r = \Theta(\eps^{-2.5}\log(1/\beta) Y/\textsf{Sum}(W))$, the concentration exponent \cref{eq:concentration-exponent} becomes $O(\log(1/\beta))$, ensuring the failure probability is at most $\beta$.
\end{proof}

\begin{corollary} [Concentration of the Reduction \citep{Cohen2016HyperExtended}]
\label{cor:reduction-concentration}
Let $\eps, \beta \in (0,1)$ and granularity $r = \Theta(\eps^{-2.5}\log(1/\beta))$.
For any cutoff $\tau$ in the range:
\[
    \tau \;\in\; \left[ \frac{\sqrt{\eps}}{\textsf{Sum}(W)}, \;\frac{\sqrt{\eps}}{\textsf{Max}(W)} \right],
\]
with probability at least $1-\beta$, the estimator $\tilde{F}(W)$ approximates $F(W)$ within relative error $\eps$.   
\end{corollary}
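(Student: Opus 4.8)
The plan is to split both the target statistic $F(W)$ and the estimator $\tilde F(W)$ at the cutoff $\tau$ and to bound the low range $[0,\tau]$ and the high range $[\tau,\infty)$ separately, invoking \cref{claim:firstterm} for the former and \cref{claim:secondterm} for the latter. Using the transform representation \eqref{eq:transformview}, write
\[
  F(W) \;=\; F_{\mathrm{low}} + F_{\mathrm{high}}, \qquad
  F_{\mathrm{low}} := \int_0^\tau a(t)\,\mathcal{L}^c[W](t)\,dt, \quad
  F_{\mathrm{high}} := \int_\tau^\infty a(t)\,\mathcal{L}^c[W](t)\,dt,
\]
and correspondingly decompose the estimator \eqref{eq:Bernsteinapprox} as $\tilde F(W) = \tilde F_{\mathrm{low}} + \tilde F_{\mathrm{high}}$ with $\tilde F_{\mathrm{low}} := \textsf{Sum}(W)\int_0^\tau a(t)\,t\,dt$ and $\tilde F_{\mathrm{high}} := \tfrac{1}{r}\,\textsf{MxDist}(E)$. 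Note first that $a(t)\ge 0$ and $\mathcal{L}^c[W](t) = \sum_x(1-e^{-w_x t})\ge 0$, so $F_{\mathrm{low}},F_{\mathrm{high}}\ge 0$; and that the interval $[\sqrt{\eps}/\textsf{Sum}(W),\,\sqrt{\eps}/\textsf{Max}(W)]$ is nonempty since $\textsf{Max}(W)\le\textsf{Sum}(W)$.

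The first step is to apply \cref{claim:firstterm}: since the hypothesis supplies $\tau\le\sqrt{\eps}/\textsf{Max}(W)$, the linear approximation $1-e^{-w_x t}\approx w_x t$ is accurate to relative error $\eps$ uniformly in $t\in[0,\tau]$ and over all keys, and integrating this pointwise relative bound against the nonnegative kernel $a(t)$ yields $|\tilde F_{\mathrm{low}} - F_{\mathrm{low}}|\le\eps\,F_{\mathrm{low}}$ \emph{deterministically}. The second step is to apply \cref{claim:secondterm} with the choice $Y:=\textsf{Sum}(W)$: its granularity requirement $r\ge\Theta(\eps^{-2.5}\log(1/\beta)\,Y/\textsf{Sum}(W))$ then reduces to $r=\Theta(\eps^{-2.5}\log(1/\beta))$, exactly the hypothesis of the corollary, and its condition $\tau\ge\sqrt{\eps}/Y=\sqrt{\eps}/\textsf{Sum}(W)$ is precisely the lower endpoint of the admissible range for $\tau$. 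Hence, with probability at least $1-\beta$, $|\tilde F_{\mathrm{high}} - F_{\mathrm{high}}|\le\eps\,F_{\mathrm{high}}$.

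The final step is to combine the two estimates on the good event of \cref{claim:secondterm} (of probability at least $1-\beta$; the low-range bound is deterministic and contributes no extra failure probability). By the triangle inequality on nonnegative summands,
\[
  |\tilde F(W) - F(W)| \;\le\; |\tilde F_{\mathrm{low}} - F_{\mathrm{low}}| + |\tilde F_{\mathrm{high}} - F_{\mathrm{high}}| \;\le\; \eps\,(F_{\mathrm{low}} + F_{\mathrm{high}}) \;=\; \eps\,F(W),
\]
which is the claimed relative-error guarantee. I do not anticipate a genuine obstacle: the substance is already packaged in \cref{claim:firstterm,claim:secondterm}, and the only points needing care are (i) checking that instantiating \cref{claim:secondterm} at $Y=\textsf{Sum}(W)$ reproduces the stated granularity and that the corollary's interval for $\tau$ is exactly the intersection of the two claims' admissible ranges, and (ii) observing that per-piece relative errors on nonnegative quantities aggregate into the same relative error on their sum, so no further union bound over keys or over $t$ is incurred beyond what \cref{claim:firstterm,claim:secondterm} already absorb.
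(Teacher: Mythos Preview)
Your proposal is correct and follows essentially the same approach as the paper: split the integral at $\tau$, apply \cref{claim:firstterm} to the low range and \cref{claim:secondterm} (with $Y=\textsf{Sum}(W)$) to the high range, and observe that the admissible interval for $\tau$ is exactly the intersection of the two claims' conditions. Your write-up is in fact more explicit than the paper's two-sentence proof, correctly noting that the low-range bound is deterministic and spelling out why per-piece relative errors on nonnegative summands aggregate cleanly.
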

\begin{proof}
    We combine \cref{claim:firstterm} and \cref{claim:secondterm}, setting $Y = \textsf{Sum}(W)$.
Since any $\tau$ in the specified interval satisfies both the upper bound required for the bias (\cref{claim:firstterm}) and the lower bound required for the variance (\cref{claim:secondterm}), the total error is bounded by $\eps F(W)$ with probability $1-\beta$.
\end{proof}

\subsection{Input-independent Parametrization}

Consider a dataset $W$ that contains at most $T$ updates with values $\in [\Delta_{\min},\Delta_{\max}]$.

\begin{corollary} [Input-independent threshold]
\label{cor:input-independent-reduction-concentration}
Let $\eps, \beta \in (0,1)$. For
cutoff $\tau = \frac{\sqrt{\eps}}{T \Delta_{\max}}$ 
and granularity $r = \Theta(\eps^{-2.5}\log(1/\beta) T  \Delta_{\max}/\Delta_{\min})$,
with probability at least $1-\beta$, the estimator $\tilde{F}(W)$ \eqref{eq:Bernsteinapprox} approximates the Bernstein statistics $F(W)$ within relative error $\eps$.   
\end{corollary}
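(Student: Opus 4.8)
The plan is to reduce directly to the two building-block bounds \cref{claim:firstterm} and \cref{claim:secondterm}, substituting worst-case bounds for the data-dependent quantities $\textsf{Max}(W)$ and $\textsf{Sum}(W)$ that come from the promise that $W$ has at most $T$ updates with values in $[\Delta_{\min},\Delta_{\max}]$. If $W$ is empty then $F(W)=\tilde F(W)=0$ and there is nothing to prove, so assume $W$ contains at least one update. First I would record the three crude inequalities
\[
  \textsf{Max}(W)\;\le\;\textsf{Sum}(W)\;\le\;T\Delta_{\max},
  \qquad
  \textsf{Sum}(W)\;\ge\;\Delta_{\min},
\]
where the first holds because the weights $w_x\ge 0$ and each $w_x$ is a sum of at most $T$ updates of size at most $\Delta_{\max}$, and the last holds because a single update already contributes at least $\Delta_{\min}$ to $\textsf{Sum}(W)$.

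Next I would check the low-range condition: \cref{claim:firstterm} requires only $\tau\le \sqrt{\eps}/\textsf{Max}(W)$, and since $\textsf{Max}(W)\le T\Delta_{\max}$ the chosen value $\tau=\sqrt{\eps}/(T\Delta_{\max})$ satisfies it, so the first (linear) term of $\tilde F(W)$ approximates $\int_0^\tau a(t)\mathcal{L}^c[W](t)\,dt$ within relative error $\eps$. Then I would check the high-range condition by applying \cref{claim:secondterm} with the surrogate upper bound $Y:=T\Delta_{\max}\ge\textsf{Sum}(W)$: with this choice $\tau=\sqrt{\eps}/Y$, which meets the hypothesis $\tau\ge\sqrt{\eps}/Y$, and the required granularity is $r=\Theta\!\bigl(\eps^{-2.5}\log(1/\beta)\,Y/\textsf{Sum}(W)\bigr)$. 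Using $\textsf{Sum}(W)\ge\Delta_{\min}$ gives $Y/\textsf{Sum}(W)\le T\Delta_{\max}/\Delta_{\min}$, so the stated choice $r=\Theta\!\bigl(\eps^{-2.5}\log(1/\beta)\,T\Delta_{\max}/\Delta_{\min}\bigr)$ is at least as large as needed; hence with probability at least $1-\beta$ the second term approximates $\int_\tau^\infty a(t)\mathcal{L}^c[W](t)\,dt$ within relative error $\eps$.

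Finally I would combine the two pieces exactly as in the proof of \cref{cor:reduction-concentration}: on the event of probability at least $1-\beta$ that the tail estimate is accurate, the split $F(W)=\int_0^\tau a(t)\mathcal{L}^c[W](t)\,dt+\int_\tau^\infty a(t)\mathcal{L}^c[W](t)\,dt$ together with the two relative-error bounds yields $|\tilde F(W)-F(W)|\le\eps\,F(W)$ after rescaling $\eps$ by a constant (absorbed into the $\Theta(\cdot)$ of $r$ and the constant in $\tau$). The one point to watch is that the interval $[\sqrt{\eps}/\textsf{Sum}(W),\,\sqrt{\eps}/\textsf{Max}(W)]$ appearing in \cref{cor:reduction-concentration} may fail to contain our fixed $\tau$ (or even be empty) precisely when $\textsf{Sum}(W)\ll T\Delta_{\max}$, so one cannot invoke that corollary as a black box; re-running \cref{claim:secondterm} with the surrogate $Y=T\Delta_{\max}$ is what both circumvents this and forces the extra $T\Delta_{\max}/\Delta_{\min}$ factor in the granularity. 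Beyond this bookkeeping there is no genuine obstacle.
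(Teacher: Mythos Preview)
Your proof is correct and follows essentially the same approach as the paper: apply \cref{claim:firstterm} using $\textsf{Max}(W)\le T\Delta_{\max}$, and apply \cref{claim:secondterm} with the surrogate $Y=T\Delta_{\max}$ together with $\textsf{Sum}(W)\ge\Delta_{\min}$. Your version is more detailed (handling the empty case, explaining why \cref{cor:reduction-concentration} cannot be invoked as a black box), but the core argument is identical.
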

\begin{proof}
For the low range accuracy, apply \cref{claim:firstterm} observing that since $ T \Delta_{\max} \geq \mathsf{Sum}(W) \geq \mathsf{Max}(W)$ and hence $\tau \leq \sqrt{\eps}/\mathsf{Max}(W)$. 

For the high range accuracy, apply \cref{claim:secondterm} with
$Y := T \Delta_{\max}$.
Observe that $ Y \geq \mathsf{Sum}(W)$ and that $\mathsf{Sum}(W) > 0 \implies Y /\Delta_{\min} \geq Y/\mathsf{Sum}(W)$ on a dataset with at most $T$ updates.
\end{proof}

\subsection{Max-Distinct to Cardinality} \label{sec:maxdistinct2distinct}

\begin{algorithm2e}[h]
\caption{Stream Element Mapping $\mathcal{M^*}(x, \Delta)$\label{alg:elementmapmulti}}
\KwIn{Update $(x, \Delta) \in W$, parameters $r$, $\tau= \tau_0 < \cdots < \tau_m $}
\KwOut{Output elements emitted to Streams $E_1,\ldots,E_m$}
\For{$k \in \{1, \dots, r\}$}{
    Generate output key $z_k \gets H(x, k)$ \tcp*{Unique key for (input key, index) pair}
    \For{$i\in \{1, \dots, m\}$}{
        Sample independent $y \sim \operatorname{Exp}(\Delta)$\;
        \If{$y < \tau_i$}{Emit output element $\mathsf{Insert}(z_k)$ to Stream $E_i$}
    }
}
\end{algorithm2e}

We now introduce a new approximation of the tail part
\begin{equation} \label{eq:tail}
I := \int_{\tau}^{\infty} a(t) \mathcal{L}^c[W](t) \, dt
\end{equation}
that has the form of a linear combination of cardinality statistics:
\begin{equation} \label{eq:tail_card_approx}
\tilde{I} := \frac{1}{r} \sum_{i\in [m]} \alpha_i \textsf{Distinct}(E_i).
\end{equation}

The datasets 
$(E_i)_{i\in [m]}$ are generated by the randomized map \cref{alg:elementmapmulti} that is parametrized by $r$ and $(\tau_i)_{i=0}^m$. The thresholds $(\tau_i)_{i=0}^m$ and the weights $\alpha_i$ are determined as follows from $\tau$.

Let $V(t) = \int_{t}^{\infty} a(x) \, dx$ be the value function.
Define the sequence of target values $v_0, v_1, \dots, v_m$ geometrically:
\[ v_i \;=\; (1+\eps)^{-i} V(\tau). \]
Set the thresholds $\tau_i$ to track these values:
\[ \tau_i \;=\; \sup \{ t : V(t) \ge v_i \}. \]
(If the set is empty, $\tau_i = \infty$).
Truncate the sequence at $m$ such that $v_m \le \frac{\eps}{T} f(\Delta_{\min})$.

Define weights $\alpha_i := V(\tau_{i-1}) - V(\tau_i)$ for $i\in [m]$.

\subsubsection{Bound the number of terms}
We bound the number of terms in \cref{eq:tail_card_approx}. We use the following:
\begin{claim}[Tail Integral Bound]
\label{claim:tailbound}
For any Bernstein function $f \in \mathcal{B}$ and cutoff $\tau = \frac{\sqrt{\eps}}{T \Delta_{\max}}$, the tail integral of the measure is bounded by:
\begin{equation} \label{eq:tau-up}
    \int_{\tau}^\infty a(t) \, dt \;\le\; \frac{2}{\sqrt{\eps}} \cdot \frac{T \Delta_{\max}}{\Delta_{\min}} \cdot f(\Delta_{\min}). 
\end{equation}
\end{claim}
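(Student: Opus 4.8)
The plan is to bound the tail mass $\int_\tau^\infty a(t)\,dt$ by a single value of $f$ and then compare that value with $f(\Delta_{\min})$ via the sublinearity of Bernstein functions. Concretely, I would probe the L\'evy--Khintchine representation of \cref{def:bernstein-functions} at the point $w_0 := 1/\tau = T\Delta_{\max}/\sqrt{\eps}$. Discarding the contribution of $t \in (0,\tau)$ (nonnegative, since $a \ge 0$) and using that $1 - e^{-w_0 t} = 1 - e^{-t/\tau} \ge 1 - e^{-1}$ whenever $t \ge \tau$, we get
\[
  f(w_0) \;=\; \int_0^\infty a(t)\bigl(1 - e^{-w_0 t}\bigr)\,dt \;\ge\; (1 - e^{-1})\int_\tau^\infty a(t)\,dt ,
\]
and hence $\int_\tau^\infty a(t)\,dt \le f(w_0)/(1-e^{-1}) \le 2 f(w_0)$.

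The second step is to show $f(w_0) \le \frac{T\Delta_{\max}}{\sqrt{\eps}\,\Delta_{\min}}\, f(\Delta_{\min})$. Writing $w_0 = c\,\Delta_{\min}$ with $c := T\Delta_{\max}/(\sqrt{\eps}\,\Delta_{\min}) \ge 1$ (this uses $T \ge 1$, $\Delta_{\max} \ge \Delta_{\min}$, and $\eps < 1$), it suffices to invoke the scaling bound $f(cw) \le c\,f(w)$, valid for all $c \ge 1$, $w \ge 0$, and $f \in \mathcal{B}$. I would prove this directly from the integral representation rather than citing concavity: it reduces to the elementary pointwise inequality $1 - e^{-cs} \le c(1 - e^{-s})$ for $s \ge 0$, $c \ge 1$, which follows because the difference vanishes at $s=0$ and has derivative $c(e^{-s} - e^{-cs}) \ge 0$. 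Integrating this inequality against $a(t)\,dt$ (with $s = wt$) gives $f(cw) \le c f(w)$. Chaining the two steps yields
\[
  \int_\tau^\infty a(t)\,dt \;\le\; 2 f(w_0) \;\le\; \frac{2}{\sqrt{\eps}}\cdot\frac{T\Delta_{\max}}{\Delta_{\min}}\, f(\Delta_{\min}),
\]
which is the claimed bound.

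I do not anticipate a genuine obstacle here; the argument is two short inequalities. The only points requiring a little care are (i) confirming that the scaling factor $c$ is indeed at least $1$ under the standing assumptions on $T,\eps,\Delta_{\min},\Delta_{\max}$, and (ii) establishing $f(cw)\le cf(w)$ for the specific class $\mathcal{B}$ (zero-drift Bernstein functions, so $f(0)=0$) --- I prefer the one-line derivative computation above to an appeal to concavity, since it stays self-contained. The numerical constant is loose: $1/(1-e^{-1}) \approx 1.582 < 2$, so there is slack to absorb any constant-factor bookkeeping.
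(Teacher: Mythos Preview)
Your proposal is correct and follows essentially the same approach as the paper: first bound $\int_\tau^\infty a(t)\,dt \le (1-e^{-1})^{-1} f(1/\tau)$ by probing the L\'evy--Khintchine representation at $w_0 = 1/\tau$, then use the sublinearity $f(\lambda x) \le \lambda f(x)$ for $\lambda \ge 1$ to compare $f(1/\tau)$ with $f(\Delta_{\min})$. The only difference is cosmetic: the paper simply cites the sublinearity of Bernstein functions, whereas you give the one-line derivative argument for $1 - e^{-cs} \le c(1-e^{-s})$ to keep things self-contained.
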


\begin{proof}
First, we relate the tail integral to the function value $f(1/\tau)$. Recall that $f(x) = \int_0^\infty (1 - e^{-xt}) a(t) \, dt$. Restricting the integral to $t \ge \tau$ and setting $x=1/\tau$, we have:
\[
    f(1/\tau) \;\ge\; \int_{\tau}^\infty (1 - e^{-t/\tau}) a(t) \, dt.
\]
For $t \ge \tau$, we have $t/\tau \ge 1$, which implies $1 - e^{-t/\tau} \ge 1 - e^{-1}$. Thus:
\[
    f(1/\tau) \;\ge\; (1 - e^{-1}) \int_{\tau}^\infty a(t) \, dt \implies \int_{\tau}^\infty a(t) \, dt \;\le\; \frac{1}{1-e^{-1}} f(1/\tau).
\]
Next, we use the subadditivity of Bernstein functions ($f(\lambda x) \le \lambda f(x)$ for $\lambda \ge 1$) to scale from $\Delta_{\min}$ to $1/\tau$. Note that since $\Delta_{\max}\geq \Delta_{\min}$ and $\eps\in (0,1)$, the factor $\lambda = \frac{1}{\tau \Delta_{\min}}$ is $\ge 1$. Thus:
\[
    f(1/\tau) \;=\; f\left( \frac{1}{\tau \Delta_{\min}} \cdot \Delta_{\min} \right) \;\le\; \frac{1}{\tau \Delta_{\min}} f(\Delta_{\min}).
\]
Substituting $\tau = \frac{\sqrt{\eps}}{T \Delta_{\max}}$ into the expression and noting that $(1-e^{-1})^{-1} \approx 1.58 \le 2$ yields the claim.
\end{proof}

\begin{lemma}[Bound on the number of terms] \label{lem:terms_bound}
    For any Bernstein function $f \in \mathcal{B}$, dataset size  $T$, and cutoff $\tau = \frac{\sqrt{\eps}}{T \Delta_{\max}}$, $m = O(\eps^{-1} \log T)$.
\end{lemma}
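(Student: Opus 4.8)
The plan is to turn the definition of $m$ into an explicit logarithm and then argue that its argument is polynomially bounded in $T$. Recall $V(t) := \int_t^\infty a(x)\,dx$, and that the $\tau_i$ (hence $m$) are pinned to the geometric sequence $v_i = (1+\eps)^{-i}V(\tau)$, truncated at the first index $m$ with $v_m \le \frac{\eps}{T}f(\Delta_{\min})$. So either $V(\tau) \le \frac{\eps}{T}f(\Delta_{\min})$, in which case $m=0$ and there is nothing to prove, or
\[
  m \;\le\; 1 + \log_{1+\eps}\!\left(\frac{T\,V(\tau)}{\eps\,f(\Delta_{\min})}\right)
  \;\le\; 1 + \frac{2}{\eps}\,\ln\!\left(\frac{T\,V(\tau)}{\eps\,f(\Delta_{\min})}\right),
\]
where the last step uses $\ln(1+\eps)\ge \eps/2$ for $\eps\in(0,1]$. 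It therefore suffices to show that $\frac{T\,V(\tau)}{\eps\,f(\Delta_{\min})}$ is at most polynomial in $T$ and $1/\eps$.

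The one substantive ingredient is a bound on $V(\tau)$, which is exactly the Tail Integral Bound \cref{claim:tailbound}: with the prescribed cutoff $\tau = \frac{\sqrt{\eps}}{T\Delta_{\max}}$ it gives $V(\tau) = \int_\tau^\infty a(t)\,dt \le \frac{2}{\sqrt{\eps}}\cdot\frac{T\Delta_{\max}}{\Delta_{\min}}\cdot f(\Delta_{\min})$. Substituting this into the ratio above and cancelling the common (positive, for nontrivial $f$) factor $f(\Delta_{\min})$ yields
\[
  \frac{T\,V(\tau)}{\eps\,f(\Delta_{\min})} \;\le\; \frac{2\,T^2\,\Delta_{\max}}{\eps^{3/2}\,\Delta_{\min}}.
\]

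Finally I would invoke the standing assumption $\Delta_{\max}/\Delta_{\min} = O(\mathrm{poly}(T))$, so the right-hand side is $O(\mathrm{poly}(T)\cdot\eps^{-3/2})$ and its logarithm is $O(\log T + \log(1/\eps)) = O(\log T)$ in the relevant regime $\eps \ge 1/\mathrm{poly}(T)$ (the $\log(1/\eps)$ term being absorbed). Plugging back gives $m \le 1 + \frac{2}{\eps}\cdot O(\log T) = O(\eps^{-1}\log T)$, as claimed. I do not expect any real obstacle beyond correctly invoking \cref{claim:tailbound} and tracking the polynomial factors; the only case needing a word is the degenerate one where the Lévy measure has no mass above $\tau$ (so $V(\tau)=0$), which is dispatched immediately by the $m=0$ branch.
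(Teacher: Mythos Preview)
Your proof is correct and follows essentially the same approach as the paper: both arguments use that $v_{m-1} > \frac{\eps}{T}f(\Delta_{\min})$ to bound $(1+\eps)^{m-1}$ by $\frac{T\,V(\tau)}{\eps f(\Delta_{\min})}$, then invoke \cref{claim:tailbound} for $V(\tau)$ and the assumption $\Delta_{\max}/\Delta_{\min}=O(\mathrm{poly}(T))$ to conclude. Your treatment is slightly more careful about the degenerate cases ($m=0$, $V(\tau)=0$) and the absorption of the $\log(1/\eps)$ term, but otherwise matches the paper's argument step for step.
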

\begin{proof}
    If $m \le 2$, the bound holds trivially. Assume $m > 2$.
Since the cutoff was not at step $m-1$, it holds that:
\[
    v_{m-1} \;>\; \frac{\eps}{T} f(\Delta_{\min}).
\]
The number of levels $m$ is determined by the ratio of the start value $v_0$ to $v_{m-1}$:
\[
    (1+\eps)^{m-1} \;=\; \frac{v_0}{v_{m-1}} \;<\; \frac{v_0}{\frac{\eps}{T} f(\Delta_{\min})}.
\]
Substituting the bound for $v_0$ from \cref{claim:tailbound} ($v_0 \le O(\eps^{-0.5}) \frac{T \Delta_{\max}}{\Delta_{\min}} f(\Delta_{\min})$):
\[
    (1+\eps)^{m-1} \;<\; O\left(\eps^{-1.5}\right) \cdot T^2 \cdot \frac{\Delta_{\max}}{\Delta_{\min}}.
\]
Taking logarithms and using $\Delta_{\max}/\Delta_{\min} = \text{poly}(T)$ (as assumed in the statement on \cref{thm:Bernstein}), we obtain $m = O(\eps^{-1} \log T)$.
\end{proof}

The number of terms $m$ can be lower for some functions (which yields an improved storage bound):
\begin{itemize}
    \item \textbf{Moments ($w^p$):} The values scale polynomially, requiring $m = O(\eps^{-1} \log T)$.
    \item \textbf{Logarithmic ($\ln(1+w)$):} The values scale logarithmically, requiring $m = O(\eps^{-1} \log \log T)$.
    \item \textbf{Soft Capping:} The value is constant (step function), requiring $m=1$.
\end{itemize}

\subsubsection{Accuracy}

\begin{lemma}[Approximation via Cardinality]\label{lem:independent-approx}
Let the granularity $r$ and $\tau$ be set as in \cref{cor:input-independent-reduction-concentration}.
Let $\tau_0, \dots, \tau_m$ be the thresholds defined by the value levels $v_i = (1+\eps)^{-i} V(\tau_0)$ and let $\alpha_i = v_{i-1} - v_i$. Let $\tilde{I}$ as in \cref{eq:tail_card_approx} be the estimator constructed from the output streams of \cref{alg:elementmapmulti} for the tail $I$ as in \cref{eq:tail}.
With probability at least $1-\beta$:
\[
\left| \tilde{I} - I \right| \;\leq\; 2\eps I \;+\; \eps f(\Delta_{\min}).
\]
\end{lemma}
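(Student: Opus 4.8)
The plan is to bound $|\tilde I - I|$ by splitting it into a \emph{bias} term $|\E[\tilde I] - I|$ and a \emph{fluctuation} term $|\tilde I - \E[\tilde I]|$. First I would identify $\E[\tilde I]$. Fix a key $x$ with aggregate weight $w_x = \sum_{(x,\Delta)\in W}\Delta$ and an index $k\in[r]$; the output key $z_{x,k}=H(x,k)$ is inserted into $E_i$ iff some update $(x,\Delta_\ell)$ of $x$ drew $y<\tau_i$, and since $\min_\ell\operatorname{Exp}(\Delta_\ell)\sim\operatorname{Exp}(w_x)$ this has probability $1-e^{-w_x\tau_i}$. Because \cref{alg:elementmapmulti} uses an independent draw for every $(x,k,i,\ell)$, all these membership events are mutually independent across $x,k,i$, so $\E[\textsf{Distinct}(E_i)] = r\sum_x(1-e^{-w_x\tau_i}) = r\,\mathcal L^c[W](\tau_i)$ and $\E[\tilde I] = \bar I := \sum_{i\in[m]}\alpha_i\,\mathcal L^c[W](\tau_i)$.

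For the bias, I would write $1-e^{-w_x t} = \Pr_{Y_x\sim\operatorname{Exp}(w_x)}[Y_x\le t]$ and use Tonelli: with $V(t):=\int_t^\infty a(x)\,dx$, one gets $I = \int_\tau^\infty a(t)\,\mathcal L^c[W](t)\,dt = \sum_x \E_{Y_x}\!\big[V(\max(\tau,Y_x))\big]$ and, identically, $\bar I = \sum_x \E_{Y_x}\!\big[\sum_i\alpha_i\mathbf 1\{Y_x\le\tau_i\}\big]$. Put $\nu_x := V(\max(\tau,Y_x))\in[0,V(\tau)]=[0,v_0]$; since $V$ is non-increasing and $\tau_i\ge\tau$ for every $i$, we have $\mathbf 1\{Y_x\le\tau_i\}=\mathbf 1\{\nu_x\ge v_i\}$, so the bias reduces to a per-key discretization error. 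Using $v_i=(1+\eps)^{-i}v_0$ and $\alpha_i=v_{i-1}-v_i=\eps v_i$, a telescoping computation shows that for any $\nu\in[0,v_0]$, $\big|\sum_i\alpha_i\mathbf 1\{\nu\ge v_i\}-\nu\big|\le\eps\nu+v_m$. Averaging over $Y_x$ and summing over the at most $T$ distinct keys gives $|\bar I - I|\le\eps\sum_x\E[\nu_x]+Tv_m = \eps I + Tv_m \le \eps I + \eps f(\Delta_{\min})$, using the truncation rule $v_m\le\frac\eps T f(\Delta_{\min})$.

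For the fluctuation, each $\textsf{Distinct}(E_i)$ is a sum of at most $rT$ independent Bernoullis with mean $\mu_i:=r\,\mathcal L^c[W](\tau_i)\ge r\,\mathcal L^c[W](\tau)$, as $\tau_i\ge\tau$ and $\mathcal L^c[W]$ is increasing. By the lower bound established in the proof of \cref{claim:secondterm}, $\mathcal L^c[W](\tau)\ge(1-1/e)\,\textsf{Sum}(W)\,\tau\ge(1-1/e)\sqrt\eps\,\Delta_{\min}/(T\Delta_{\max})$, so with $r=\Theta(\eps^{-2.5}\log(1/\beta)\,T\Delta_{\max}/\Delta_{\min})$ — inflating the log factor by $O(\log m)$ to accommodate a union bound over the $m=O(\eps^{-1}\log T)$ levels — we get $\mu_i=\Omega(\eps^{-2}\log(m/\beta))$ for all $i$. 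A multiplicative Chernoff bound then yields $|\textsf{Distinct}(E_i)-\mu_i|\le\eps'\mu_i$ for all $i$ simultaneously, with probability at least $1-\beta$, for $\eps'=\Theta(\eps)$; hence $|\tilde I-\bar I| = \frac1r\big|\sum_i\alpha_i(\textsf{Distinct}(E_i)-\mu_i)\big|\le\eps'\sum_i\alpha_i\mathcal L^c[W](\tau_i)=\eps'\bar I$. Combining with $\bar I\le I+|\bar I-I|\le 2I+f(\Delta_{\min})$ gives $|\tilde I-I|\le\eps'(2I+f(\Delta_{\min}))+\eps I+\eps f(\Delta_{\min})\le 2\eps I+\eps f(\Delta_{\min})$ once $\eps'$ is chosen a small enough multiple of $\eps$ (any residual constant on the additive term can be pushed back by instead demanding $v_m\le\frac\eps{2T}f(\Delta_{\min})$, at the cost of one extra level). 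The step I expect to be the main obstacle is the bias bound — specifically, the realization that $I$ itself should be expressed as an expectation over the \emph{same} latent exponential thresholds $Y_x$ that govern the indicators $\mathbf 1\{z_{x,k}\in E_i\}$, so that the coarse geometric-staircase estimate applies per key and no factor of $m$ is lost; the $\mathcal L^c[W](\tau)$ lower bound needed for concentration and the truncation bound $Tv_m\le\eps f(\Delta_{\min})$ then follow directly from the already-established lemmas.
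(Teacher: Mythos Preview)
Your proof is correct and the overall bias/fluctuation split matches the paper, but the details of each half differ enough to be worth noting.

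For the bias, the paper treats $\bar I = \sum_i \alpha_i \mathcal L^c[W](\tau_i)$ as a Riemann sum for $I$ in the $t$-domain: on each subinterval $(\tau_{i-1},\tau_i]$ the integrand $\mathcal L^c[W](t)$ is replaced by its right endpoint value, and the fact that $V(t)$ drops by at most a $(1+\eps)$ factor across each cell gives the relative error (plus the same tail term $T v_m$). Your per-key latent-threshold representation $I = \sum_x \E_{Y_x}[V(\max(\tau,Y_x))]$ and the pointwise staircase bound $|\sum_i\alpha_i\mathbf 1\{\nu\ge v_i\}-\nu|\le \eps\nu+v_m$ is a genuinely different and arguably more transparent route, since it avoids appealing to ``standard Riemann sum bounds'' and makes the constant explicit.

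For the fluctuation, the paper bounds $\tilde I$ as a single sum of independent bounded variables $\frac{\alpha_i}{r}Z_{x,k,i}$ and applies Bernstein's inequality once, using the range bound $\alpha_i\le\eps v_0$ and the variance bound $\sigma^2\le\frac{\eps v_0}{r}\E[\tilde I]$ together with \cref{claim:tailbound} on $v_0$. Your per-level Chernoff with a union bound over the $m$ scales is more elementary and also valid; the cost is the $O(\log m)$ inflation of $r$ you noted, which the paper's single-Bernstein analysis does not need. Both routes land on the same $2\eps I+\eps f(\Delta_{\min})$ after the constant-juggling you describe.
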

\begin{proof}
The error consists of the \textbf{discretization bias} $|\mathbb{E}[\tilde{I}] - I|$ (due to using a sum instead of an integral) and the \textbf{concentration} of $\tilde{I}$ around its expectation $\mathbb{E}[\tilde{I}]$.

\paragraph{1. Expectation (Discretization bias).}
We first analyze the expected value of the estimator. By linearity of expectation:
\[
    \mathbb{E}[\tilde{I}] \;=\; \frac{1}{r} \sum_{i=1}^m \alpha_i \mathbb{E}[\mathsf{Distinct}(E_i)].
\]
For a specific level $i$, a key $z_k$ corresponding to input $x$ is included in $E_i$ if \emph{any} of the independent updates for $x$ trigger an insertion. The probability of inclusion is $1 - e^{-w_x \tau_i}$. Summing over all $r$ copies and all keys $x$:
\[
    \mathbb{E}[\mathsf{Distinct}(E_i)] \;=\; r \sum_x (1 - e^{-w_x \tau_i}) \;=\; r \mathcal{L}^c[W](\tau_i).
\]
Substituting this back, $\mathbb{E}[\tilde{I}] = \sum_{i=1}^m \alpha_i \mathcal{L}^c[W](\tau_i)$.
This sum approximates the integral $I = \int_{\tau_0}^\infty a(t) \mathcal{L}^c[W](t) \, dt$ via a Riemann sum.
\begin{itemize}
    \item \textbf{Tail Truncation:} The integral from $\tau_m$ to $\infty$ is omitted. This contributes an additive error of at most $T \cdot v_m \le \eps f(\Delta_{\min})$.
    \item \textbf{Discretization:} For $t \in (\tau_{i-1}, \tau_i]$, we approximate $\mathcal{L}^c[W](t)$ by the constant $\mathcal{L}^c[W](\tau_i)$. Since the value $V(t)$ drops by at most a factor of $(1+\eps)$ in this interval, standard Riemann sum bounds yield a relative error of $\eps$.
\end{itemize}
Thus: $\left| \mathbb{E}[\tilde{I}] - I \right| \;\le\; \eps I + \eps f(\Delta_{\min})$.

\paragraph{2. Concentration}
Let $Z_{x,k,i}$ be the indicator that the unique key associated with input $x$ and repetition $k$ (denoted $z_{x,k}$) is inserted into stream $E_i$.
The insertion condition is $\min \{ y \sim \operatorname{Exp}(\Delta) \text{ for updates of } x \} < \tau_i$. The probability of this event is $p_{x,i} = 1 - e^{-w_x \tau_i}$.
Crucially, since \cref{alg:elementmapmulti} uses independent randomness for each level $i$ and each repetition $k$, the variables $Z_{x,k,i}$ are mutually independent for all $x \in \mathcal{U}, k \in [r], i \in [m]$.
We can write the estimator as:
\[
    \tilde{I} \;=\; \sum_{i=1}^m \sum_{k=1}^r \sum_{x \in \mathcal{U}} \frac{\alpha_i}{r} Z_{x,k,i}.
\]

Let $Y_{x,k,i} = \frac{\alpha_i}{r} Z_{x,k,i}$. The estimator is $\hat{I} = \sum Y_{x,k,i}$.
These are independent bounded random variables with $Y_{x,k,i} \in [0, \frac{\alpha_i}{r}]$.
We apply Bernstein's inequality. We need to bound the maximum magnitude $M$ and the total variance $\sigma^2$.

\begin{itemize}
    \item \textbf{Magnitude $M$:}
    The weights are defined as $\alpha_i = v_{i-1} - v_i$ where $v_i = (1+\eps)^{-i} v_0$. Thus $\alpha_i \le \eps v_{i-1} \le \eps v_0$.
    From \cref{claim:tailbound}, $v_0 = \int_{\tau_0}^\infty a(t) dt \le \frac{2}{\sqrt{\eps}} \frac{T \Delta_{\max}}{\Delta_{\min}} f(\Delta_{\min})$.
    Thus:
    \[
        M \;=\; \max_{i} \frac{\alpha_i}{r} \;\le\; \frac{\eps v_0}{r}.
    \]

    \item \textbf{Variance $\sigma^2$:}
    $\operatorname{Var}(Z_{x,k,i}) = p_{x,i}(1-p_{x,i}) \le p_{x,i}$.
    Summing the variances:
    \[
        \sigma^2 \;=\; \sum_{i,k,x} \left(\frac{\alpha_i}{r}\right)^2 \operatorname{Var}(Z_{x,k,i}) \;\le\; \frac{1}{r} \sum_{i=1}^m \alpha_i^2 \underbrace{\sum_x p_{x,i}}_{\mathcal{L}^c[W](\tau_i)} \;=\; \frac{1}{r} \sum_{i=1}^m \alpha_i^2 \mathcal{L}^c[W](\tau_i).
    \]
    Since $\alpha_i \le \eps v_0$, we can bound the sum:
    \[
        \sigma^2 \;\le\; \frac{\eps v_0}{r} \sum_{i=1}^m \alpha_i \mathcal{L}^c[W](\tau_i) \;=\; \frac{\eps v_0}{r} \mathbb{E}[\tilde{I}].
    \]
\end{itemize}

\paragraph{Applying the Bound.}
We seek to bound the probability that $|\tilde{I} - \mathbb{E}[\tilde{I}]| > \eps \max\{ \mathbb{E}[\tilde{I}], f(\Delta_{\min})\}$.
Bernstein's inequality states:
\[
    \Pr[|\tilde{I} - \mathbb{E}[\tilde{I}]| > \lambda] \;\le\; 2 \exp\left( -\frac{\lambda^2}{2\sigma^2 + 2 M \lambda / 3} \right).
\]

We first consider the case $\mathbb{E}[\tilde{I}] \geq f(\Delta_{\min}$.
Set $\lambda = \eps \mathbb{E}[\tilde{I}]$. 
We obtain that the exponent is $\Theta\left(\frac{r \eps \mathbb{E}}{v_0} \right)$

Using $\mathbb{E} \ge f(\Delta_{\min})$ and the bound for $v_0$:
\[
    \frac{r \eps \mathbb{E}[\tilde{I}]}{v_0} \;\ge\; r \eps \frac{f(\Delta_{\min})}{\frac{2}{\sqrt{\eps}} \frac{T \Delta_{\max}}{\Delta_{\min}} f(\Delta_{\min})} \;=\; r \eps^{1.5} \frac{\Delta_{\min}}{2 T \Delta_{\max}}.
\]
To ensure the probability is $\le \beta$, we require this term to be $\Omega(\log(1/\beta))$.
This is satisfied if:
\[
    r \;\ge\; \Theta\left( \frac{1}{\eps^{1.5}} \frac{T \Delta_{\max}}{\Delta_{\min}} \log(1/\beta) \right).
\]
The granularity $r$ specified in \cref{cor:input-independent-reduction-concentration} suffices to satisfy this.

The calculation for the case $\mathbb{E}[\tilde{I}] \leq f(\Delta_{\min})$ is similar: We set $\lambda = \eps f(\Delta_{\min})$.
The exponent is $\Omega\left(\frac{r f(\Delta_{\min})}{v_0} \right)$.
Substituting the values for $v_0$ and then $r = \Omega(\left( \frac{1}{\eps^{0.5}} \frac{T \Delta_{\max}}{\Delta_{\min}} \log(1/\beta) \right)$ we obtain that the exponent is $\Omega(r \eps^{0.5} \frac{\Delta_{\min}}{2 T \Delta_{\max}})= \Omega(1/\beta)$.
\end{proof}

\subsection{Wrapping it up} \label{sec:wrap}

We summarize our modified reduction and the properties we use here. 
We approximate the Bernstein statistic on dataset $W$ by
\begin{equation} \label{eq:Bernstein-by-distinct-and-sum}
        \tilde{F}(W) \;=\; \alpha_0 \cdot \textsf{Sum}(W) \;+\; \frac{1}{r} \sum_{i\in [m]} \alpha_i \textsf{Distinct}(E_i).
\end{equation}
Where $\alpha_i \geq 0$, 
$\alpha_0 =  \int_0^\tau a(t)t \, dt$ and the parameters 
$(\alpha_i)$ only dependent on $f,\tau$ and the desired accuracy and confidence.  
The datasets $(E_i)$ are generated by a randomized mapping of each input element to output elements. The generation is independent (a Bernoulli flip that depends on the increment value $\Delta$ of the input element and $i$) for each output element.
The number of terms is $m+1=O(\eps^{-1}\log T)$.

From \cref{cor:input-independent-reduction-concentration} and \cref{lem:independent-approx} (by adjusting constants), with probability at least $1-\beta$, the approximation is within relative error $\eps$ and additive error $\eps f(\Delta_{\min})$. Since when $F(W)>0$, $F(W)> f(\Delta_{\min})$ 
we obtain that, a lightly modified estimator
(if the estimate is $\tilde{F}(W) < (1-\eps) f(\Delta_{\min})$, replace it with $0$ and otherwise report it) 
has a relative error of $\eps$.

We observe that since the map \cref{alg:elementmapmulti} is applied independently to each input data element, it applies when the input data elements $e\in W$ are streamed and the respective output elements are emitted to the output streams $(E_i)$. The approximation guarantee then holds at each time step. 

What remains to incorporate reset operations,
and specify a Bernstein sketch that uses as ingredients robust resettable sketches for the base streams so that robustness and prefix-max guarantees transfer.

\subsubsection{Incorporating resets} The map and estimator 
\cref{eq:Bernstein-by-distinct-and-sum}
were specified for an incremental input stream and generated output streams with $\mathsf{Insert}$ operations. 
We implement a $\mathsf{Reset}(x)$ operation in the input stream by emitting 
$\mathsf{Delete}(H(x,k))$ for $k\in [r]$ for each of the output stream $(E_i)$. 
The resets perfectly undo the contribution to the distinct statistics of the mapped output elements generated by all prior $\mathsf{Inc(x,\Delta}$ input elements.
Therefore, the relative error approximation guarantee established for \cref{eq:Bernstein-by-distinct-and-sum} still holds with a resettable input stream.

\subsubsection{The Bernstein Sketch}

We design a resettable Bernstein sketch based on \cref{eq:Bernstein-by-distinct-and-sum}. We apply a resettable sketch for sum (on the input stream) and $m$ cardinality sketches (for each of the output streams). We then publish a linear combination of the running estimates 
\[
    \hat{F}(W) \;=\; \alpha_0 \cdot \widehat{\textsf{Sum}(W)} \;+\; \frac{1}{r} \sum_{i\in [m]} \alpha_i \widehat{\textsf{Distinct}}(E_i).
\]
where 
$\widehat{\textsf{Sum}}$ and
$\widehat{\textsf{Distinct}}$ are the respective published estimates of sketches that process the resettable streams $W$ and $(E_i)_{i\in [m]}$.

For sum, we use our adjustable-threshold resettable sum sketch 
in \cref{resettable:sum:sec}. We use parameters $\eps'=\eps/(m+1)$, $\delta'=\delta/(m+1)$, and input stream with at most $T$ $\mathsf{Inc}(\cdot,\cdot)$ updates.
The sketch size is $O(\eps'^{-2} \mathrm{polylog}(T/\delta')) = O(\eps^{-4}\mathrm{polylog}(T/\delta))$.

For cardinality, we use our adjustable rate robust resettable sketch  as in \cref{thm:robust-cardinality} (\cref{alg:robust-card-adjustable}) with parameters $\eps' = \eps/(m+1)$, $T' = Tr$, and $\delta' = \delta/(m+1)$.
Using that $r= O(\poly(T))$ we obtain a space bound of 
$O(\eps'^{-2} \mathrm{polylog}(T/\delta'))= O(\eps^{-4} \mathrm{polylog}(T/\delta))$ for each sketch.

Since there are $m+1 = O(\eps^{-1}\mathrm{polylog}(T)$ sketches, the overall space bound of the Bernstein sketch is 
$O(\mathrm{poly}(\eps^{-1},\log(T/\delta))$.

We now analyze the approximation error. 
At a step where the prefix-max of the Bernstein statistic is $F_{\max}$, the prefix-max value of the sum statistic is bounded by $F_{\max}/\alpha_0$. Per the prefix-max approximation guarantee, the sum sketch returns estimates that are within an additive error of $\eps' F_{\max}/\alpha_0$. These estimates contribute to the error of the linear combination estimator at most $\alpha_0 \cdot \eps' F_{\max}/\alpha_0 = \eps' F_{\max} \leq \eps F_{\max}/(m+1)$. 

Similarly, consider the cardinality sketch that is applied to $E_i$. When the Bernstein prefix-max is $F_{\max}$, the cardinality prefix-max value is bounded by $r F_{\max}/\alpha_i$. Per the prefix-max guarantee, the additive error of the estimates is 
at most $\eps' r F_{max}/\alpha_i$. The  contribution to the error of the linear combination estimate is at most $\frac{\alpha_i}{r} \eps' r F_{\max}/\alpha_i  = \eps' F_{\max} \leq \eps F_{\max}/(m+1)$.

Combining the additive errors from all the $m+1$ terms we obtain a total prefix-max error for the Bernstein sketch of $\eps F_{\max}$ that holds with probability $1-(m+1)\delta' = 1-\delta$. 

We note that Bernstein statistics that can be approximated with a smaller number of terms $m$ allow for improved storage bounds. 

\subsubsection{Transfer of robustness} 
We establish that the robustness of the sketches for the base statistics implies robustness of the Bernstein sketch. Observe that the only randomness in the reduction \cref{eq:Bernstein-by-distinct-and-sum} is in the independent random variables in \cref{alg:elementmapmulti} that determine whether an insertion to each output stream occurs. 
In our robustness analysis, we ``push'' the randomness that is introduced in the map \cref{alg:elementmapmulti} into the streams processed by the cardinality sketches so that the robustness analysis of each cardinality sketch applies end to end to the (independent) generation process of each stream $(E_i)$ and the sketch-based reporting of cardinality estimates. 
We observe that the robust sampling analysis in \cref{lem:robust-sampling-multirate} goes through the same way (with the adjusted expectation) when we instead consider each privacy unit to be the Bernoulli that is the combined randomness of (the known to the adversary) $Z_{x,k,i}$, which determines whether an element is emitted into the stream, and the Bernoulli sampler with $p_t$ of the cardinality sketch.

\section{Appendix: DP generalization bound}

\begin{theorem}[DP generalization via stability
{\cite{DworkFeldmanHardtPitassiReingoldRoth2015,BassilyNSSSU:sicomp2021,FeldmanS17}}]
\label{thm:DP-generalization}
Let $\mathcal A$ be an $\xi$-differentially private algorithm that, given
a dataset $S=(x_1,\dots,x_d)\in X^d$, outputs functions
$h_1,\dots,h_d : X\to[0,1]$.  For any $Y=(y_1,\dots,y_d)\in X^d$ define
\[
    h(Y) := \sum_{i=1}^d h_i(y_i).
\]

Let $\mathcal D$ be any distribution over $X$, and let
$S \sim \mathcal D^d$ be an i.i.d.\ sample.  Then for every
$\beta\in(0,1]$,
\begin{align*}
\Pr_{S\sim\mathcal D^d,\;h\gets\mathcal A(S)}
\Bigl[
    e^{-2\xi}\,\E_{Z\sim\mathcal D^d}[h(Z)] - h(S)
    > \tfrac{4}{\xi}\log(1+1/\beta)
\Bigr] &< \beta,\\[1ex]
\Pr_{S\sim\mathcal D^d,\;h\gets\mathcal A(S)}
\Bigl[
    h(S) - e^{2\xi}\,\E_{Z\sim\mathcal D^d}[h(Z)]
    > \tfrac{4}{\xi}\log(1+1/\beta)
\Bigr] &< \beta.
\end{align*}
\end{theorem}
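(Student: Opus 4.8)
The plan is to prove both tail inequalities by the standard ``differential privacy implies generalization'' route, carried out in exponential‑moment form so that the argument is self‑contained and the factor $e^{\pm 2\xi}$ falls out of the bookkeeping. The one property of $\xi$‑DP that I would use is a per‑coordinate \emph{posterior‑stability} bound: fix an index $i$, a realization $h=(h_1,\dots,h_d)$ of $\mathcal A$'s output, and any fixing of the other coordinates $x_{-i}$; writing the posterior of $x_i$ via Bayes and applying the DP guarantee to the two datasets that differ only in coordinate $i$ shows that the conditional law of $x_i$ given $(h,x_{-i})$ has density within a multiplicative $[e^{-\xi},e^{\xi}]$ of the prior $\mathcal D$. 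The same one‑line argument, with an extra integration, shows the bound survives conditioning on $h$ together with \emph{any} subset of the remaining coordinates, which is exactly what licenses revealing the coordinates of $S$ one at a time.

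Next I would bound the exponential moment. Set $\lambda:=\xi/4$ (this choice is why the residual is $\tfrac{4}{\xi}\log(\cdot)$), condition on $h$, and write $\bar h_i:=\E_{z\sim\mathcal D}[h_i(z)]$, so $\mu_h:=\E_{Z\sim\mathcal D^d}[h(Z)]=\sum_i\bar h_i$ is a deterministic function of $h$. For the upper tail, expand $\E[e^{\lambda h(S)}\mid h]=\E[\prod_i e^{\lambda h_i(x_i)}\mid h]$ and peel coordinates one at a time: at each step posterior stability gives $\E[e^{\lambda h_i(x_i)}\mid h,x_{<i}]\le 1+e^{\xi}\E_{\mathcal D}[e^{\lambda h_i}-1]\le 1+e^{\xi+\lambda}\lambda\bar h_i\le\exp(e^{\xi+\lambda}\lambda\bar h_i)$, using $e^{u}-1\le e^{\lambda}u$ for $u\in[0,\lambda]$; this factor is deterministic, so it comes out of the expectation and the recursion telescopes to $\E[e^{\lambda h(S)}\mid h]\le\exp(e^{\xi+\lambda}\lambda\mu_h)$. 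Hence $\E[\exp(\lambda(h(S)-e^{2\xi}\mu_h))]\le\exp(\lambda\mu_h(e^{\xi+\lambda}-e^{2\xi}))\le 1$ since $\lambda\le\xi$, and Markov gives $\Pr[h(S)-e^{2\xi}\E_Z[h(Z)]>t]\le e^{-\lambda t}$, which is $<\beta$ once $t=\tfrac{4}{\xi}\log(1/\beta)\le\tfrac{4}{\xi}\log(1+1/\beta)$. The lower tail is symmetric: one peels $e^{-\lambda h_i(x_i)}$, uses posterior stability in the form $\E[e^{-\lambda h_i(x_i)}\mid h,x_{<i}]\le 1-e^{-\xi}\E_{\mathcal D}[1-e^{-\lambda h_i}]\le\exp(-e^{-\xi}\lambda\bar h_i(1-\lambda/2))$ together with $1-e^{-u}\ge u(1-u/2)$, telescopes, and checks $e^{-2\xi}\le e^{-\xi}(1-\lambda/2)$ (true for $\lambda=\xi/4$ in the relevant range) to conclude $\E[\exp(\lambda(e^{-2\xi}\mu_h-h(S)))]\le 1$, then applies Markov as before.

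The step I expect to be the real obstacle is the exponential‑moment bound: a blunt application of posterior stability inside the product gives a factor $e^{\xi}$ per coordinate, i.e.\ a useless $e^{d\xi}$ overall. The fix is to exploit that $e^{\pm\lambda h_i}$ sits at distance only $O(\lambda\bar h_i)$ from the degenerate endpoint, so that the per‑coordinate posterior distortion scales with the per‑coordinate ``mass'' $\bar h_i$ rather than with a flat $e^{\xi}$; this is precisely what makes the losses telescope into a single $e^{2\xi}$ and leaves a purely additive $O(\tfrac1\xi\log(1/\beta))$ residual, with no spurious $\sqrt{\mu_h\log(1/\beta)}$ term. Pushing the bookkeeping through for \emph{all} $\xi>0$ with exactly the stated constants (rather than the small‑$\xi$ regime that is all our applications need) is where one would instead invoke the transfer theorems of \cite{DworkFeldmanHardtPitassiReingoldRoth2015,BassilyNSSSU:sicomp2021,FeldmanS17} as a black box.
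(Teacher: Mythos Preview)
The paper does not prove this theorem: it is stated in the appendix as a cited black-box result from \cite{DworkFeldmanHardtPitassiReingoldRoth2015,BassilyNSSSU:sicomp2021,FeldmanS17} and invoked as-is in the robust-sampling lemmas. There is therefore no paper proof to compare against.

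Your sketch follows the standard posterior-stability plus exponential-moment route and is essentially sound in the regime the paper actually needs (it only ever applies the theorem with $\xi = 2\varepsilon_{\mathrm{dp}}$ and $\varepsilon_{\mathrm{dp}}\le 1$). The key ingredient you isolate---that conditioning on the DP output $h$ together with any prefix $x_{<i}$ leaves the conditional law of $x_i$ within $[e^{-\xi},e^{\xi}]$ of $\mathcal D$, obtained by integrating the per-coordinate DP ratio over $x_{>i}$---is correct and is exactly what lets the coordinate-by-coordinate peeling avoid the naive $e^{d\xi}$ blowup you flag. The upper-tail telescoping with $\lambda=\xi/4$ works for all $\xi>0$ since it only needs $\xi+\lambda\le 2\xi$.

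You correctly identify the only real gap: the lower-tail check $e^{-2\xi}\le e^{-\xi}(1-\lambda/2)$ with $\lambda=\xi/4$ is equivalent to $\xi/8\le 1-e^{-\xi}$, which holds for $\xi$ up to roughly $8$ but fails beyond that. This comfortably covers every invocation in the paper, so your argument suffices for the paper's purposes; for the theorem as stated (all $\xi>0$, exact constants) one indeed falls back on the cited references, as you note.
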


\section{Tree Mechanism} \label{sec:tree-mechanism}

We describe the standard binary tree mechanism
\citep{ChanShiSong2011,DworkNaorPitassiRothblumYekhanin2010,DworkRoth2014}
for privately releasing prefix sums under continual observation. The mechanism
is instantiated with a time horizon $T$, a privacy parameter $\varepsilon$,
and a sensitivity parameter $L>0$.

The mechanism processes a stream of updates $(u_t)_{t=1}^T$ for
$u_t \in \mathbb{R}$, and at each time $t$, outputs a noisy prefix sum
$\tilde S_t$ which approximates
\[
   S_t := \sum_{t'=1}^t u_{t'}.
\]

The noise mechanism is as follows.  Let the leaves of a complete binary tree
correspond to time steps $1,\dots,T$, and let each internal node $v$ represent an
interval $I_v \subseteq \{1,\dots,T\}$.  For each node $v$ we maintain
\[
  Z_v \;=\; \sum_{t \in I_v} u_t \;+\; \eta_v,
\]
where the noises $\eta_v$ are independent random variables with
$\eta_v \sim \Lap(\lambda)$ and
\[
  \lambda \;=\; \frac{L \log_2 T}{\varepsilon}.
\]

At time $t$, we represent the prefix interval $[1..t]$ as a disjoint union of
$O(\log T)$ node intervals $\{I_{v_1},\dots,I_{v_\ell}\}$ and output
\[
  \tilde S_t := \sum_{j=1}^\ell Z_{v_j}.
\]

\subsection{Properties of the tree mechanism}

\begin{lemma}[Accuracy of the Tree Mechanism 
\citep{ChanShiSong2011,DworkNaorPitassiRothblumYekhanin2010,DworkRoth2014}]
\label{lem:tree-accuracy}
Let $(\tilde S_t)_{t=1}^T$ be the output of the binary tree mechanism with noise scale $\lambda = \frac{L \log T}{\varepsilon}$.
There is an absolute constant
$C_1>0$ such that
For any $\delta \in (0,1)$, with probability at least $1-\delta$, we have simultaneously for all $t \in \{1, \dots, T\}$:
\[
  |\tilde S_t - S_t|
  \;\le\;
  C_1 \frac{L}{\varepsilon}\cdot
   \log^{3/2} T \cdot \log\frac{T}{\delta} .
\]
\end{lemma}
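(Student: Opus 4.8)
The plan is to exploit the one structural feature of the binary tree mechanism that matters here: its error is \emph{data-independent and purely additive}. By construction, at each time $t$ the prefix interval $[1..t]$ is decomposed into a disjoint union of at most $k:=\lceil\log_2 T\rceil$ dyadic node-intervals $I_{v_1},\dots,I_{v_\ell}$, and since each stored value is $Z_{v_j}=\sum_{s\in I_{v_j}}u_s+\eta_{v_j}$, the released estimate satisfies $\tilde S_t - S_t = \sum_{j=1}^\ell \eta_{v_j}$ — a sum of at most $k$ independent $\Lap(\lambda)$ variables. So the whole statement reduces to a tail bound for a sum of $\le k$ i.i.d.\ Laplace variables, applied for each fixed $t$, followed by a union bound over $t\in[T]$.

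First I would record the sub-exponential concentration for $W:=\sum_{j=1}^{k}\eta_j$ with $\eta_j\sim\Lap(\lambda)$ i.i.d. Using $\mathbb{E}[e^{\theta\eta_j}]=(1-\lambda^2\theta^2)^{-1}\le e^{2\lambda^2\theta^2}$ for $|\theta|\le 1/(2\lambda)$ and the usual Chernoff optimization in the two regimes, one obtains, for all $s>0$,
\[
\Pr\bigl[|W|>s\bigr]\;\le\;2\exp\!\left(-c\,\min\!\Bigl\{\tfrac{s^2}{k\lambda^2},\,\tfrac{s}{\lambda}\Bigr\}\right)
\]
for an absolute constant $c>0$. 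Inverting this, to drive the right-hand side below $\delta/T$ it suffices to take $s = C\lambda\bigl(\sqrt{k\log(T/\delta)}+\log(T/\delta)\bigr)$ for a suitable absolute constant $C$; any partial sum of $\le k$ of the node noises at a fixed $t$ is then bounded by this $s$ with probability $\ge 1-\delta/T$.

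Then I would substitute parameters: with $k=\lceil\log_2 T\rceil=O(\log T)$ one has $\sqrt{k\log(T/\delta)}+\log(T/\delta)=O(\sqrt{\log T}\,\log(T/\delta))$, and plugging in the mechanism's noise scale $\lambda = L\log T/\varepsilon$ gives $|\tilde S_t - S_t|\le C_1\frac{L}{\varepsilon}\log^{3/2}T\cdot\log(T/\delta)$ at each fixed $t$ with probability $\ge 1-\delta/T$. A union bound over $t\in[T]$ makes this hold simultaneously for all $t$, with $C_1$ absorbing $C$ and the base-conversion constants. Note that privacy plays no role in this argument — accuracy depends only on the Laplace noise distribution, consistent with the statement.

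\textbf{Main obstacle.} The only delicate point is getting $\log^{3/2}T$ rather than $\log^2 T$. A lazy argument that union-bounds $|\eta_v|\le\lambda\log(T/\delta)$ over all $O(T)$ tree nodes and then adds up $k$ of them overshoots by a $\sqrt{\log T}$ factor. The fix is to use that the $\le k$ noises feeding a single prefix are \emph{independent} and apply genuine sub-exponential (Bernstein-type) concentration to their sum; the slightly fiddly part is the two-regime split in that bound and checking that $\log(T/\delta)$ dominates the relevant logarithmic terms, but both are routine.
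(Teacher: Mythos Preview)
Your proof is correct and is essentially the standard argument from the cited references. The paper itself does not supply a proof of this lemma; it simply states the result and attributes it to \cite{ChanShiSong2011,DworkNaorPitassiRothblumYekhanin2010,DworkRoth2014}, so there is nothing further to compare against.
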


\begin{observation}[Streaming storage \citep{ChanShiSong2011,DworkNaorPitassiRothblumYekhanin2010,DworkRoth2014}] \label{treestorage:obs}
The tree mechanism can be implemented in a streaming setting with working storage of only $O(\log t)$ active counters at any time~$t$. In particular, the maximum working storage is (deterministically) $O(\log T)$.
\end{observation}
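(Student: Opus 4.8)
The plan is to describe the well-known streaming realization of the mechanism and to argue that, at every time $t$, only $O(\log t)$ real-valued counters need to be kept alive, with this count being a deterministic function of $t$ alone (so the worst-case $O(\log T)$ bound holds with certainty). Throughout, index the dyadic intervals of $\{1,\dots,T\}$ by level: a \emph{level-$\ell$ interval} has the form $[p+1,\,p+2^\ell]$ with $2^\ell\mid p$; it closes (is fully read) at time $p+2^\ell$. Call such an interval a \emph{prefix segment} if moreover $2^{\ell+1}\mid p$. The prefix segments are exactly the intervals that occur in the greedy (left-to-right) dyadic decomposition of a prefix $[1..n]$, the decomposition of $[1..t]$ consists of one prefix segment per set bit of $t$, hence has at most $\lceil\log_2 t\rceil+1$ blocks, and the released value $\tilde S_t$ is the sum of the noisy node values $Z_v$ over this decomposition.

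First I would give the bookkeeping. For each level $\ell$ with $2^\ell\le T$ keep: (i) a running exact partial sum $P_\ell$ of the updates $u_{t'}$ seen so far that fall inside the unique level-$\ell$ interval containing the current time; and (ii) at most one finalized noisy value $Z_\ell$, namely that of the most recently closed level-$\ell$ prefix segment, \emph{while that segment still belongs to the decomposition of the current prefix}. Processing an update $u_t$: add $u_t$ to $P_\ell$ for every active $\ell$ (the level-$\ell$ interval containing $t$ always contains $t$); let $\ell^\ast$ be the $2$-adic valuation of $t$, so the intervals closing at time $t$ are those of levels $0,\dots,\ell^\ast$ and the level-$\ell^\ast$ one is the unique prefix segment among them; draw a fresh $\eta\sim\Lap(\lambda)$, set the finalized value of level $\ell^\ast$ to $P_{\ell^\ast}+\eta$, reset $P_0,\dots,P_{\ell^\ast}\gets 0$, and delete every stored finalized value at levels $<\ell^\ast$ (those prefix segments are absorbed into the one that just closed). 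Output $\tilde S_t$ as the sum of the stored finalized values over the levels $\ell$ with bit $\ell$ of $t$ equal to $1$. Since each $Z_v$ is produced as (exact sum over $I_v$) plus an independent $\Lap(\lambda)$, drawn once when $I_v$ closes, this reproduces exactly the output distribution of the mechanism in \cref{sec:tree-mechanism}.

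Next I would establish that the discard rule is sound — this is the technical heart. The claim is: a level-$\ell$ prefix segment $[p+1,\,p+2^\ell]$, which closes at time $b:=p+2^\ell$, lies in the greedy decomposition of $[1..t]$ if and only if $t\in[b,\,b+2^\ell-1]$. For the ``if'': for such $t$ the bits of $t$ at positions $>\ell$ agree with those of $p$, so the greedy scan covers $[1..p]$ exactly, and since bit $\ell$ of $t$ is set it next selects the block $[p+1,p+2^\ell]$. For the ``only if'': if $t<b$ the segment is not yet fully read; if $t\ge b+2^\ell$ then after covering $[1..p]$ the scan has at least $2^{\ell+1}$ positions left and picks a block of level $\ge\ell+1$, never this one. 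Hence $Z_\ell$ is needed precisely over a length-$2^\ell$ window starting at $b$, which is exactly the interval over which the scheme retains it; moreover the right endpoints of distinct level-$\ell$ prefix segments are the odd multiples of $2^\ell$, so these windows are pairwise disjoint and at any time at most one finalized value is stored per level.

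Finally, the count: at time $t$ only levels $0,\dots,\lceil\log_2 t\rceil$ are active, each holding at most one partial sum $P_\ell$ and at most one retained $Z_\ell$, for a total of $O(\log t)$ counters (each an $O(\log(T\Gamma))$-bit number when update magnitudes are bounded by $\Gamma$). Since which levels are active, and which retain a finalized value (exactly the levels $\ell$ with bit $\ell$ of $t$ set), is determined by the binary expansion of $t$ and by nothing else — not by the data $u_{t'}$, not by the Laplace noise — the maximum over $t\le T$ is $O(\log T)$ with certainty, which is the stated deterministic bound. I expect the only genuinely delicate point to be the ``window'' characterization in the third paragraph: pinning down, via the greedy prefix-decomposition structure, the exact set of times during which a given closed dyadic block is still in use; once that is in hand, both the $O(\log t)$ memory count and its determinism follow immediately.
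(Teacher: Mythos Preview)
The paper does not give its own proof of this observation; it simply states it with citations to the continual-observation literature. Your write-up is therefore a welcome expansion, and the structural pieces --- the characterization of the greedy decomposition in terms of the set bits of~$t$, the definition of prefix segments, and especially the ``window'' lemma showing that a closed level-$\ell$ prefix segment ending at $b$ is needed exactly on $[b,\,b+2^\ell-1]$ --- are all correct and cleanly argued.

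There is one gap in the storage count, though, and it sits right at the distinction between $O(\log T)$ and $O(\log t)$. In your bookkeeping paragraph you say to keep a running partial sum $P_\ell$ ``for each level $\ell$ with $2^\ell\le T$'' and to ``add $u_t$ to $P_\ell$ for every active~$\ell$'', with the parenthetical suggesting this means every level. That is $\Theta(\log T)$ counters for the $P_\ell$'s alone, regardless of~$t$. Your final paragraph then asserts that only levels $0,\dots,\lceil\log_2 t\rceil$ are active, but the scheme as written does not justify dropping the higher-level $P_\ell$'s: when, say, the level-$k$ interval $[1,2^k]$ finally closes at $t=2^k$, you need its exact unnoised sum, and the stored $Z_\ell$'s are noisy, so you cannot reconstruct it from what you have kept at lower levels.

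The fix is one sentence. For every $\ell>\lfloor\log_2 t\rfloor$ the current level-$\ell$ interval is $[1,2^\ell]$, so $P_\ell$ equals the total prefix sum $\sum_{t'\le t}u_{t'}$ \emph{independently of}~$\ell$; hence a single extra counter $P_{\mathrm{high}}$ suffices for all of them, and when $t$ crosses a power of~$2$ you peel off the new explicit level from $P_{\mathrm{high}}$. (Equivalently: alongside each retained $Z_\ell$ also keep its \emph{unnoised} block sum; then the total prefix sum, and hence any future high-level $P_\ell$, can be assembled on demand from the at most $\lfloor\log_2 t\rfloor+1$ unnoised block sums.) With that tweak, your counter count really is $O(\log t)$ and is determined solely by the binary expansion of~$t$, as you claim. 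Without it, your argument still yields the $O(\log T)$ bound, which is the ``in particular'' part of the observation, but not the sharper $O(\log t)$.
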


\subsection{Privacy properties}

The standard privacy analysis of the tree mechanism is stated with respect to
adjacency on the updates. When each update satisfies $u_t \in [-L,L]$, we
have the following:

\begin{lemma}[Event-level differential privacy
  \citep{ChanShiSong2011,DworkNaorPitassiRothblumYekhanin2010,DworkRoth2014}]
\label{lem:tree-dp}
The tree mechanism $(\tilde S_t)_{t=1}^T$ is $\varepsilon$-differentially
private with respect to event-level adjacency over the stream $(u_t)_{t=1}^T$.

That is, for any two update streams $u = (u_1, \dots, u_T)$ and
$u' = (u'_1, \dots, u'_T)$ that differ in at most one time step
$t \in [T]$ and for all sets of outputs $\mathcal{O}$,
\[
\Pr[\mathcal{M}(u) \in \mathcal{O}] \;\le\;
e^{\varepsilon} \cdot \Pr[\mathcal{M}(u') \in \mathcal{O}] .
\]
\end{lemma}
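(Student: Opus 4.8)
The plan is to prove $\varepsilon$-differential privacy by exhibiting the entire output stream as a \emph{data-independent post-processing} of the vector of noisy node-counts $(Z_v)_v$, and then bounding the privacy of that vector directly via the Laplace mechanism. First I would observe that in the mechanism of \cref{sec:tree-mechanism}, for every time $t$ the released value $\tilde S_t = \sum_{j} Z_{v_j}$ is a sum of $Z_v$ over a fixed collection of nodes $\{v_1,\dots,v_\ell\}$ that is determined \emph{only} by $t$ and the tree shape, and not by the update stream. Hence the map $(Z_v)_v \mapsto (\tilde S_t)_{t=1}^T$ is a deterministic, data-independent function, and by the post-processing property of differential privacy it suffices to show that the joint distribution of $(Z_v)_v$ is $\varepsilon$-DP with respect to event-level adjacency on $(u_t)_{t=1}^T$.

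Next I would set up the sensitivity calculation. Fix two update streams $u,u'$ that agree on every coordinate except a single time step $t^\star$, with $u_{t^\star},u'_{t^\star}\in[-L,L]$. For a node $v$, the noiseless count $\sum_{t\in I_v} u_t$ differs from $\sum_{t\in I_v} u'_t$ only when $t^\star\in I_v$, i.e.\ only for the nodes on the root-to-leaf path ending at leaf $t^\star$; there are at most $1+\lceil\log_2 T\rceil$ such nodes (and at most $\log_2 T$ in the power-of-two padded tree used to define $\lambda$). On each such node the change has magnitude $|u_{t^\star}-u'_{t^\star}|\le L$ (one may instead use $\le 2L$ and absorb the factor into the constant). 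Consequently the $\ell_1$ sensitivity of the vector-valued query $u\mapsto\bigl(\sum_{t\in I_v}u_t\bigr)_v$ under event-level adjacency is at most $L\log_2 T$.

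I would then invoke the standard Laplace-mechanism guarantee \citep{DworkRoth2014}: adding independent $\Lap(\lambda)$ noise to each coordinate of a query of $\ell_1$ sensitivity $\Delta_1$ yields $(\Delta_1/\lambda)$-differential privacy. Since each $Z_v$ is exactly this noiseless count plus an independent $\Lap(\lambda)$ draw with $\lambda = L\log_2 T/\varepsilon$, and $\Delta_1\le L\log_2 T$, the vector $(Z_v)_v$ is $\varepsilon$-DP. Combining with the post-processing reduction of the first paragraph gives the claim: for every event-level-adjacent $u,u'$ and every output set $\mathcal O$, $\Pr[\mathcal M(u)\in\mathcal O]\le e^{\varepsilon}\Pr[\mathcal M(u')\in\mathcal O]$.

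The main thing to get right --- and essentially the only obstacle --- is the bookkeeping in the sensitivity step: counting exactly how many tree nodes' intervals contain a fixed leaf (which depends on the convention for padding $T$ to a power of two, and is what pins down the $\log_2 T$ factor in $\lambda$), and fixing the convention for event-level adjacency (whether a single update may change by $L$ or by $2L$), which only affects the constant relating $\varepsilon$ to $\lambda$. Everything else is the textbook Laplace-mechanism-plus-post-processing argument.
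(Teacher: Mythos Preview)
Your proposal is correct and follows exactly the standard argument. The paper does not actually prove \cref{lem:tree-dp} (it is cited as a known result from prior work), but your approach---bounding the $\ell_1$ sensitivity of the node-count vector $(S_v)_v$ by $L\log_2 T$ and then applying the Laplace mechanism plus post-processing---is precisely the argument the paper uses to prove the more general \cref{lem:tree-unit-dp}, specialized to the case where each unit contributes at a single time step.
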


Note that this analysis holds even under adaptivity, where future updates
may depend on prior releases of the mechanism.

\medskip

For our applications, however, the sensitive units are not individual updates:
each unit may contribute to multiple time steps, and we
explicitly allow portions of updates to be non-sensitive.  
We assume that each sensitive unit has bounded total contribution in
$\ell_1$-norm, possibly chosen adaptively as a function of prior releases.

\begin{lemma}[Tree mechanism with multiple updates per sensitive unit]
\label{lem:tree-unit-dp}
Consider a stream $(u_t)_{t=1}^T$ decomposed as $u_t = \sum_i u^{(i)}_t$,
where for each unit $i$ we have
\[
  \sum_{t=1}^T \bigl|u^{(i)}_t\bigr| \;\le\; L
\]
for some bound $L>0$.  
The contributions $u^{(i)}_t$ at time $t$ may depend on all past outputs of
the mechanism.  
Define unit-level adjacency by allowing the entire contribution of a single
unit $i$ (i.e., the sequence $(u^{(i)}_t)_t$) to be added or removed.

Let $S_v = \sum_{t\in I_v} u_t$ be the aggregate over the interval $I_v$ for
each node $v$ of the binary tree on $[T]$, and let the tree mechanism release
$Z_v = S_v + \eta_v$ with independent $\eta_v \sim \Lap(\lambda)$ for each $v$.

Then the $\ell_1$ sensitivity of the vector $(S_v)_v$ under this unit-level
adjacency is at most $L \log_2 T$.  In particular, if
\[
  \lambda \;\ge\; \frac{L \log_2 T}{\varepsilon},
\]
then the tree mechanism is $\varepsilon$-differentially private with respect
to unit-level adjacency.
\end{lemma}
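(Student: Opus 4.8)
The statement has two ingredients: an $\ell_1$-sensitivity bound for the vector of node aggregates $(S_v)_v$ under unit-level adjacency, and the conversion of that bound into an $\varepsilon$-DP guarantee for the tree mechanism, valid even under adaptivity. The plan is: first establish the sensitivity bound by a level-by-level counting argument; then note that in the non-adaptive case the released prefix estimates are a deterministic post-processing of $(Z_v)_v=(S_v+\eta_v)_v$, so the vector Laplace mechanism applies directly; and finally lift to the adaptive setting by reorganizing the mechanism as an adaptive composition of per-node Laplace releases, for which the key point is that the sensitivity bound holds \emph{uniformly over all transcripts}.

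\textbf{Step 1 (sensitivity).} Fix a sensitive unit $i$ and consider removing its entire contribution $(u^{(i)}_t)_t$. For a node $v$ with interval $I_v$, the aggregate $S_v$ changes by $c^{(i)}_v := \sum_{t\in I_v} u^{(i)}_t$, so $|c^{(i)}_v|\le \sum_{t\in I_v}|u^{(i)}_t|$. Summing over all nodes and swapping the order of summation,
\[
  \sum_v |c^{(i)}_v|
  \;\le\; \sum_v \sum_{t\in I_v} |u^{(i)}_t|
  \;=\; \sum_t |u^{(i)}_t|\cdot\bigl|\{v : t\in I_v\}\bigr|
  \;\le\; \Bigl(\sum_t |u^{(i)}_t|\Bigr)\log_2 T
  \;\le\; L\log_2 T,
\]
since each time index $t$ lies in exactly one node interval per level of the binary tree, there are $\log_2 T$ levels, and $\sum_t|u^{(i)}_t|\le L$ by hypothesis. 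Thus the $\ell_1$ sensitivity of $(S_v)_v$ is at most $L\log_2 T$; moreover the sharper fact $\sum_{v\text{ at level }\ell}|c^{(i)}_v|\le L$ holds at each level $\ell$, and — crucially — for every realization of the update stream, because the constraint $\sum_t|u^{(i)}_t|\le L$ is assumed to hold along every execution.

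\textbf{Steps 2--3 (DP, then adaptivity).} If the decomposition $u_t=\sum_i u^{(i)}_t$ is fixed in advance, then $(S_v)_v$ is a deterministic function of the collection of units, and releasing $(Z_v)_v=(S_v)_v+(\eta_v)_v$ with independent $\eta_v\sim\Lap(\lambda)$ is $\varepsilon$-DP by the vector Laplace mechanism as soon as $\lambda\ge (L\log_2 T)/\varepsilon$; since each $\tilde S_t$ is a fixed disjoint union of $O(\log T)$ node intervals, the outputs $(\tilde S_t)_t$ are post-processings and inherit $\varepsilon$-DP. To handle adaptive contributions, reorganize the mechanism as a sequence of releases in order of node-completion time: when node $v$ becomes fully determined (at time $\max I_v$) output $Z_v=S_v+\eta_v$, and observe that every $\tilde S_t$ is a post-processing of the releases available by time $t$. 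This exhibits the mechanism as an adaptive composition of per-node Laplace mechanisms; for a neighboring pair differing in unit $i$ the per-node query shift at $v$ is $c^{(i)}_v$, and by Step 1 the cumulative shift is deterministically at most $L\log_2 T$ along every execution, the per-level budget $L$ being uniform over transcripts. Adaptive composition of Laplace mechanisms (equivalently, the standard adaptive continual-observation analysis of the binary tree mechanism underlying Lemma~\ref{lem:tree-dp} and \citep{ChanShiSong2011,DworkNaorPitassiRothblumYekhanin2010,DworkRoth2014}) then yields $\varepsilon$-DP with respect to unit-level adjacency.

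\textbf{Main obstacle.} The sensitivity arithmetic and the Laplace step are routine; the delicate point is the adaptive case. Under unit-level adjacency the two coupled executions produce different transcripts, so the adversary's later contributions — and hence the node sums themselves — diverge, and one cannot simply invoke the static sensitivity bound on a fixed function. The resolution is precisely the node-completion-order decomposition combined with the fact that the per-level sensitivity bound $\sum_{v\text{ at level }\ell}|c^{(i)}_v|\le L$ holds uniformly over all executions, which is what allows the adaptive composition to close at total budget $L\log_2 T$; one also needs to verify that with this reorganization each $\tilde S_t$ depends only on already-completed node releases, so no extra information about the units leaks through the interaction. Getting this uniform-sensitivity adaptive-composition argument stated cleanly (rather than the data-dependent per-node view) is the technical heart of the proof.
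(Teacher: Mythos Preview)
Your proof is correct and follows essentially the same approach as the paper: the sensitivity calculation (swap the order of summation and use that each time index lies in at most $\log_2 T$ node intervals) is identical, and the Laplace-mechanism-plus-post-processing conclusion is the same. The paper's proof is terser and does not explicitly unpack the adaptive case, implicitly deferring to the standard adaptive continual-observation analysis in the cited references; your node-completion-order decomposition together with the observation that the $\ell_1$ budget $\sum_v|c^{(i)}_v|\le L\log_2 T$ holds uniformly over transcripts is a correct way to make that step explicit, but it is an elaboration of the same argument rather than a different route.
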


\begin{proof}
Fix a unit $i$, and consider two streams that differ only by the presence or
absence of this unit’s contributions $(u^{(i)}_t)_t$.  
For each node $v$ in the binary tree, let $S_v^{(i)} = \sum_{t\in I_v} u^{(i)}_t$
be the contribution of unit $i$ to the interval sum at $v$.  
The change in $S_v$ under the addition or removal of unit $i$ is exactly
$S_v^{(i)}$, so the $\ell_1$ sensitivity of the vector $(S_v)_v$ is
\[
  \sum_v \bigl|S_v^{(i)}\bigr|.
\]
Each time step $t$ belongs to at most $\log_2 T$ node intervals $I_v$ in the
binary tree, and $u^{(i)}_t$ contributes to $S_v^{(i)}$ precisely for those
nodes.  Therefore,
\[
  \sum_v \bigl|S_v^{(i)}\bigr|
  \;\le\;
  \sum_{t=1}^T |u^{(i)}_t| \cdot \bigl(\#\{v : t \in I_v\}\bigr)
  \;\le\;
  L \log_2 T.
\]
Thus the $\ell_1$ sensitivity of $(S_v)_v$ under unit-level adjacency is at
most $L \log_2 T$.  By the standard Laplace mechanism analysis, adding
independent $\Lap(\lambda)$ noise with
$\lambda \ge L \log_2 T / \varepsilon$ to each coordinate $S_v$ yields
$\varepsilon$-differential privacy for the vector $(Z_v)_v$.  Finally, the
prefix-sum outputs $(\tilde S_t)_t$ are deterministic functions of the
noised node values $(Z_v)_v$, and hence inherit $\varepsilon$-DP by
post-processing.
\end{proof}

\section{Lemmas}

\begin{lemma}[Bound on the number of rate halving steps in \cref{alg:robust-card-adjustable}]
\label{lem:halving-steps}
Let $T\in\mathbb{N}$ and let $(R_u)_{u=1}^T$ be i.i.d.\ random variables drawn from
$\mathrm{Unif}[0,1]$, representing the fixed sampling seeds of all units that
ever appear in the stream.
Let $R_{(m)}$ denote the $m$-th order statistic (the $m$-th smallest value)
among $\{R_u\}_{u=1}^T$.

Fix a sample budget $k$ and define $m:=k/2$.
Then for any $\delta\in(0,1)$, if
\[
k \;\ge\; 8 \log\!\frac{1}{\delta},
\]
it holds with probability at least $1-\delta$ that
\[
R_{(k/2)} \;\le\; \frac{k}{T}.
\]
Equivalently, with probability at least $1-\delta$, for every integer
\[
j \;\ge\; \left\lceil \log_2\!\left(\frac{T}{k}\right) \right\rceil,
\]
we have
\[
\#\{u : R_u \le 2^{-j}\} \;\ge\; \frac{k}{2}.
\]
In particular, any algorithm that repeatedly halves its sampling rate
$p=2^{-j}$ and triggers a rate decrease whenever the maintained sample size
exceeds $k$ will perform at most
\[
O\!\left(\log\frac{T}{k}\right)
\]
rate halving steps with probability at least $1-\delta$.
\end{lemma}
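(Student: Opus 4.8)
The plan is to reduce the lemma to a single binomial tail bound. Assume $k\le T$ (the regime of interest; otherwise the maintained sample never exceeds $T<k$ and the algorithm performs no halvings at all), and set $p:=k/T$. Let $X:=\#\{u\in[T]: R_u\le p\}=\sum_{u=1}^{T}\mathbf{1}\{R_u\le p\}$. Since the $R_u$ are i.i.d.\ $\mathrm{Unif}[0,1]$, we have $X\sim\mathrm{Bin}(T,p)$ with $\E[X]=Tp=k$, and the event $\{R_{(k/2)}\le p\}$ is \emph{exactly} the event $\{X\ge k/2\}$. So it suffices to show $\Pr[X<k/2]\le\delta$. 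First I would apply the multiplicative Chernoff lower-tail bound $\Pr[X\le(1-\gamma)\E X]\le\exp(-\gamma^{2}\E X/2)$ with $\gamma=1/2$ and $\E X=k$, which gives $\Pr[X<k/2]\le\exp(-k/8)\le\delta$ under the hypothesis $k\ge 8\log(1/\delta)$. This establishes the first displayed claim.

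For the equivalent dyadic reformulation I would use only monotonicity of $p\mapsto\#\{u:R_u\le p\}$: for every integer $j\ge\lceil\log_2(T/k)\rceil$ we have $2^{-j}\le k/T$, hence $\#\{u:R_u\le 2^{-j}\}\le\#\{u:R_u\le k/T\}$, and pairing this with the first claim yields the count bound in the form used to relate the algorithm's dyadic rate grid to the seed order statistics. The algorithmic corollary then follows from a symmetric Chernoff bound at the critical scale. In the entry-seed view of \cref{alg:robust-card-adjustable} (as in \cref{lem:robust-sampling-multirate}), the maintained sample at sampling rate $2^{-j}$ and time $t$ is $\{u\text{ active at }t: R_u\le 2^{-j}\}$, whose size is at most $\#\{u\in U:R_u\le 2^{-j}\}$ because the active units at any time form a subset of the at most $T$ units ever created. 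Taking $j^{\ast}:=\lceil\log_2(2T/k)\rceil=\log_2(T/k)+O(1)$, the count $\#\{u\in U:R_u\le 2^{-j^{\ast}}\}$ is stochastically dominated by $\mathrm{Bin}(T,k/(2T))$ of mean $k/2$, so by the Chernoff upper-tail bound it exceeds $k$ with probability at most $\exp(-k/6)$, which is $\le\delta$ under the same hypothesis. By monotonicity in $j$, on this event the sample size stays $\le k$ at \emph{every} rate $2^{-j}$ with $j\ge j^{\ast}$, so the halving loop of \cref{alg:robust-card-adjustable} never triggers once the rate has dropped to $2^{-j^{\ast}}$; hence the number of halvings is at most $j^{\ast}-\log_2(1/p_0)=O(\log(T/k))$ for a constant initial rate $p_0$, and rescaling $\delta$ by a constant absorbs the two high-probability events.

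The only step beyond the plain Chernoff computation — and the one I expect to require the most care in writing up — is the reduction from ``sample size at time $t$'' to the static order-statistic count $\#\{u\in U:R_u\le p\}$: because which units are active and which rate $p_t$ is in force are both chosen adaptively, and the algorithm additionally subsamples on each halving, one must invoke the entry-seed formulation so that, against any adversary, the maintained sample is always a \emph{subset} of $\{u:R_u\le p_t\}$ over the \emph{fixed} seed set $(R_u)$. Once that is in place, monotonicity of the count in $p$ is exactly what lets a single Chernoff bound at one scale $p\asymp k/T$ control all smaller rates simultaneously, so no union bound over the $O(\log T)$ dyadic scales is needed.
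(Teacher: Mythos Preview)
The paper states this lemma without proof, so there is nothing to compare your approach against. Your Chernoff reductions for the first displayed claim and for the algorithmic corollary are both correct and are what the paper actually uses: the event $\{R_{(k/2)}\le k/T\}$ is exactly $\{X\ge k/2\}$ for $X\sim\mathrm{Bin}(T,k/T)$, and the lower-tail bound with $\gamma=1/2$ gives $\Pr[X<k/2]\le e^{-k/8}\le\delta$; for the corollary, at $j^\ast=\lceil\log_2(2T/k)\rceil$ an upper-tail bound on $\mathrm{Bin}(T,k/(2T))$ shows the static count $\#\{u:R_u\le 2^{-j^\ast}\}\le k$ with the stated probability, after which your subset observation (the maintained sample at rate $p$ is always contained in the fixed set $\{u:R_u\le p\}$, regardless of the adversary's adaptive choices) rules out any further halving. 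That subset reduction is indeed the nontrivial step, and you are right that it lets a single Chernoff at one scale control all smaller scales without a union bound.

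The gap is in your treatment of the ``equivalently'' clause. Your monotonicity step correctly gives $\#\{u:R_u\le 2^{-j}\}\le\#\{u:R_u\le k/T\}$ for $j\ge\lceil\log_2(T/k)\rceil$, but pairing an \emph{upper} bound on the left side with a \emph{lower} bound on the right ($\ge k/2$, from the first claim) yields nothing about a lower bound on the left---and in fact the clause as printed cannot hold: take $j$ large enough that $2^{-j}<\min_u R_u$ and the count is zero. The inequality on $j$ is almost certainly a typo in the lemma statement (it should read $j\le\lfloor\log_2(T/k)\rfloor$, i.e.\ $2^{-j}\ge k/T$, in which case monotonicity in the other direction gives $\#\{u:R_u\le 2^{-j}\}\ge\#\{u:R_u\le k/T\}\ge k/2$ immediately from the first part). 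Rather than hedge with ``the count bound in the form used,'' you should flag the misprint explicitly and prove the corrected version; your corollary argument does not rely on this clause in any case.
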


\end{document}